\newcommand\wordcount{
    \immediate\write18{texcount -sum -1 \jobname.tex > 'count.txt'}
\input{count.txt}words}
\newtheorem{theorem}{Theorem}[section]
\newtheorem{lemma}{Lemma}[section]
\newtheorem{remark}{Remark}[section]
\newtheorem{prop}{Proposition}
\newtheorem{assn}{C}
\newcommand\DoToC{%
	\startcontents
	\printcontents{}{2}{\textbf{Contents}\vskip3pt\hrule\vskip5pt}
	\vskip3pt\hrule\vskip5pt
}
\def\Var{{\rm Var}\,}
\def\AVar{{\rm AVar}\,}
\def\E{{\rm E}\,}
\def\arg{{\rm arg}\,}
\def\Cov{{\rm Cov}\,}
\def\N{{\rm N}\,}
\newcommand{\X}{\mathbf{X}}
\newcommand{\Y}{\mathbf{Y}}
\newcommand{\Z}{\mathbf{Z}}
\newcommand{\I}{\mathbf{I}}
\newcommand{\V}{\textnormal{V}}
\newcommand{\W}{\mathbf{W}}
\newcommand{\M}{\mathbf{M}}
\newcommand{\z}{\mathbf{z}}
\newcommand{\mutd}{\bar{\mu}^1(d)}
\newcommand{\mucd}{\bar{\mu}^0(d)}
\newcommand{\muit}{\mu_i(\mathbf{1};d)}
\newcommand{\muic}{\mu_i(\mathbf{0};d)}
\newcommand{\mujt}{\mu_j(\mathbf{1};d)}
\newcommand{\mujc}{\mu_j(\mathbf{0};d)}
\newcommand{\muitt}{\mu_i(\mathbf{1_i},\mathbf{1_j};d)}
\newcommand{\muicc}{\mu_i(\mathbf{0_i},\mathbf{0_j};d)}
\newcommand{\muitc}{\mu_i(\mathbf{1_i},\mathbf{0_j};d)}
\newcommand{\muict}{\mu_i(\mathbf{0_i},\mathbf{1_j};d)}
\newcommand{\mujtt}{\mu_j(\mathbf{1_i},\mathbf{1_j};d)}
\newcommand{\sumij}{\sum_{i;j\in\mathcal{B}(i;d),j\not =i}}
\newcommand{\muia}{\mu_i(\mathbf{a};d)}
\newcommand{\mujb}{\mu_j(\mathbf{b};d)}
\newcommand{\muiab}{\mu_i(\mathbf{a_i},\mathbf{b_j};d)}
\newcommand{\mujab}{\mu_j(\mathbf{a_i},\mathbf{b_j};d)}
\newcommand{\node}{\mathcal{S}}
\DeclarePairedDelimiter\floor{\lfloor}{\rfloor}
\newcommand{\HT}{\textnormal{HT}}
\newcommand{\HA}{\textnormal{HA}}
\newcommand{\IPW}{\textnormal{IPW}}
\newcommand{\PD}{\textnormal{PD}}
\newcommand{\HAC}{\textnormal{HAC}}
\newcommand{\VhatHACd}{\widehat{\textnormal{V}}_{\HAC}(d)}
\newcommand{\SAH}{\textnormal{SAH}}
\newcommand{\SM}{\textnormal{sm}}
\newcommand{\obs}{\textnormal{obs}}
\newcommand{\Taylor}{\textnormal{L}}
\newcommand{\C}{O_i}
\newcommand{\Cj}{O_j}
\newtheorem{definition}{Definition}
\begin{document}
\author{Ye Wang,
Cyrus Samii, 
Haoge Chang, and
P. M. Aronow\thanks{  
Wang is Assistant Professor, Department of Political Science, University of North Carolina, Chapel Hill, USA (Email: yewang@unc.edu). 
Samii (contact author) is Associate Professor, Department of Politics, New York University, New York, USA (Email: cds2083@nyu.edu). 
Chang is Assistant Professor, Department of Economics, Columbia University, New York, USA (Email: hc3615@columbia.edu).
Aronow is Associate Professor, Departments of Statistics and Data Science, Political Science, Biostatistics and Economics, Yale University, New Haven, USA (Email: p.aronow@yale.edu).
For their comments and suggestions, we thank Kirill Borusyak, Stephen Cole, Alexander Demin, Naoki Egami, Jiawei Fu, Michael Hudgens, Peter Hull, Molly Roberts, Fredrik S\"avje, Davide Viviano, and seminar participants at Harris School at University of Chicago, New York University Abu Dhabi, Princeton University, Rochester University, Texas A\&M., UNC, and UCSD}.}

\title{Design-Based Inference for Spatial Experiments under Unknown Interference}
\maketitle

\thispagestyle{empty}
\setcounter{page}{0}

\clearpage

\begin{center}
{\Large Design-Based Inference for Spatial Experiments under Unknown Interference}
\end{center}
\vspace{2em}
\begin{abstract} 
We consider design-based causal inference for spatial experiments in which treatments may have effects that bleed out and feed back in complex ways. Such spatial spillover effects violate the standard ``no interference'' assumption for standard causal inference methods. The complexity of spatial spillover effects also raises the risk of misspecification and bias in model-based analyses. We offer an approach for robust inference in such settings without having to specify a parametric outcome model. We define a spatial ``average marginalized effect'' (AME) that characterizes how, in expectation, units of observation that are a specified distance from an intervention location are affected by treatment at that location, averaging over effects emanating from other intervention nodes. We show that randomization is sufficient for non-parametric identification of the AME even if the nature of interference is unknown. Under mild restrictions on the extent of interference, we establish asymptotic distributions of estimators and provide methods for both sample-theoretic and randomization-based inference. We show conditions under which the AME recovers a structural effect. We illustrate our approach with a simulation study.  Then we re-analyze a randomized field experiment and a quasi-experiment on forest conservation, showing how our approach offers robust inference on policy-relevant spillover effects.
\vspace{1em}\\
{\bf Keywords}: causal inference, design-based inference, experiments, interference.
\end{abstract}

\clearpage
\doublespace

\section{Introduction}
Consider a spatial experiment where an intervention is randomly assigned to specific locations in a geographic space. 
Then, we observe how outcomes are distributed over this geography. Figure~\ref{fig:illustrations} illustrates the generic structure of such experiments. 
The left panel presents a hypothetical point-intervention experiment. 
An experimental design could treat half of such points to receive the intervention (gray shaded points), with the rest of the points remaining in a control condition without intervention (unshaded points). The shading in the background raster indicates outcome values. 
The right panel illustrates a similar experiment, but with interventions assigned to polygons instead of points. 

A recent example of a spatial experiment is from \cite{Jayachandran267}, who study a forest conservation intervention by comparing forest cover outcomes in villages that hosted the intervention to those that did not.  
(In this case, the villages are the polygons.) 
For this application, and others like it, a major concern is the possibility of detrimental spillover effects: while the intervention may reduce deforestation within a village in which it is applied, it may simply push the deforestation into other nearby areas. 
The dashed lines in Figure~\ref{fig:illustrations} display possible zones into which effects may bleed out in our hypothetical examples. 
The extent and manner in which such spillover effects occur could be shaped by how the intervention is distributed over the entire space.
The spillover effects might be smaller when there is a high saturation of treatment rather than a low saturation.
The situation at hand is one in which the potential outcomes at a point in space depend not only on that point's treatment status but rather on how treatments are distributed elsewhere---what is referred to as ``interference'' in the causal inference literature (\citealp[p. 19]{cox58}).

\begin{figure}\centering 
\includegraphics[width=.75\textwidth]{./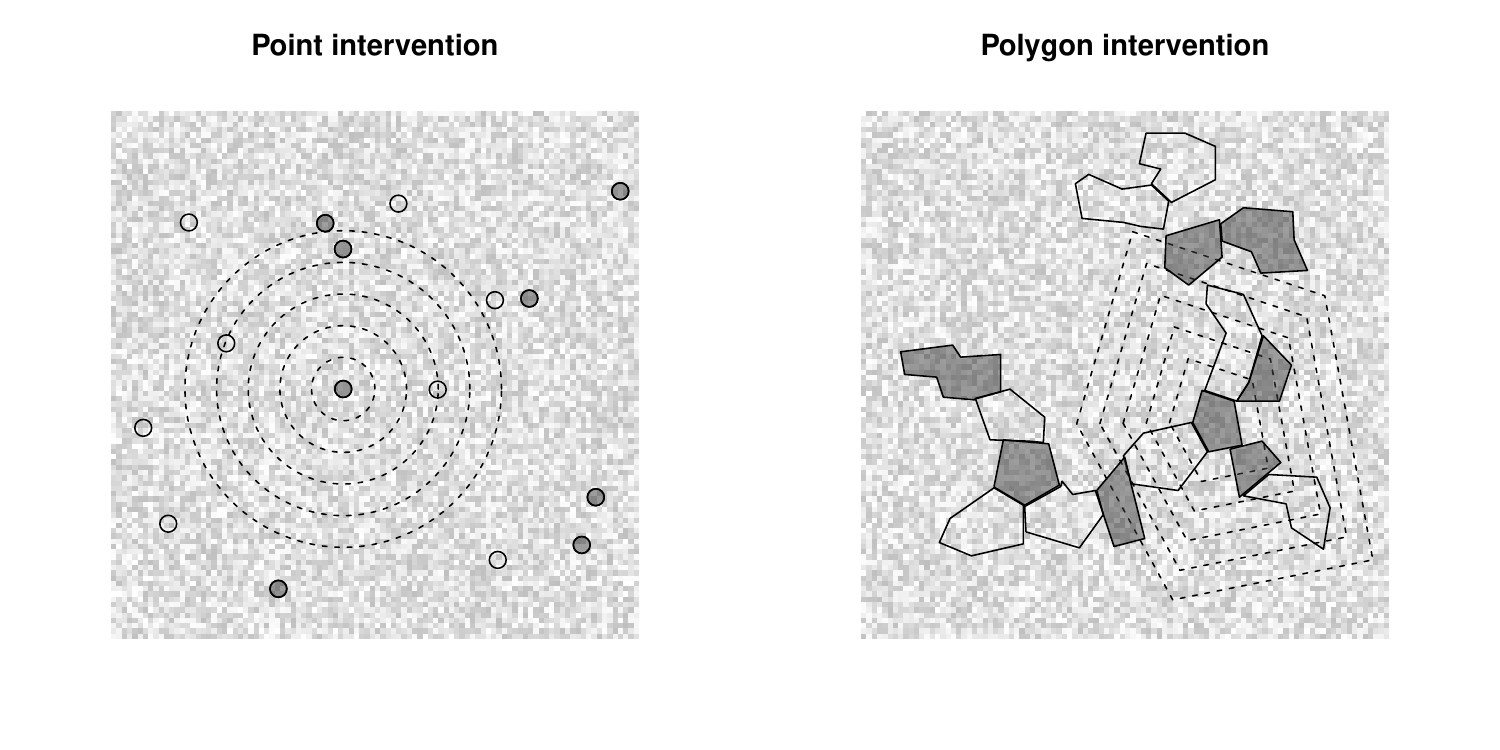}
\caption{Illustrations of hypothetical spatial experiments in which interventions are applied to points (left) or polygons (right). The background raster captures the geographic outcome data. Effects may bleed out in space, as illustrated by the concentric dashed lines.}
\label{fig:illustrations}
\end{figure}  

As \cite{reich2021review} discuss, in the current literature on spatial causal inference, established ways to approach spillover effects expand the set of potential outcomes to account for different types of direct and indirect exposure to a treatment.
As an alternative to parametric model-based approaches, these potential outcomes-based approaches are design-based, meaning that they aim for robustness by leveraging what is known about the experimental design while remaining agnostic about features of the outcome distributions.\footnote{See \citet{imbens_rubin15} for a general argument and \cite{gibbons2012mostly} for a discussion particular to spatial analysis.} 
One such approach is the ``exposure mapping'' method of \cite{aronow_samii2017_interference}, in which one defines discrete potential outcomes for points in the geography depending on whether the points are within a treated polygon and whether the points are within some set distance from any other treated points. 
In circumstances where the geography is partitioned into distinct areas within which spillover effects are contained, an alternative and somewhat more agnostic approach focuses on marginalized ``direct'' and ``indirect'' effects as defined by  \cite{hudgens_halloran08} or \cite{zigler-papadogeorgou2018-bipartite}.

The challenge in applied settings is that the appropriate exposure mapping may be unknown and spillovers may not be neatly contained within discrete areas. 
Spillover effects may bleed out in more complex ways and interact with each other.
These challenges apply to the interference problem in non-spatial settings as well, e.g., in settings of network interference, leading to recent theoretical work to consider what kinds of inference might be achievable under ``unknown interference'' \citep{savje2021average, li2022random, hu2022average}.
We bring these ideas to the spatial context in order to approach problems such as the concern about detrimental spillovers in the forest conservation application.
We show that when interference is present, rather than estimating the ATE, contrasts between treated and control areas capture a more nuanced, but nonetheless, policy-relevant quantity that we call the ``average marginalized effect'' (AME).
We can define AMEs at different distance intervals from where the intervention is applied, thereby capturing how effects transmit in space in potentially non-monotonic ways.
The AME is similar to the average direct effect of \citet{halloran1995causal} (see also \citet{hudgens_halloran08}, \citet{vanderweele2011effect}, \citet{savje2021average}, \citet{li2022random}, and \citet{hu2022average}), except that it is indexed by distance for application to the spatial case.
The AME measures how, on average, outcomes within the specified distance interval from an intervention node are affected by activating a treatment at that node, taking into account ambient effects emanating from treatments at other intervention nodes.  
Taking into account the ambient effects also allows us to capture the consequences of more or less uniform, and more or less dense, distributions of treatments.
There is a direct mapping from the AME to effects that are assumed by parametric models of spatial effects: when effects emanating from different intervention nodes are additive, the AME recovers the average of these additive effects.  
If spatial effects are not simply additive but exhibit complex interactions or feedback, the AME still yields an interpretable and policy-relevant quantity. 
This interpretation is robust and does not depend on, e.g., a parametric spatial lag, autoregressive, or weighting structure \citep{golgher2016interpret}.

We also develop inferential methods to account for the dependencies that interference creates.  
We work with a Horvitz-Thompson estimator and a Hajek estimator for the AME. 
We show that these estimators are consistent and asymptotically normal under weak restrictions on the degree of interdependence induced by interference. 
We further prove that the commonly-used spatial heteroscedasticity and autocorrelation consistent (HAC) variance estimator of \cite{conley99_spatial} provides conservative estimates for the true variance of these estimators under conditions that are often satisfied in practice.

Our analysis is related to a few streams of current methodological research on spatial causal inference and interference. 
First, our approach draws most directly on recent design-based analyses of causal effects under interference that consider estimands marginalized over the randomization distribution, as in \cite{hudgens_halloran08}, \cite{savje2021average}, \cite{papadogeorgou2020causal},  \cite{li2022random}, and \citet{hu2022average}.  
As in these approaches, our inference does not require that we specify the functional form of all interference-induced potential outcomes precisely (i.e., the ``exposure mapping'' of \cite{aronow_samii2017_interference}). It thus skirts the issue of non-overlap caused by a potentially high-dimensional set of potential outcomes \citep{leung2022rate}.
Second, our analysis is related to recent work on non-parametric estimation of spatial effects, including work on ``bipartite causal inference'' by \cite{zigler-papadogeorgou2018-bipartite} and on cluster-randomized designs by \cite{leung2022rate}. 
These works focus on cases where points of intervention are far enough apart to yield disjoint clusters that interfere with each other minimally. 
Such designs are appealing, but they are not always feasible.
Similar to $m$-dependence for a time series, we assume hard limits on the extent of interference, but we do not assume that the set of units can be partitioned into a set of disjoint clusters. 
Third, our inferential results rely on the contributions of \citet{ogburn2017causal}. 
We also draw connections to inferential results in the spatial regression literature \citep{arbia2006spatial, jenish2016spatial, kelejian_piras2017}. 
We justify regression estimators on spatial data from the design-based perspective and provide causal interpretations for the coefficients. 
We clarify the connection to \cite{conley99_spatial}'s spatial HAC variance estimator. 

We begin by developing the formal inferential setting and main theoretical results, using a toy example to illustrate concepts.  
We then consider extensions and refinements. 
 We provide simulation evidence of the performance of our proposed estimators and then turn to the forest conservation application that we used above to motivate the analysis.

\section{Setting}\label{Section:setting}


Suppose a set of intervention nodes $\node = \{1,...,N\}$\label{notation:node}. Each node $i \in\node$ can be either a point or a collection of points (e.g., a polygon) that resides in a two-dimensional set $\mathcal{X}\label{notation:outcomeset}$ indexed by $x=(x_1,x_2)$ (e.g., latitude and longitude). An experimental design assigns a binary treatment $Z_i \in \{0,1\}$ to each intervention node. The ordered vector of experimental assignment variables is $\Z \equiv (Z_1,...,Z_N)\label{notation:randomtreatment}$, and the {\it ex post} realized assignment from the experiment is given by $\z \equiv (z_1,...,z_\N)\in \{0,1\}^N\label{notation:realizedtreatment}$.  The experimental design fixes the set of possible assignment vectors as well as a probability distribution over that set. Our analysis considers the case of Bernoulli randomization for each $Z_i$, i.e., (possibly weighted) coin flips to determine treatment status at each node.\footnote{We provide a discussion on complete randomization at the end of Section \ref{ame}.  } 
	

Potential outcomes at any point $x \in \mathcal{X}$ are defined for each value of $\z$, $(Y_x(\z))_{\z \in \{0,1\}^N}\label{notation:potentialoutcome}$. Given a realized treatment assignment $\z$, we observe the corresponding potential outcome at $x$:
\begin{equation}
Y_x = \sum_{\z \in \{0,1\}^N} Y_x(\z)I(\Z=\z)\label{notation:observedoutcome},
\end{equation}
where $I(\cdot)$ is the indicator function and $I(\Z=\z)$ takes on the value 1 if the random assignments $\Z$ realize to be $\z$, and 0 otherwise. Data for points in $\mathcal{X}$ may come in various formats including raster data or data on a discrete set of points in $\mathcal{X}$.  Let $\Y(\z) = (Y_x(\z))_{x\in \mathcal{X}}\label{notation:potentialoutcome_vec}$ denote the full set of potential outcomes when $\Z= \z$ and $\Y = (Y_x)_{x\in \mathcal{X}}\label{notation:observedoutcome_vec}$ denote the full set of realized outcomes.

We map the full set of potential outcomes $\Y(\z)$ for all points in the outcome space $\mathcal{X}$ back to the intervention nodes  by defining the ``circle average'' function. For $i\in\mathcal{S}$, we define

\begin{equation}\label{def:circleaverage}
\mu_i(\Y(\z);\Omega_d) = \frac{\int_{x:d_i(x) \in \Omega_d} Y_x(\z)\text{d}\zeta}{\int_{x:d_i(x) \in \Omega_d} \text{d}\zeta}.
\end{equation}

In the expression above, $d_i(x)$ measures the distance between point $x$ and intervention node $i$. When $i$ is a point located at $x(i)$, $d_i(x) = \gamma\left(x(i), x\right)$, where $\gamma(\cdot,\cdot)$ is a well-defined metric (e.g., Euclidean, geodesic, or a least-cost distance) that satisfies triangular inequality. If $i$ is a collection of points, then $d_i(x) = \min_{x' \in i} ||x' - x||$, the minimal distance between $x$ and points belonging to $i$. $\Omega_d$ is a set of distance values and $\zeta$ is a suitable measure on $\mathcal{X}$. Therefore, $\mu_i(\Y(\z);\Omega_d) $ is the average outcome across points whose distance to $i$ falls in $\Omega_d$. 

If the points in $\mathcal{X}$ are dense and spaced evenly, then $\Omega_d$ could be a singleton, $\Omega_d=\{d\}$. The circle average amounts to taking the average across points along the edge of a circle of radius $d$ around $i$. If the points are spaced such that there are few or no points precisely at the edge of the circle, $\Omega_d$ could be a \textit{donut} where $\Omega_d= [d - \kappa , d]$ (with $\kappa$ a user-chosen constant dictating the donut's thickness), or a \textit{disk} where $\Omega_d=[0,d]$. By considering a collection of sets across different d values, $\{\Omega_d\}_{d \in \mathcal{D}}$, we will be able to examine how the circle average's value varies over the geography.
When it does not cause confusion, we write $\mu_i(\Y(\z);\Omega_d)$ simply as $\mu_i(\Y(\z);d)$. Similarly, the realized circle average for intervention node $i$ at $d$ is

\begin{equation}\label{def:obs_circleaverage}
\mu_i(\Y;d) = \sum_{\z \in \{0,1\}^N} \mu_i(\Y(\z);d)I(\Z=\z).
\end{equation}

This representation allows us to see how an experiment is a process of sampling potential circle averages for intervention nodes, and therefore allows us to apply sample theoretic results in our analysis below.

The left plot in Figure \ref{fig:null-raster} illustrates a toy example for a point intervention ($N=4$) and raster outcome data.  The plot shows a ``null raster'' for which none of the intervention nodes has been assigned to treatment and so $\z=(0,0,0,0)$. The outcomes are $Y_x(0,0,0,0)$ for all $x$ in the space.  As we can see, outcomes are defined for any point $x$ in the space, although outcomes are constant within raster cells. This is a feature of the raster data.  Other types of data may exhibit finer levels of granularity --- e.g., data produced from kriging interpolation that varies smoothly in space.  We take these outcome data, and any coarsening or smoothing operations that they incorporate, as fixed.  For our design-based inference, the only source of stochastic variation is from $\Z$. For data that are smoothed using kriging, one could tune smoothing parameters on auxiliary data so that they are fixed with respect to $\Z$. White circles around the intervention nodes demonstrate one possible way to construct the circle average. We use the Euclidean distance and take averages across all the raster cells passed by the edge of the circles. Note that we do not prohibit circles around different nodes to intersect with each other.

\begin{figure}\centering
\includegraphics[width=0.33\textwidth]{./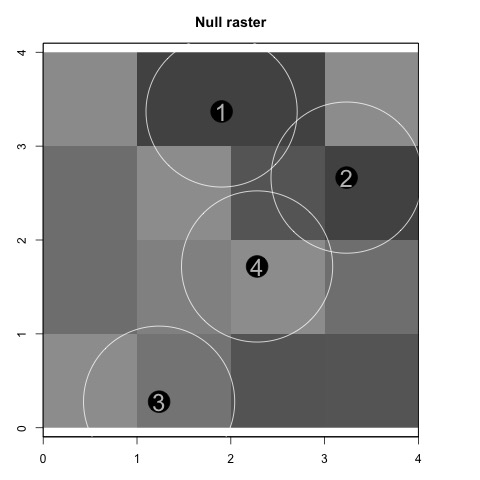}
\includegraphics[width=0.33\textwidth]{./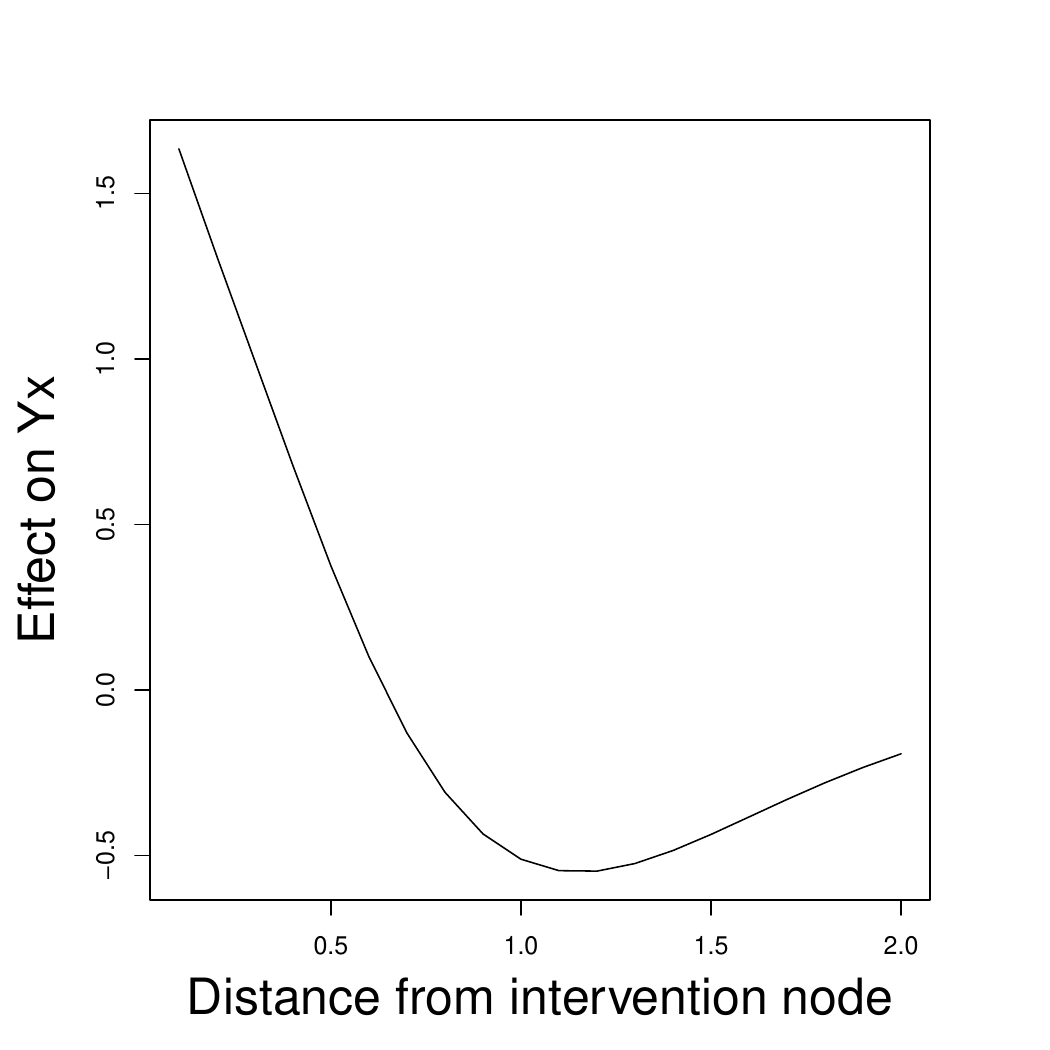}
\caption{Left: Illustration of a ``null raster,'' with $N=4$ intervention nodes (points), none of which are assigned to treatment.  Raster cells are colored according to outcome levels. White circles around the nodes are where circle averages are computed. Lighter colors represent larger outcome values. Right: a possible effect function that is non-monotonic in distance.}
\label{fig:null-raster}
\end{figure}

Spatial effects can exhibit considerable complexity. For the sake of illustration, our toy example supposes that treatments transmit effects that are non-monotonic in distance and that effects from different intervention nodes accumulate in an additive manner.  The right plot in Figure \ref{fig:null-raster} illustrates such an effect function. Then, the net result would depend on how treatments are distributed over the intervention points.  Figure \ref{fig:po-raster} illustrates how outcomes would be affected over different allocations of the treatment given that effects take the form as in Figure \ref{fig:null-raster}.  In the analysis below, we do not assume that effects are additive or homogeneous in form---this is done here merely to provide a simple illustration.

\begin{figure}[h]\centering
\includegraphics[width=.8\textwidth]{./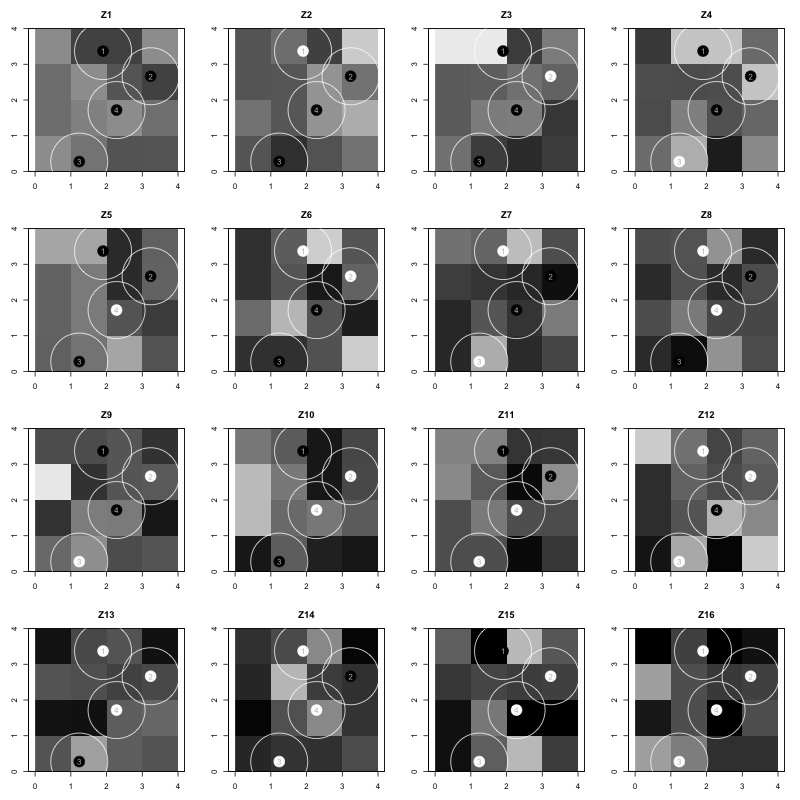}
\caption{Illustration of how outcomes are affected given different treatment allocations given the effect function in Figure \ref{fig:null-raster}.  Treated intervention points are white, while non-treated intervention points are black.}
\label{fig:po-raster}
\end{figure}

\section{Defining a marginal spatial effect}\label{ame}
As the potential outcome notation for $Y_x(\z)$ indicates, the outcome at any point may depend on the full vector of realized treatment assignments $\z$.  Similarly, the potential outcome notation for the circle average, $\mu_i(\Y(\z);d)$, shows that the realized circle average for node $i$ may depend on treatment assignments for nodes other than $i$.  As such, the circle averages are potentially subject to causal ``interference.''  

We now define a spatial effect that we call the ``average marginalized effect'' (AME). The AME is a marginal effect that accounts for interference.  It can be defined for any distance value $d$ and enables researchers to examine how the impacts of the intervention nodes on the outcome of their neighboring points vary in space.  The usual definition of a unit-level treatment effect takes the difference between a unit's potential outcomes under one treatment condition versus under another treatment condition. A unit-level marginal effect is different because it takes the difference between the {\it average} of a unit's potential outcomes over a set of potential outcomes versus the average over another set.  We apply this idea to the spatial setting.  In doing so, we consider effects that may bleed out in ways that are not necessarily contained within pre-defined strata, as in \cite{hudgens_halloran08}, or summarized by a simple statistic, as in \cite{aronow_samii2017_interference}. 

To define the spatial AME, let us first rewrite the potential outcome at point $x$ as $Y_x(z_i, \mathbf{z}_{-i})$, where $\mathbf{z}_{-i}$ is a vector equaling $\mathbf{z}$ except that the value for intervention node $i$ is omitted.  This allows us to pay special attention to how variation in treatment at node $i$ relates to potential outcomes at point $x$, given the variation in treatment values in $\mathbf{z}_{-i}$.   We can marginalize over variation in $\mathbf{z}_{-i}$ to define an ``individualistic'' average of potential outcomes for point $x$, holding the treatment at intervention node $i$ to treatment value $z$:
\begin{equation}\label{notation:AME_at_x}
Y_{ix}(z_i; \eta) = \E_{\Z_{-i}}[Y_x(z_i, \Z_{-i})]=
\sum_{\mathbf{z}_{-i}\in\{0,1\}^{N-1}} Y_x(z_i, \mathbf{z}_{-i})\mathrm{Pr}(\mathbf{Z}_{-i}=\mathbf{z}_{-i}; \eta),
\end{equation}
where $\eta$ is an experimental design parameter that is an index for the distribution of $\mathbf{Z}$ (that is, the probability of treatment assignments).  This is the individualistic marginal potential outcome at point $x$ given that node $i$ is assigned to treatment condition $z$, marginalizing over possible assignments to other nodes.   We can use Figure \ref{fig:po-raster} to illustrate.  To construct $Y_{1x}(0; \eta)$, one would take a weighted average of the potential outcomes at point $x$ under assignments labeled in the figure as Z1, Z3, Z4, Z5, Z9, Z10, Z11, and Z15, where the weights would be proportional to the probability of each assignment.

We can define a similar marginal quantity at the level of the circle averages:
\begin{equation}\label{notation:AME_at_i}
\mu_{i}(z_i; d, \eta) = \E_{\Z_{-i}}[\mu_i(\Y(z_i, \mathbf{Z}_{-i}); d)] = \sum_{\mathbf{z}_{-i}\in \{0,1\}^{N-1}} \mu_i(\Y(z_i, \mathbf{z}_{-i}); d)\mathrm{Pr}(\mathbf{Z}_{-i}=\mathbf{z}_{-i};\eta),
\end{equation}
where we use $\Y(z_i, \mathbf{z}_{-i})$ to denote the vector of potential outcomes over points in $\mathcal{X}$ that obtain under treatment assignment $(z_i, \mathbf{z}_{-i})$. 
This is the potential circle average at distance $d$ around node $i$, given that $i$ is assigned to treatment condition $z_i$, marginalizing over possible assignments to other nodes.

We can now define an {\it individual marginalized effect} at point $x$ of intervening on node $i$, allowing other nodes to vary as they otherwise would under $\eta$:
\begin{equation}\label{notation:individual_ame_x}
\tau_{ix}(\eta) = Y_{ix}(1;\eta) - Y_{ix}(0; \eta).
\end{equation}

This defines the response at point $x$ of switching node $i$ from no treatment to active treatment, averaging over possible treatment assignments to nodes other than $i$. At the level of circle averages, we can define 
\begin{equation}\label{notation:individual_ame_i}
\tau_{i}(d; \eta) = \mu_{i}(1; d, \eta) - \mu_{i}(0; d, \eta),
\end{equation}
which is the average of individual responses for points along the circle at distance $d$ around node $i$.  Using Figure \ref{fig:po-raster} to illustrate, one would construct $\tau_{1}(d; \eta)$ by working with the $d$-radius circle averages around intervention node $1$, taking the difference between the mean of the circle averages under assignments Z2, Z6, Z7, Z8, Z12, Z13, Z14, and Z16 minus the mean of circle averages under assignments Z1, Z3, Z4, Z5, Z9, Z10, Z11, and Z15.

Finally, define the {\it average marginalized effect} (AME) for distance $d$ by taking the mean over the intervention nodes:

\begin{equation}\label{def:AME}
	\textnormal{AME}(d;\eta)= \frac{1}{N} \sum_{i=1}^N \tau_{i}(d; \eta)
\end{equation}

The interpretation of the AME for distance $d$ is the average effect of switching a node $i \in \node$ to treatment on points at distance $d$ from that node, marginalized over possible realizations of treatment statuses in other intervention nodes. The distribution of these possible realizations of treatment statuses depends on the experimental design. When $d = 0$, the AME captures the direct effect generated by the treatment at the location of intervention, in a way similar to the ``expected average treatment effect'' in \cite{savje2021average}. For $d > 0$, the AME resembles the ``average indirect causal effect'' in \cite{hu2022average} but is defined for specific distance values. In practice, researchers may select a series of distance values, $\{d_l\}_{l=1}^L$, based on the resolution of $\mathcal{X}$ and the potential magnitude of spillover effects. The resulting collection of AMEs demonstrates how effects vary over the distance from an intervention node.

Before ending this section, we note that our analysis focuses on experimental designs with Bernoulli assignment, in which case the possible assignments consists of the $2^N$ possible vectors that could be obtained from $N$ (possibly differentially weighted) coin flips.  This allows for a clean definition of causal effects \citep{savje2021average}. This is because Bernoulli assignment ensures that $(1, \mathbf{z}_{-i})$ and $(0, \mathbf{z}_{-i})$ each has positive probability of occurring.  In this case, the marginal quantities $Y_{ix}(1; \eta)$ and $Y_{ix}(0; \eta)$ are defined by marginalizing over the same sets of $\mathbf{z}_{-i}$ values, and the individualistic response has a clear {\it ceteris paribus} interpretation.  Things are different under completely randomized assignment, where a fixed number $N_1$ of nodes are assigned to treatment. Then, for $Y_{ix}(1; \eta)$, one marginalizes over assignments with $N_1-1$ units assigned to treatment, while for $Y_{ix}(0; \eta)$, one marginalizes over assignments with $N_1$ units assigned to treatments.   As $N$ grows, differences between AMEs in Bernoulli and complete random assignment typically become negligible when interference is local, as shown in \cite{savje2021average}.  

\section{Inferential assumptions}\label{Section:inference}

In this section, we lay out assumptions on the experimental design and potential outcomes, including restrictions on the extent of interference for the inferential results in Section \ref{est-inf}. Our asymptotic analysis considers a sequence of sets indexed by the sample size $N$. The set of intervention nodes is denoted as  $\node_N$ and the set of outcome points $\mathcal{X}$. Note that assumptions below are assumed to hold uniformly for all large sample sizes. 
We begin with the following assumptions:
\begin{assn}{(Bernoulli design)}\label{assn:bern-des}
$(Z_1,...,Z_N)$ is a vector of independent $\text{Bernoulli}(p)$ draws.
\end{assn}

\begin{assn}{(Bounded potential outcomes)}\label{assn:bounded-y} $|Y_x(\z)| < b$ for some finite constant $b$ and all $x \in \mathcal{X}$ and $\z \in \{0,1\}^N$.
\end{assn}


Assumption C\ref{assn:bern-des} defines the experimental design.
As discussed above, condition C\ref{assn:bern-des} ensures that individualistic responses are {\it ceteris paribus} for variation in treatment assignment at a given node.
We work with the assumption that the assignment probability, $p$, is constant over intervention nodes and discuss the extension to cases where assignment probabilities vary in Section 6.4 below. Assumption C\ref{assn:bounded-y} is a common regularity condition on the potential outcomes. It ensures the boundedness of higher-order moments for the distribution of functions of the potential outcomes.

Our next assumption follows \citet{savje2021average} by using a dependency graph to characterize interference-induced dependencies among the circle averages defined at a specific distance value $d$.  Let $I_{ij}(d)$ be an indicator for whether assignment at intervention node $j$ interferes with the $d$-radius circle average at node $i$:
\begin{equation}
I_{ij}(d) = \left\{
\begin{array}{ll}
	1 & \text{ if } \mu_i(\mathbf{Y}(\mathbf{z});d) \ne \mu_i(\mathbf{Y}(\mathbf{z}');d) \text{ for some } \mathbf{z}, \mathbf{z}' \in \{0,1\}^N \text{ such that } \mathbf{z}_{-j} = \mathbf{z}_{-j}'\\
	1 & \text{ if } i = j,\\
	0 & \text{ otherwise}.
\end{array}
\right.
\end{equation}

Then, let $s_{ij}(d)$ be an indicator for whether $d$-radius circle averages at $i$ and $j$ are subject to interference from treatment at some common intervention node $\ell$ (which could be $i$, $j$, or some other third intervention node):
\begin{equation}
	s_{ij}(d) = \left \{ 
\begin{array}{ll}
	1 & \text{ if } I_{i\ell }(d)I_{j\ell }(d) = 1 \text{ for some } \ell \in \node_N,\\
	0 & \text{ otherwise }.
\end{array}
\right.
\end{equation}

If $s_{ij}(d) =  1$, then circle averages at $i$ and $j$ will vary together whenever there is variation in treatment values at the relevant $\ell$s, meaning non-independence over possible values of $\mathbf{Z}$. 

Using this dependency graph, our third assumption is a restriction on the extent of interference dependencies for circle averages at $d$. Let us denote the distance between two intervention nodes $i$ and $j$ as $d_{ij}$.\footnote{As before, under point intervention, $d_{ij} = \gamma(x(i) , x(j))$. Under polygon intervention, $d_{ij} = \min_{x \in i, x' \in j} \gamma(x , x')$.} Then we have:

\begin{assn}{(Local interference.)}\label{assn:local-inf} Let $h:[0,\infty)\to [0,\infty)$ be a function independent of sample sizes. For each $d$ and all large sample sizes $N$, and all pairs of intervention nodes $i$ and $j$ in $\node_N$, there exists a constant $h(d)$ such that if $d_{ij} > h(d)$, then  $s_{ij}(d)=0$.
\end{assn}

Assumption C\ref{assn:local-inf} means that there are hard limits to the spatial extent of the interference: nodes that are beyond some distance from each other have no interference-induced dependencies. Assumption C\ref{assn:local-inf} is an assumption on the possible extent of spillovers. For intervention node $j$ to satisfy C\ref{assn:local-inf} with respect to $i$, it would require that the outcomes $Y_x(\z)$ used to construct the circle average of $i$ at distance $d$ is unaffected by \textit{not only} $j$'s treatment value but also the treatment values at any intervention nodes that affect the circle average of $j$ at distance $d$. 
For example, consider a simple case where the outcomes used to construct the circle average of $j$ at distance $d$ depend on $Z_j$ and another assignment variable $Z_k$. If $s_{ij}(d)=0$, this implies that the outcomes used to construct the circle average of $i$ at distance $d$ cannot depend on either $Z_j$ or $Z_k$. In other words, $i$ and $j$ cannot share any sources of variation in their circle averages at distance $d$.  This assumption hence implies that for $i,j$ with $d_{ij} > h(d)$,  $\mu_i(Y(z, \Z_{-i});d)$ and $\mu_j(Y(z', \Z_{-j});d)$ are independent with each other. 

To give a more interpretable sufficient condition, suppose that the intervention at every node has no influence on the outcome points more than $\bar{d}$ away from it. Then, C\ref{assn:local-inf} would be satisfied if the distance between the boundary of  the radius-$d$ circle around node $i$ and the radius-d circle around node $j$ is larger than $2\bar{d}$. In this case, $h(d)$ in C\ref{assn:local-inf} can be set to $2\bar{d} + 2d$. When nodes $i$ and $j$ are $2\bar{d} + 2d$ apart, they neither interfere with each other, nor do they have any common neighbors that influence both nodes. To further illustrate the idea, a simple example is given in Figure \ref{fig:C3}. In this example with three intervention nodes,  intervention node 2 is on the radius-$d$ circle centering node 3. Because nodes 1 and 3 are sufficiently far apart, neither node 3 and node 2 affect the d-circle average of node 1 and vice versa.


\begin{figure}\centering
\includegraphics[width=0.5\textwidth]{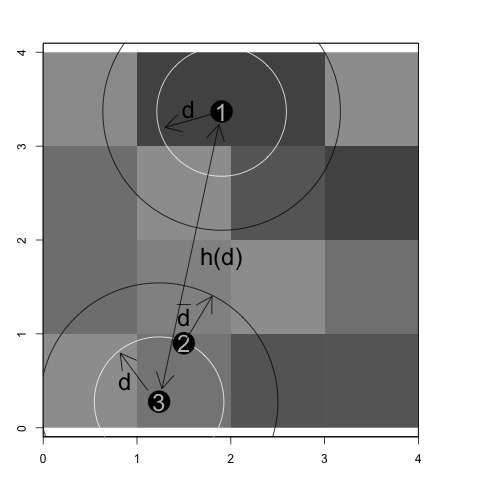}
\caption{Illustration of condition C3. White circles around the nodes are circle averages defined at the distance value $d$. Black circles with a radius of $d+\bar{d}$ depict the maximal range that interference can happen for the $d$-circle averages. As the distance between node 1 and node 3 is larger than $2\bar{d}+2d$,  the circle averages of the two nodes do not depend on each other, even when a third node lies between them.}
\label{fig:C3}
\end{figure}

A final assumption defines an increasing-domain asymptotic growth process in which the number of independent pairs of intervention nodes increases. Define $\node_{N}(i; d) \equiv \{j \in \node_N: d_{ij} \leq d\}$, the set of intervention nodes whose distance to node $i$ is less than $d$. We have the following asymptotic restriction on the spacing of the intervention nodes:  
	\begin{assn}{(Intervention node spacing)}\label{assn:spacing} Let $b:[0,\infty)\to [0,\infty)$ be a function independent of sample sizes. For each $d$ and all large sample sizes $N$, the sequence of intervention nodes satisfies $\sup_{i\in \node_N}|\node_{N}(i; d)| \leq b(d)$.
\end{assn}

C\ref{assn:spacing} ensures that as the size of intervention node grows, the number of intervention nodes that reside within a given distance of a node is bounded. It  is satisfied when the intervention nodes are deliberately chosen such that they are adequately spaced out geographically. For example, for point-intervention experiments, we can choose the set of nodes $\node_N$ from a meshgrid where the distance between any two points $i$ and $j$ on the grid is bounded from below, i.e. $d_{ij} \geq d_0$. In practice, researchers can first divide the space into disjoint areas and select one intervention node from each area to make C\ref{assn:spacing} plausible.\footnote{See \cite{leung2022rate, leung2023design} for algorithms that enable researchers to select intervention nodes in space such that assumption C\ref{assn:spacing} is likely satisfied.} For polygon-intervention experiments, one may require that the size of each polygon is larger than a threshold value (thus ensuring adequate spacing between non-adjacent polygons).\footnote{ It should be noted that the assumptions presented here are not the most general possible for the Wald-type inference we study below. One can relax our assumptions, for example, to allow the $h(\cdot)$ in C\ref{assn:local-inf} and the $b(\cdot)$ in C\ref{assn:spacing} to slowly increase with $N$. However, in such cases the convergence rate of the estimators will no longer be $\sqrt{N}$. We leave such cases for a future study. }

We note that C\ref{assn:local-inf} and C\ref{assn:spacing} should not be confused with each other. C\ref{assn:local-inf} is an assumption on the extent of spillover effects. C\ref{assn:spacing} is an assumption on the spacings of the intervention nodes which prevents nodes from concentrating densely in a particular region. To see how the two assumptions fit together, we complete our specification of the extent of interference. Let us define the  neighborhood $\mathcal{B}(i;d)$ that includes all the nodes whose circle averages at $d$ may interfere with that of node $i$: 
\begin{equation}
\mathcal{B}_N(i;d) = \{j \in \node_N: d_{ij}  \le h(d)\}.
\end{equation}
From C\ref{assn:local-inf}, we know that $s_{ij}(d) = 0$ for $j \not \in \mathcal{B}_N(i;d)$. From C\ref{assn:spacing}, we know that $|\mathcal{B}_N(i;d)|\leq b(h(d))$.
Define $c _i(d)=|\mathcal{B}_N(i;d)| $ and $c_{N}(d)= \max_{i \in \node_N} c_i(d)$.
Conditions C\ref{assn:local-inf} and C\ref{assn:spacing} imply the following condition:
\begin{description}
\item[C   4a.] (Limited local interference with respect to $d$) For each $d$, \label{assn:d-bound} $c_N(d)=O(1) $.
\end{description}
We note that this asymptotic implication is only \textit{pointwise} in $d$: for each $d$, we have $c_N(d)=O(1)$ but it is not true that $\sup_{d\in\mathbb{R}}c_N(d)=O(1)$. In practice, researchers should focus on distance values within a moderate range, such that the area covered by any circle is not excessive relative to the whole geography. In our analyses below, we suppress the subscripts for asymptotic sequences unless they are needed to add clarity.

\section{Estimation and inference}\label{est-inf}

In this section, we study the Horvitz-Thompson (HT) and Hajek estimators for the AME and establish their consistency and asymptotic normality. We recommend the use of the Hajek estimator, as it is usually more efficient in practice compared with the HT estimator.\footnote{We provide a theoretical efficiency comparison between the HT and Hajek estimators in Section \ref{Section:HAHTComparison}.} We express the Hajek estimator as a regression estimator and propose a variance estimator based on the spatial heteroscedasticity and autocorrelation consistent (spatial HAC) estimator of \citet{conley99_spatial}. Hence, our proposed Hajek estimator for the AME is equivalent to a regression of the circle average on a constant term and the treatment indicator. The inference is carried out using a spatial HAC standard error estimator and normal approximation.  All proofs are contained in the appendix.  

Consider the following Horvitz-Thompson (HT) estimator:
\begin{align}
\widehat{\tau}_{\HT}(d) = \frac{1}{Np}\sum_{i=1}^N Z_i \mu_i(\Y; d) - \frac{1}{N(1-p)}\sum_{i=1}^N (1-Z_i) \mu_i(\Y; d).\label{eq:ht}
\end{align}

The terms on the right-hand side consist of design parameters $N$ and $p$, assignment indicators $\{Z_i\}_{i=1}^N$, and observed circle average $ \{\mu_i(\Y; d)\}_{i=1}^N$ as defined in (\ref{def:circleaverage}). Hence the quantity is computable from observed data alone.

Our first two results show that $\widehat{\tau}_{\HT}(d)$ is unbiased for  the AME at distance $d$ under C\ref{assn:bern-des}, and is consistent and asymptotically normal under C\ref{assn:bern-des}-C\ref{assn:spacing}. 

\begin{prop}[Unbiasedness]\label{prop:identification}
Under C\ref{assn:bern-des}, 
\begin{equation}
\E_{\Z}\left[	\widehat{\tau}_{\HT}(d)  \right]=\textnormal{AME}(d;\eta),
\end{equation}
where the expectation is taken over the random assignment variables.
\end{prop}


Let $N(0,1)$ denote the standard Gaussian distribution with mean 0 and variance 1.
\begin{prop}[Asymptotic Distribution for the Horvitz-Thompson estimator]\label{prop:ht-asymptotics}
Under C\ref{assn:bern-des}-C\ref{assn:spacing} and if $N\times\Var(\widehat{\tau}_{\HT}(d))$ is uniformly bounded below for all large $N$, then, as $N \rightarrow \infty$ , 	
\begin{equation}
	\frac{\widehat{\tau}_{\HT}(d) - \textnormal{AME}(d;\eta)}{\sqrt{\Var(\widehat{\tau}_{\HT}(d))}} \overset{d}{\rightarrow}  N(0,1),\footnote{ $\Var(\widehat{\tau}_{\HT}(d))$ is characterized in Lemma \ref{lemma:ht-var}.}
\end{equation}

\end{prop}

The Hajek estimator is a refinement to the Horvitz-Thompson estimator and, in this setting, is equivalent to a difference-in-means estimator: instead of using $pN$ and $(1-p)N$ as the denominator, the Hajek estimator replaces them with $N_1=\sum_{i=1}^n Z_i$ and $N_0=N-N_1$, respectively:
\begin{align}
	\widehat{\tau}_{\HA}(d) = \frac{1}{N_1}\sum_{i=1}^{N}Z_i \mu_i(\Y; d) - \frac{1}{N_0}\sum_{i=1}^{N}(1-Z_i) \mu_i(\Y; d) \label{eq:hajek}
	\end{align}
The Hajek estimator is usually more efficient in pracitce, in terms of variances , to a Horvitz-Thompson estimator.\footnote{ Note that this is not always true theoretically: there can be potential outcomes for which the asymptotic variance of the Hajek estimator is larger than that of a Horvitz-Thompson estimator. Nevertheless, we believe such potential outcomes often do not arise in practice. We provide a theoretical comparison in Section \ref{Section:HAHTComparison}.}

\begin{prop}[Asymptotic Distribution for the Hajek estimator]\label{prop:hajek-asymptotics}	
		Let $\AVar(\widehat{\tau}_{\HA}(d))$ denote the asymptotic variance of the Hajek estimator.\footnote{The precise definition and characterization of the asymptotic variance of the Hajek estimator is stated in Lemma \ref{lemma:ha-var-homo}.}
		Under C\ref{assn:bern-des}-C\ref{assn:spacing} and if $N\times \AVar(\widehat{\tau}_{\HA}(d))$ is uniformly bounded below for all large $N$, then, as $N \rightarrow \infty$  
		\begin{equation}
			\frac{\widehat{\tau}_{\HA}(d) - \textnormal{AME}(d;\eta)}{\sqrt{\AVar(\widehat{\tau}_{\HA}(d))}} \overset{d}{\rightarrow}  N(0,1).
		\end{equation}
\end{prop}

As a simple difference in means, the Hajek estimator is algebraically equivalent to a least square regression of the circle averages on an intercept and treatment indicators of the intervention nodes:
\begin{equation}
\left(\begin{array}{c}\widehat{\mu}_0(d) \\ \widehat{\tau}_{\HA}(d)\end{array}\right) = \arg \min_{(\mu_0, \tau)} \sum_{i=1}^N \left(\mu_i(\Y; d) -\mu_0-\tau Z_i\right)^2.
\end{equation}
Our approach to variance estimation borrows from the spatial econometrics literature and works with the spatial heteroskedasticity and autocorrelation consistent (spatial HAC) variance estimator of \citet{conley99_spatial}. This estimator takes the form,
\begin{align}\label{HAC_formula}
\widehat{\Sigma}_{\HAC}(d)= (\X' \X)^{-1}\left(\sum_{i=1}^N \sum_{j=1}^N\X_i' \X_j \hat{e}_i \hat{e}_j K\left(\frac{d_{ij}}{\tilde{d}} \right)\right)(\X' \X)^{-1},
\end{align} where 
$\X  = \begin{pmatrix}
1  & 1 &  \hdots, 1 \\
Z_1 & Z_2 & \hdots Z_N\\
\end{pmatrix}'\in \mathbb{R}^{N\times 2}$ and $\X_i $ denotes the $i$th row of the matrix. The $\hat{e}_i$'s are the residuals from the regression, where $\hat{e}_i=\mu_i(\Y;d)-\widehat{\mu}_0(d)-\widehat{\tau}_{\HA}(d)Z_i$.  $K(\cdot)$ is a kernel function. $\tilde{d}$ is a cutoff value and our setup suggests setting it at $\tilde{d}=h(d)$. 
The (2,2)-entry of the estimator $\widehat{\Sigma}_{\HAC}(d)$ is our estimator for the variance of $\widehat{\tau}_{\HA}(d)$ and we denote it as $\widehat{\textnormal{V}}_{\HAC}(d)$. In practice, $h(d)$ is unknown and we can examine the robustness of the results by varying $\tilde d$. 
In the appendix, we show that the regression estimator combined with the spatial HAC variance estimator with the uniform kernel provides asymptotically valid inference for the AME under an extra assumption:

\begin{assn}{(Homophily in treatment effects)}\label{assn:homo} $\frac{1}{N}\sum_{i=1}^N (\tau_i(d;\eta) - 	\textnormal{AME}(d;\eta)) \sum_{j \in \{i\}\cup \mathcal{B}(i;d)}(\tau_j(d;\eta) -	\textnormal{AME}(d;\eta)) \geq 0$ for each value $d\geq 0$.\footnote{Remember that $\tau_{i}(d;\eta) = \E_{\Z_{-i}}[\mu_i(\Y(1, \mathbf{Z}_{-i}); d)] - \E_{\Z_{-i}}[\mu_i(\Y(0, \mathbf{Z}_{-i}); d)]$.} 
\end{assn}

The assumption is that the expected treatment effect generated by node $i$ at distance $d$ is positively correlated with that generated by its neighbors in $\mathcal{B}(i;d)$ and itself. In other words, there is homophily in treatment effects in space: nodes that generate larger-than-average effects reside close to each other. We consider that a positive spatial correlation assumption reasonable in many applied settings so we recommend the use of HAC variance estimator in practice. 

When researchers have concern over C\ref{assn:homo}, one can use an alternative variance-bound estimator proposed in \citet{savje2021average}:
\begin{equation}\label{varianceSAH}
	\widehat{\textnormal{V}}_{\SAH}(d)=\frac{1}{N^2}\sum_{i=1}^N\frac{Z_i c_i(d)\widehat{e}_i^2}{p^2} +     \frac{1}{N^2}\sum_{i=1}^N \frac{(1-Z_i) c_i(d)\widehat{e}_i^2}{(1-p)^2}.
\end{equation}
The validity of this variance estimator does not depend on C\ref{assn:homo}, although the estimates tend to be overly conservative in realistic dataset. We provide a comparison using the simulation exercises below.

We summarize all the inferential results in the following proposition. For some $\alpha<1$, let $z_{\frac{\alpha}{2}}$ and $z_{1-\frac{\alpha}{2}}$ be the $\frac{\alpha}{2}$th and $\left(1-\frac{\alpha}{2}\right)$th quantiles of the standard normal distribution, respectively.
\begin{prop}\label{prop:inference}
Let $K(\cdot)$ in (\ref{HAC_formula}) be the uniform kernel.\footnote{$K(x)=I\left(x\in [0,1]\right)$.} Under C\ref{assn:bern-des}-C\ref{assn:spacing} and if $N\times \AVar(\widehat{\tau}_{\HA}(d))$ is uniformly bounded below for all large $N$, we have, for each $\alpha <1$,
\begin{enumerate}[label=(\roman*)]
	\item $\lim_{N\to\infty}\mathbf{Prob}\left(z_{\frac{\alpha}{2}}\leq \frac{\widehat{\tau}_{\HA}(d)-\textnormal{AME}(d;\eta) }{\sqrt{	\widehat{\textnormal{V}}_{\SAH}(d)}} \leq z_{1-\frac{\alpha}{2}}\right)\geq 1-\alpha$;
	\item additionally under C\ref{assn:homo}, $\lim_{N\to\infty}\mathbf{Prob}\left(z_{\frac{\alpha}{2}}\leq \frac{\widehat{\tau}_{\HA}(d)-\textnormal{AME}(d;\eta) }{\sqrt{	\widehat{\textnormal{V}}_{\HAC}(d)}}\leq z_{1-\frac{\alpha}{2}}\right)\geq 1-\alpha$.
\end{enumerate}
\end{prop}

	One possible concern for the HAC variance estimator is the possibility of negative estimates. This occurs in our simulation when estimating AMEs with large d values. For regular grids like $\mathbb{Z}^2$ and Euclidean distance metric, one can design positive semidefinite HAC variance estimators \citep{neweywest1987,conley99_spatial}. However, for irregular grids and arbitrary distance metric, we are not aware of a general method for creating an \textit{exact} positive semidefinite HAC variance estimator. Nonetheless, it is possible to design a biased-upward positive definite estimator.\footnote{This strategy has been employed by \cite{gao2023causal} and \cite{chang2023design}. For a discussion of positive-definite HAC estimator with Euclidean distance, see \cite{kelejian2007hac}.} The procedure is outlined below.

Note that the HAC variance estimator can be expressed as
\begin{equation}
 \widehat{\Sigma}_{\HAC}(d)= (\X' \X)^{-1}\left(  \X' \left(\left(\widehat{e}\widehat{e}'\right) \circ \mathcal{K} \right) \X \right)  (\X' \X)^{-1},
\end{equation}
where  $\widehat{e}=\left(\widehat{e}_1, ...., \widehat{e}_n  \right)$, $\mathcal{K}$ is a $n$-by-$n$ symmetric matrix with $\mathcal{K}_{ij}=K(\frac{d_{ij}}{\tilde{d}})$, and $\circ$ denotes the pointwise (Hadamard) matrix product. Denote the eigenvalue decomposition of the matrix $\mathcal{K}$ as $\mathcal{K}=\sum_{i=1}^n \lambda_i v_iv_i'$, where $\{\lambda_i\}_{i=1}^n$ are the eigenvalues and $\{v_i\}_{i=1}^n$ are the eigenvectors. We define $\mathcal{K}^{PD}=\sum_{i=1}^n \max\{\lambda_i,0\} v_iv_i' $, and the corresponding positive semidefinite variance estimator
\begin{equation}\label{formula:HAC_PD}
	\widehat{\Sigma}_{\HAC}^{\PD}(d)= (\X' \X)^{-1}\left(  \X' \left(\left(\widehat{e}\widehat{e}'\right) \circ\mathcal{K}^{\PD} \right) \X \right)  (\X' \X)^{-1},
\end{equation}
We note that $\left(\widehat{e}\widehat{e}'\right) \circ \mathcal{K}^{\PD} $ is a positive semidefinite matrix, and hence so is the estimator $	\widehat{\Sigma}_{\HAC}^{\PD}(d)$.\footnote{This follows from Theorem 7.5.3 in \cite{horn2012matrix}, and the fact that both $\widehat{e}\widehat{e}'$ and  $\mathcal{K}^{\PD}$ are positive semidefinite.} In addition, $	\widehat{\Sigma}_{\HAC}^{\PD}(d)\geq \widehat{\Sigma}_{\HAC}(d)$.\footnote{This follows because $\left(\widehat{e}\widehat{e}'\right)\circ \mathcal{K}^{\PD} - \left(\widehat{e}\widehat{e}'\right)\circ \mathcal{K}=\left(\widehat{e}\widehat{e}'\right)\circ\left( \mathcal{K}^{\PD} - \mathcal{K}\right) $, and  $\mathcal{K}^{\PD} - \mathcal{K}$ is a positive semidefinite matrix. } We shall refer to this positive semidefinite variance estimator as \textit{the HAC-PD estimator}. We investigate its performance in the simulation section. In general, the HAC-PD estimator returns nonnegative variance estimates with little loss in efficiency.

In addition to various variance estimators, we include a discussion of empirical degree of freedom (edof) adjustment in Appendix \ref{app:edof}. We find in our simulations that the edof adjustment is important for improving the finite sample performance of the confidence intervals. Since the derivation is fairly standard \citep{imbens-kolesar2012-robust-small,bell_mccaffrey2002_variance,young-2015-improved-inference}, we leave the derivation in the appendix.

\section{Extensions}

\subsection{Structural Interpretation of the AME}\label{struc-int}
Recall that the AME can be interpreted as the average effect of switching an intervention node from control to treatment, given ambient interference emanating from other intervention nodes.  The degree of such ambient interference is dictated by the experimental design and in particular the level of treatment saturation ($p$).  Generally speaking, the AME is not invariant with respect to the experimental design.  Here we show that the AME can have a structural interpretation (i.e., invariant over designs) if spatial effects are additive.  This particular case aligns with standard model-based spatial analyses \citep{darmofal2015-spatial-book}.

Suppose that for each outcome node $x$, its potential outcome value is generated additively: 
\begin{equation}
Y_x(\Z) = \sum_{i=1}^N Z_i g_i(x) + f(x),
\end{equation}
where $f(x)$ captures spatial trends in the absence of any intervention, and then $g_i(x)$ captures effects that emanate, perhaps idiosyncratically, from each of the intervention nodes. This model covers a wide variety of more restrictive models of homogeneous spatial effects.

Under this restriction on the potential outcomes, we have that the effect of assigning treatment to an intervention node $i$ shifts outcomes at point $x$ by $g_i(x)$:
\begin{equation}
	\begin{split}
\tau_{ix}(\eta) = & \E_{\mathbf{Z_{-i}}}\left[Y_x(1, \Z_{-i})\right] - \E_{\mathbf{Z_{-i}}}\left[Y_x(0,\Z_{-i}) \right] \\
= & \E_{\mathbf{Z_{-i}}}\left[g_i(x) + \sum_{j \ne i}^N Z_j g_j(x) + f(x) \right] - \E_{\mathbf{Z_{-i}}}\left[\sum_{j \ne i}^N Z_j g_j(x) + f(x) \right] \\
= & g_i(x).
	\end{split}
\end{equation}
The circle-average at distance $d$ from intervention node $i$ would be equal to the added effect that emanates from node $i$:
\begin{equation}
\tau_i(d;\eta) = \frac{\int_{x :d_i(x) \in \Omega_d } g_i(x) \text{d}\zeta}{\int_{x:d_i(x) \in \Omega_d} \text{d}\zeta}.
\end{equation}
Unlike the general case, this does not depend on the distribution of treatments over intervention nodes other than $i$.  The AME for distance $d$ is then the average of the ways that each intervention point individually affects outcomes at distance $d$, regardless of the treatment assignment.  Thus, we can interpret the AME as a structural quantity with such an additive potential outcome model. 

\subsection{Smoothing}

The AME at a particular distance $d$ with $\Omega_{d}=\{d\}$, as defined in (\ref{def:circleaverage}), may be a noisy quantity to estimate well in practice. When the AME curve as a function of $d$ is considered to be smooth, it may be a good idea to estimate an alternative quantity, the  \textit{smoothed} AME at d, defined as:
\begin{equation}
	\textnormal{sAME}(d;\eta)= \int_{\mathbb{R}} \textnormal{AME}(t;\eta)K_h\left(\frac{d-t}{h}\right)dt,
\end{equation}
where $K:\mathbb{R}\to \mathbb{R}^+$ is a nonnegative kernel and $h$ is a  user-specified bandwidth.\footnote{ $h$ can be tuned  on auxiliary data so that they are fixed with respect to $\Z$.  } The integral can be similarly defined for measures with discrete supports.  This quantity is a smoothed version of the AME function, defined with respect to a chosen kernel function and the bandwidth. The donut and disk AMEs defined in Section \ref{ame} are special cases with (properly-normalized) uniform kernels. 

We can define a similar smoothing operation on observed  circle means: 
\begin{align}\label{SAME:regression}
	\mu_i^{\SM}(\Y;d)=  \int_{\mathbb{R}} \mu_i(\Y;t)K_h\left(\frac{d-t}{h}\right)dt.
\end{align}
With this definition, we can estimate the smoothed-AME using the same regression approach studied in Section \ref{est-inf}. It should be clear that the statistical results developed in Section \ref{est-inf} remain valid provided that the new quantities satisfies the assumptions with proper parameters (e.g., with a larger interference neighborhood).\footnote{An alternative approach is to estimate a kernel-weighted regression: \begin{equation}\label{SAME:kernelregression}
		\left(\widehat{\mu}_0^K(d),  \widehat{\tau}^{K}_{\HA}(d)\right)= \arg \min_{(\mu,\tau)}   \sum_{i=1}^N \sum_{d' \in \mathcal{D}} \left(\mu_i(\Y(\z);d') - \mu- \tau Z_i \right)^2 K_h\left(\frac{d-d'}{h}\right),
	\end{equation} 
	and/or consider a more refined estimation strategy: 	\begin{align}\label{SAME:llregression}
	(\widehat{\mu}, & \widehat{\tau}, \widehat{\beta}, \widehat{\delta}) =\arg \min_{(\mu,\tau,\beta,\delta)} \sum_{i=1}^N \sum_{d' \in \mathcal{D}} \left(\mu_i(\Y(\z);d') - \mu - \tau Z_i -\beta (d'-d) - \delta Z_{i} (d'-d)\right)^2 K_h\left(\frac{ d-d'}{h}\right).
	\end{align}	 	
	The estimation theory for (\ref{SAME:kernelregression}), based on our setup, is similar to that of (\ref{SAME:regression}) and we omit here. We leave the development of the estimation theory of  (\ref{SAME:llregression}) for a future study.
 }

\subsection{Randomization Tests}\label{Section:permutation_test}
In our analysis above, we discussed how to construct pointwise confidence intervals for the AME values at different distance values. An alternative for testing is the randomization test, albeit under a stronger sharp null hypothesis. In addition to pointwise tests, the randomization test can be flexibly adapted to test other type of hypothese, for example, researchers may  be interested in whether effects are statistically significant on a particular interval rather than at some point.  For this purpose, one can use a randomization test with test statistics $\max_{d\in[d_1,d_2]}\widehat{\tau}_{\HA}(d)$.

Under the sharp null hypothesis that $Y_x(\mathbf{z}) = Y_x(\mathbf{0})$ for any $\mathbf{z}$, we know the full distribution of potential outcomes. Denote the statistic of interest as $T(\mathbf{Y}, \mathbf{Z})$. Examples include estimates of the AMEs at each distance value or the average/maximum of such estimates on an interval $[d_1, d_2]$. As all the potential outcomes are known under the sharp null, we can resample the assignment $\mathbf{z}$ for $P$ times and calculate the corresponding $T(\mathbf{Y}, \mathbf{Z}_p)$, $p=1,...P$. The resampling distribution of $\{T(\mathbf{Y}, \mathbf{Z}_p)\}_{p=1,...P}$ will approximate the distribution of $T(\mathbf{Y}, \mathbf{Z})$ under the sharp null.  As a result, rejecting the null if $\frac{1}{P}\sum_p \mathbf{1}\{|T(\mathbf{Y}, \mathbf{Z}_p)| \geq T(\mathbf{Y}, \mathbf{Z})\} \leq \alpha$ for some fixed large enough $P$ gives an $\alpha$-level test of the sharp null hypothesis. We include simulation results for a class of randomization tests in Section \ref{Section:permutation_test}.


\subsection{Observational studies}\label{Section:ObservationalStudies}

Our framework can be  generalized to observational studies. We first comment on this extension. In many observational studies, potential outcomes, treatment status, and confounders are assumed to be drawn from a superpopulation and uncertainties are assessed with respect to all three components. In our generalization, we treat the potential outcome and confounders as fixed and only study the uncertainty arising from the treatment assignments. In this way, we focus on estimating a \textit{sample} AME instead of a population AME. Estimating a population AME in our setting will require more assumptions on the data generating process of the potential outcomes.

Recall our notations for the evaluation set $\mathcal{X}$ and potential outcomes $Y_x(\z)$ in Section \ref{Section:setting}.  Additionally, we denote the collection of confounders for intervention node $i$ as $\C$, and make the following independent assignment assumption.
\begin{assn}{(Probablistic Assignment)}\label{assn:assignment_obsstudy}
For all sample sizes $N$,
\begin{enumerate}[label=(\roman*)]
	\item The  random assignment variables $\{Z_i\}_{i=1}^N$ are jointly independent.
	\item There exists a treatment probability model  $p( \cdot): \mathcal{O} \to [0,1]$ such that
	for each $i\in\node_N$,
	\begin{equation}
	\mathbf{Prob}(Z_i=1)= p(\C)
	\end{equation}
	\item For an $\epsilon\in (0,\frac{1}{2})$ and for all $i\in\node_N$,  $\mathbf{Prob}(Z_i=1)\in (\epsilon,1-\epsilon)$ .
\end{enumerate}
\end{assn}
C\ref{assn:assignment_obsstudy}-(i) assumes that the treatments are assigned independently. C\ref{assn:assignment_obsstudy}-(ii) states that the $i$th node's treatment probability can be fully described by the confounders $\C$, independent of the potential outcomes. The implication that the assignments are independent of the potential outcomes conditioning on the confounders is the same as what the unconfoundness assumption establishes for observational studies under a super-population assumption. C\ref{assn:assignment_obsstudy}-(ii)  implies that we can model the treatment probability solely as a function of $\C$ and that the probability of being treated at node $i$ is not predicted by the potential outcomes at any evaluation point $x$.
Under C\ref{assn:assignment_obsstudy}, one can show that the Horvitz-Thompson estimator with known propensity scores, defined as\footnote{We include the proof in Appendix \ref{Section:HTobsproof}.}
 $$
 \widehat{\tau}_{\HT}^{\obs}(d) = \frac{1}{N}\sum_{i=1}^N \frac{Z_i}{p(\C) } \mu_i(\Y; d) - \frac{1}{N}\sum_{i=1}^N \frac{1-Z_i}{p(\C)} \mu_i(\Y; d).
 $$
 is unbiased for $\textnormal{AME}(d;\eta)$.\footnote{Note that the AME is defined by marginalizing over the independently but not identically distributed assignment variables generated by the assignment mechanism. }



In practice, $p(\C) $ is unknown to researchers. In such cases, parametric methods, such as logistic regression, and nonparametric methods, such as the sieve estimator in \cite{hirano2003efficient}, can be used to estimate the propensity score. In Appendix \ref{appendix:observationalstudy}, we develop a complete inferential theory for the inverse probability weighted (IPW) estimator where propensity scores are modeled using a logistic model. Results there include: i) additional assumptions; 2) asymptotic linear expansion and asymptotic distribution characterization; 3) variance estimation and inference.


\subsection{Weaker assumptions on the extent of interference}

The limited interference assumption C\ref{assn:local-inf} can be relaxed to accommodate the cases where the potential outcome of one intervention node is affected by all intervention nodes but the effect decreases as the distance between the nodes increases. In Section \ref{Section:weak_interference}, we extend inferential results on the Hajek estimator by relaxing the limited interference assumption C\ref{assn:local-inf}. We follow the literature on spatial near-epoch dependence \citep{jenish2012spatial} and provide results on root-N consistency, asymptotic normality and HAC variance estimation.\footnote{We avoid appealing to the results in the spatial mixing literature, for example, as in \cite{jenish_prucha2009_clt_spatial_interaction}. The mixing condition may be too stringent to be satisfied in design-based causal inference settings, in which outcomes are modeled as a function of Bernoulli random variables. For example, see \cite{andrews1984non} and \cite{doukhan2002rates}.   }  


\section{Simulation}\label{Section:Simulation}
In this section, we use simulated datasets to illustrate propositions introduced in the previous sections and examine the performance of inferential methods based on our analytical results. In the main text, we present simulation results of a point intervention with two different effect functions.\footnote{ In Appendix \ref{Section:Simulation_Appendix}, we present simulation results for a polygon intervention. In the same appendix, one can find additional information on data generating process for the the simulation datasets.} For each simulation design, we run simulations with three sample sizes 64, 100 and 144. 

For the first simulation scenario, the effect function is non-monotonic and additive. Let $Y_x(0)$ be the control outcome at an outcome point $x$, the outcomes are generated as:
\begin{equation}\label{additive_effect}
	Y_x(\Z) = Y_x(0)+ \sum_{i=1}^n f_x(d_{ix})Z_i
\end{equation}
where $n$ denotes the number of intervention nodes, $Z_i$ is the treatment status of $i$th intervention node, and $d_{ix}$ is the distance from the outcome point $x$ to the intervention node $i$. $f_x(\cdot)$ is an effect function and is constructed by mixing the density of two gamma-distributions.

For the second simulation scenario, the effect is interactive. The outcome at the outcome point $x$ is generated by
\begin{equation}\label{interactive_effect}
	Y_x(\Z) = Y_x(0) + \sum_{i=1}^{n} f_x(d_{ix})Z_i + \sum_{i=1}^{n} g_x(d_{ix})Z_i Z_{\mathcal{N}(i)},
\end{equation}
where $\mathcal{N}(i)$ denotes the intervention node that is closest to the intervention node $i$ and $g_x(\cdot)$ is an additional effect function. This design reflects the story that the treatment effect may be stronger when two nearby nodes are treated. 

The AME curves for both cases are shown in Figure \ref{fig:AMR_effect}. The left figure displays the additive-effect case and the right figure the interactive-effect case. When effects are additive, the effect curve (i.e., $f_x(\cdot)$ in (\ref{additive_effect})) and the AME curve are the same as expected.  This follows from our analysis of the structural interpretation of the AME above. Th interactive effect function emanating from a treated intervention node has the same shape as the additive one only when its nearest neighbor is not treated. Otherwise, it is monotonic. Therefore, the AME curve looks like the average of two effect functions.

\begin{figure}[H]\centering
\includegraphics[width=.4\textwidth]{./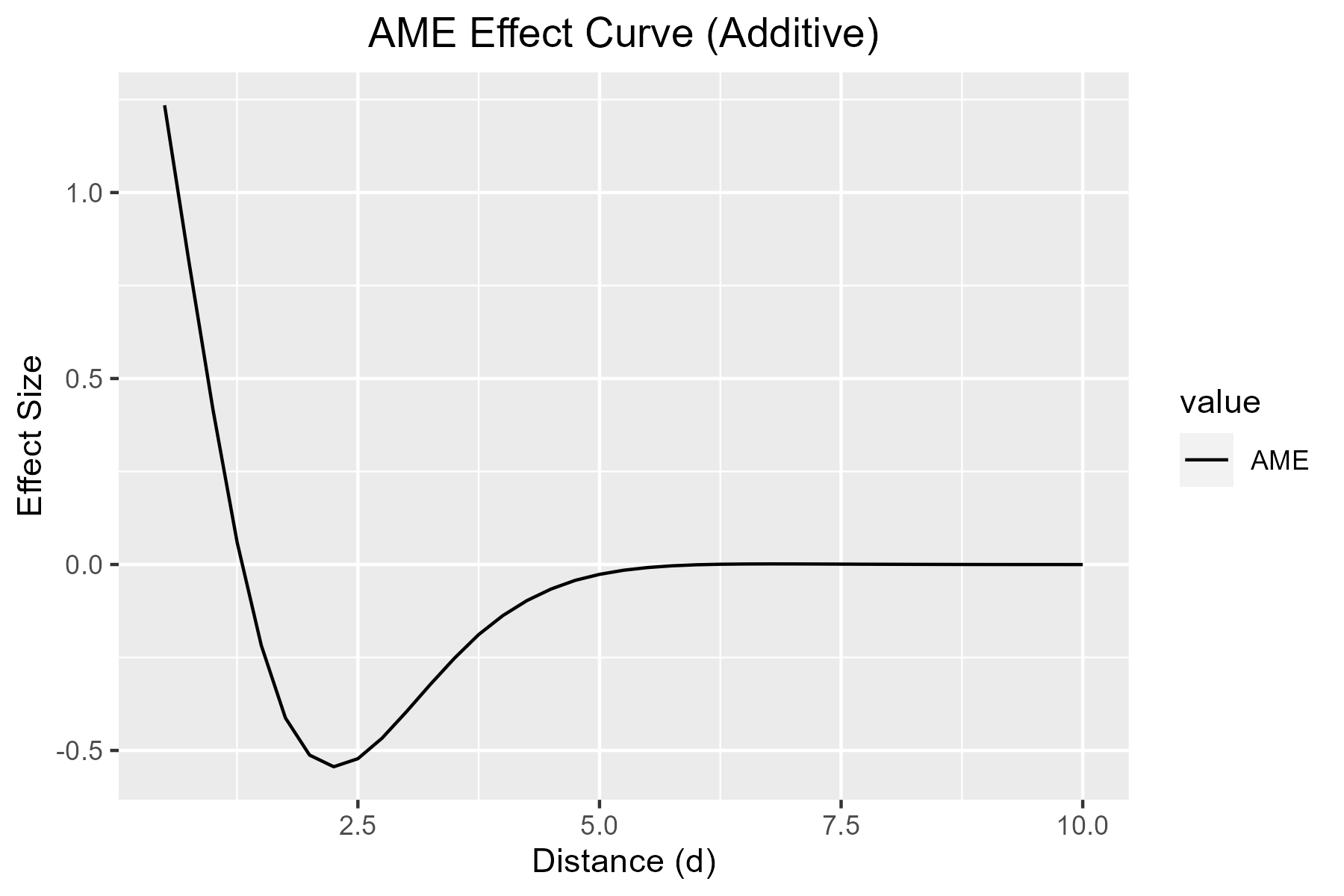}
\includegraphics[width=.47\textwidth]{./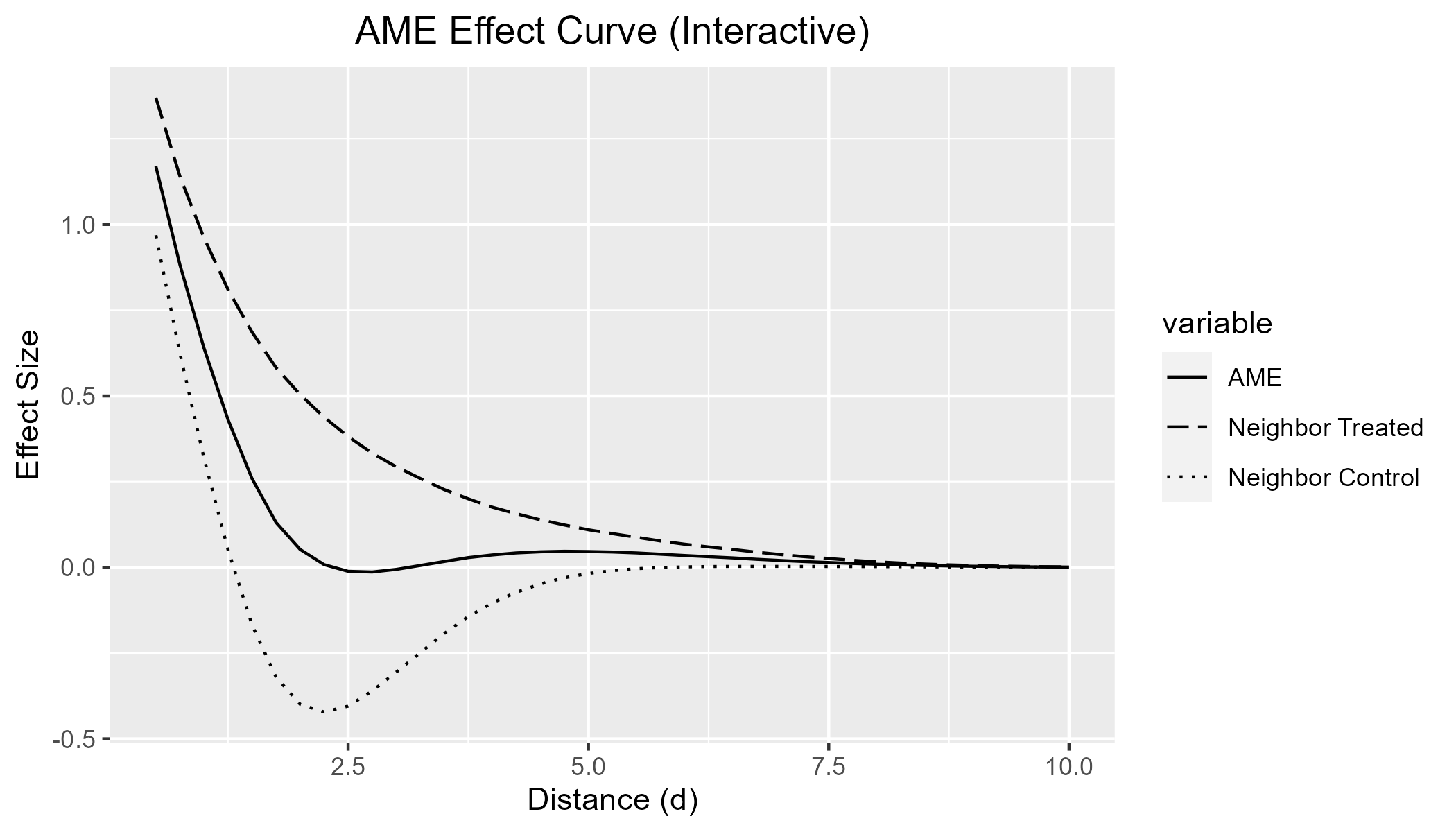}
\caption{This figure displays the AME curves in solid lines for both the additive case (\ref{additive_effect}) and the interactive case (\ref{interactive_effect}). For the interactive case, the dashed lines are the marginalized effect curves when the nearest neighbor is treated and when it is not. }
\label{fig:AMR_effect}
\end{figure}

Figure \ref{fig:MSE_Hajek_point} shows the Mean Squared Errors of the Hajek estimator for both additive-effect (\ref{additive_effect}) and interactive-effect cases (\ref{interactive_effect}),  with intervention-node sample sizes of 64, 100, and 144. In both cases the MSEs decrease as sample sizes increase, as predicted by our theory.

Figure \ref{fig:AME_hajek_additive_point} and Figure \ref{fig:AME_hajek_interactive_point} report coverage rates and median half-lengthes for the Hajek estimator with different confidence interval constructions in the additive-effect case  and the interactive-effect case,  respectively. For brevity, we only display the case with a sample size of 144. The results are illustrated for AMEs at different distance values. 

We highlight two observations from the these figures. Firstly, for small distance values, all confidence intervals have proper coverage rates. However, for large distance values, the empirical degree of correction is important to improve finite-sample performance. This happens because for large distance values, the effective sample size becomes small and some finite sample adjustment is necessary to better reflect the randomness with a small sample. Secondly, the confidence interval procedure based on the SAH variance estimator tends to be overly conservative. The confidence interval generated by the  positive semidefinite HAC variance estimator is not significantly longer than the one with the HAC variance estimator. Based on these observations, we recommend researchers to use the positive semidefinite HAC variance estimator with the empirical degree of adjustment when applying our method.

\begin{figure}[H]
	\centering
	\begin{subfigure}[b]{0.5\textwidth}
		\centering
		\includegraphics[scale=0.5]{./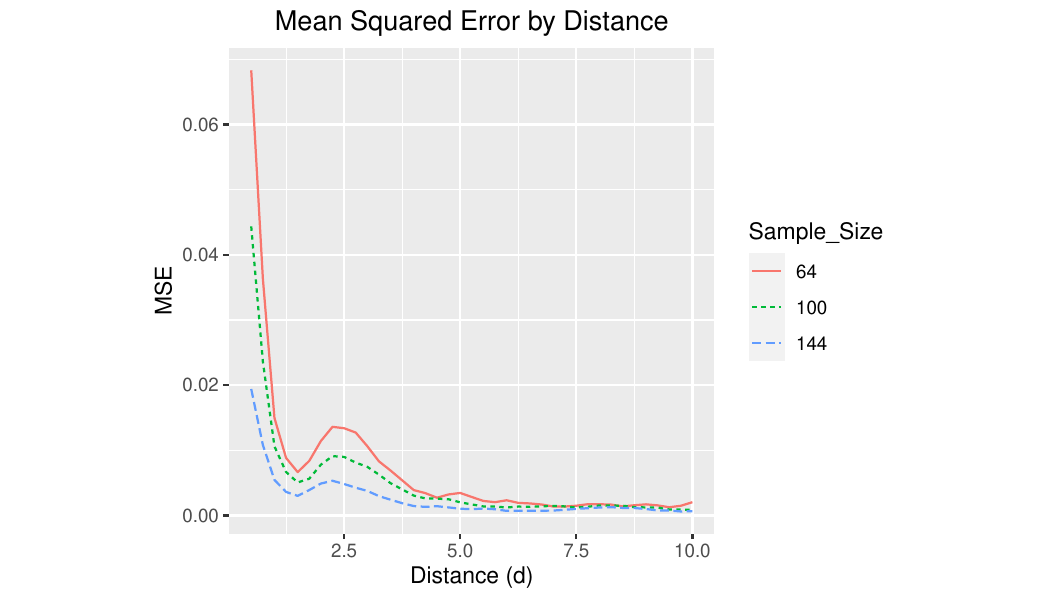}
		\caption{MSE for the additive case (\ref{additive_effect})}
	\end{subfigure}%
	~ 
	\begin{subfigure}[b]{0.5\textwidth}
		\centering
		\includegraphics[scale=0.5]{./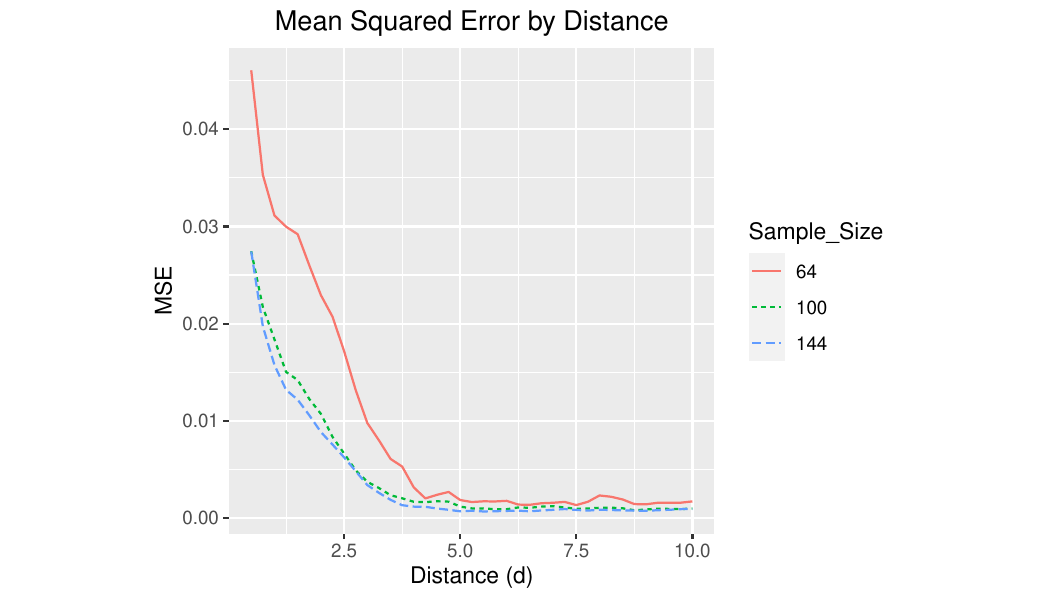}
		\caption{MSE for the interactive case (\ref{interactive_effect})}
	\end{subfigure}
	\caption{The left and right figures report the Mean Squared Errors of the Hajek estimator in the additive-effect case and the interactive-effect case, respectively.  }
	\label{fig:MSE_Hajek_point}
\end{figure}

\begin{figure}[H]
    \centering
\includegraphics[scale=0.45]{./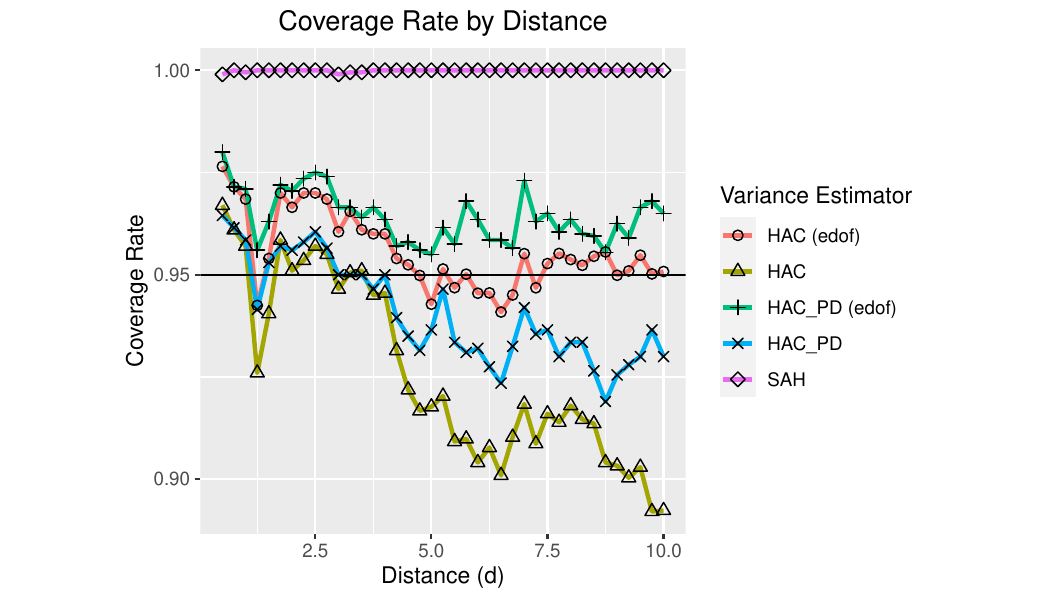}
\includegraphics[scale=0.45]{./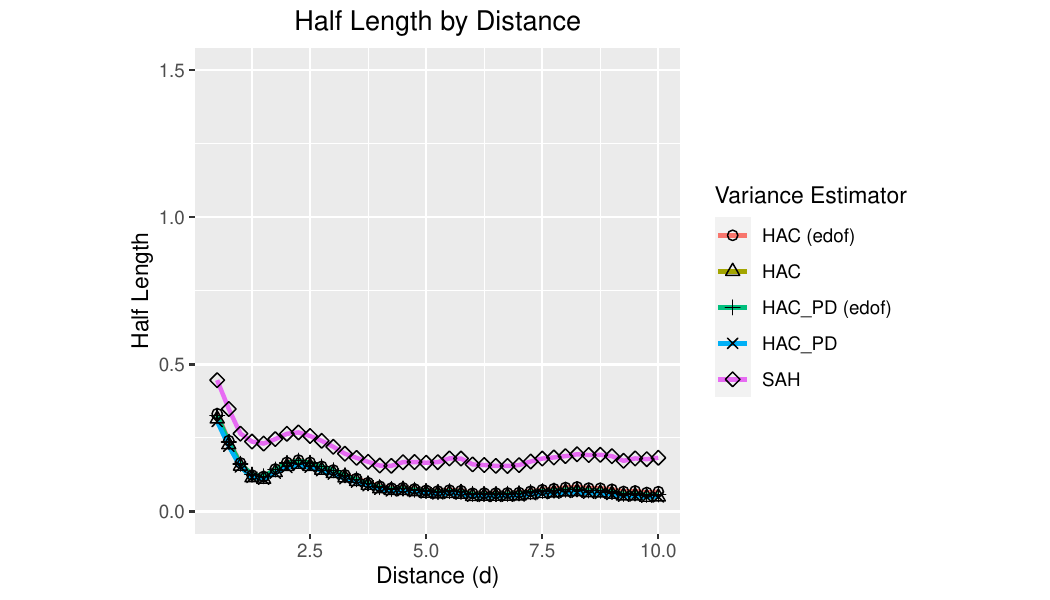}
\caption{ Point-intervention simulation results on the coverage rates and half lengthes of two-sided 95\% confidence intervals with the Hajek estimator and different variance estimators in the additive effect case (\ref{additive_effect}). The sample size is 144. HAC refers to the CI with the HAC variance estimator in (\ref{HAC_formula}) and a normal critical value. The length and coverage of the HAC CI is assesed with respect to the cases where HAC estimator returns a nonnegative value. HAC\_PD refers to the CI with positive-semidefinite HAC variance estimator in (\ref{formula:HAC_PD}) and a normal critical value.  HAC (edof) refers to the CI with HAC variance estimator in (\ref{HAC_formula}) and empirical degree of freedom adjustment. HAC\_PD (edof) refers to the CI with HAC variance estimator in (\ref{formula:HAC_PD})  and empirical degree of freedom adjustment. SAH refers to the CI with  SAH variance estimator (\ref{varianceSAH}).  }
\label{fig:AME_hajek_additive_point}
\end{figure}

\begin{figure}[H]
	\centering
	\includegraphics[scale=0.45]{./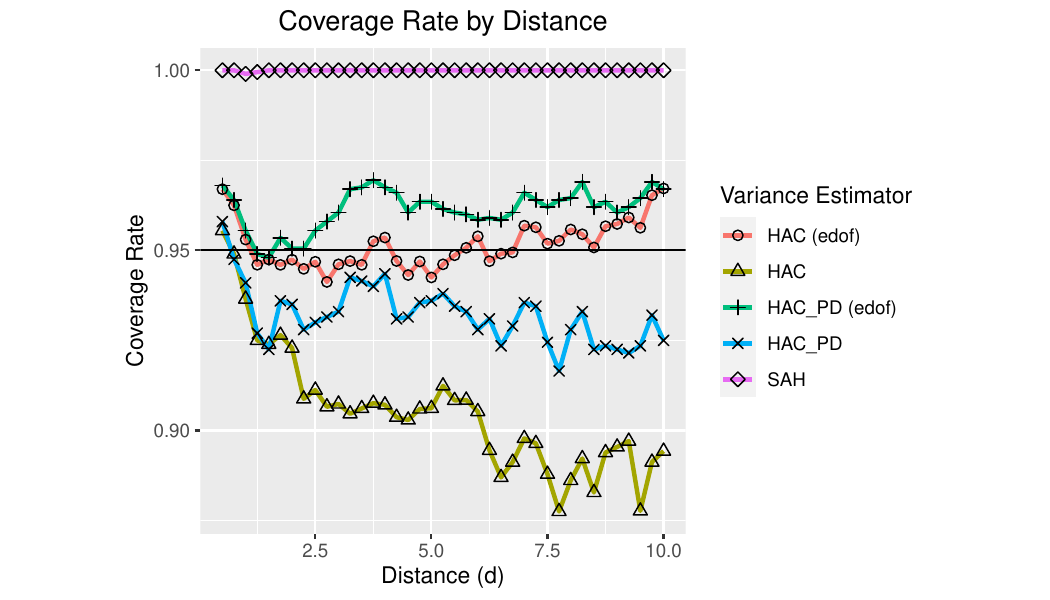}
	\includegraphics[scale=0.45]{./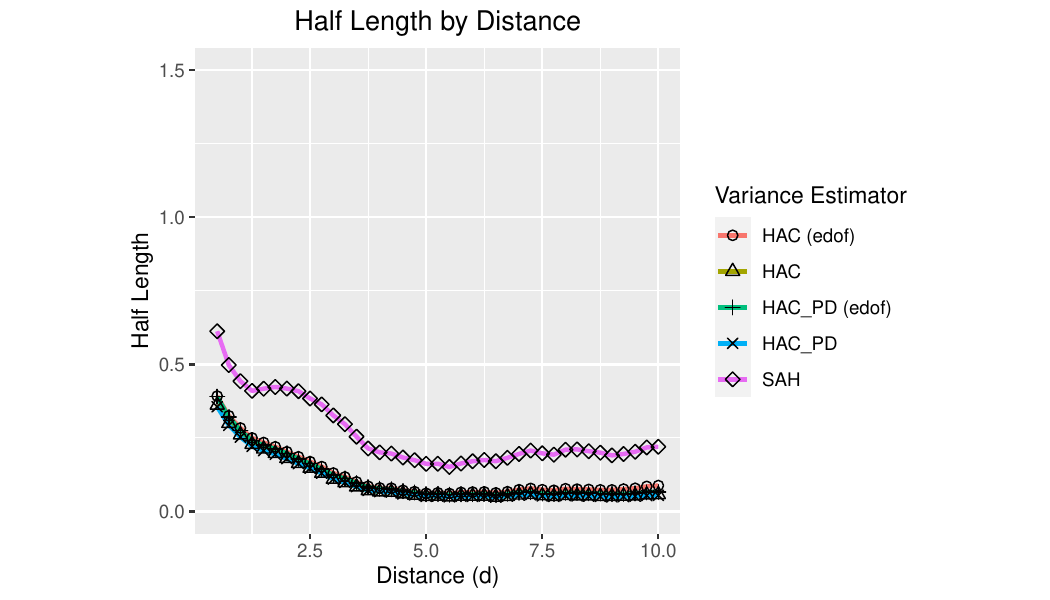}
\caption{ Point-intervention simulation results on the coverage rates and half lengthes of two-sided 95\% confidence intervals with the Hajek estimator and different variance estimators in the interactive effect case (\ref{interactive_effect}). The sample size is 144. HAC refers to the CI with the HAC variance estimator in (\ref{HAC_formula}) and a normal critical value. The length and coverage of the HAC CI is assesed with respect to the cases where HAC estimator returns a nonnegative value. HAC\_PD refers to the CI with positive-semidefinite HAC variance estimator in (\ref{formula:HAC_PD}) and a normal critical value.  HAC (edof) refers to the CI with HAC variance estimator in (\ref{HAC_formula}) and empirical degree of freedom adjustment. HAC\_PD (edof) refers to the CI with HAC variance estimator in (\ref{formula:HAC_PD})  and empirical degree of freedom adjustment. SAH refers to the CI with  SAH variance estimator (\ref{varianceSAH}).  }
	\label{fig:AME_hajek_interactive_point}
\end{figure}

We also evaluate the performance of the randomization tests for testing the sharp null hypothesis, as discussed in Section \ref{Section:permutation_test}. We use the AME estimator at each distance value $d$ as the test statistics and the size of our test is 5\%. Figure \ref{fig:randomization} reports the rejection probability of the tests (pointwise-in-d) in the null-effect scenario,\footnote{That is $Y_i(\Z)=Y_i(0)$ for all $\Z$.}, additive-effect scenario (\ref{additive_effect}), and interactive-effect scenario (\ref{interactive_effect}). It can be seen that under the null-effect case, the rejection probability is around 5\%. In the additive-effect and interactive-effects, the rejection probabilities are high at some locations (i.e., locations with anon-null effect) and approaches 1 as the sample size gets larger for these locations. We also evaluate the performance of the randomization tests based on the statistics $\sup_{d\in\mathcal{D}} |\widehat{\tau}_{\HA}(d)|$.\footnote{$\mathcal{D}$ is a grid from 0.5 to 10 with a step size of 0.25.} With a sample size of 64, the rejection probability is 0.056 in the null-effect case, and 1 in the additive-effect and interactive-effect case. The rejection probabilities remain similar for larger sample sizes 100 and 144.

\begin{figure}[H]
	\centering
	\includegraphics[scale=0.42]{./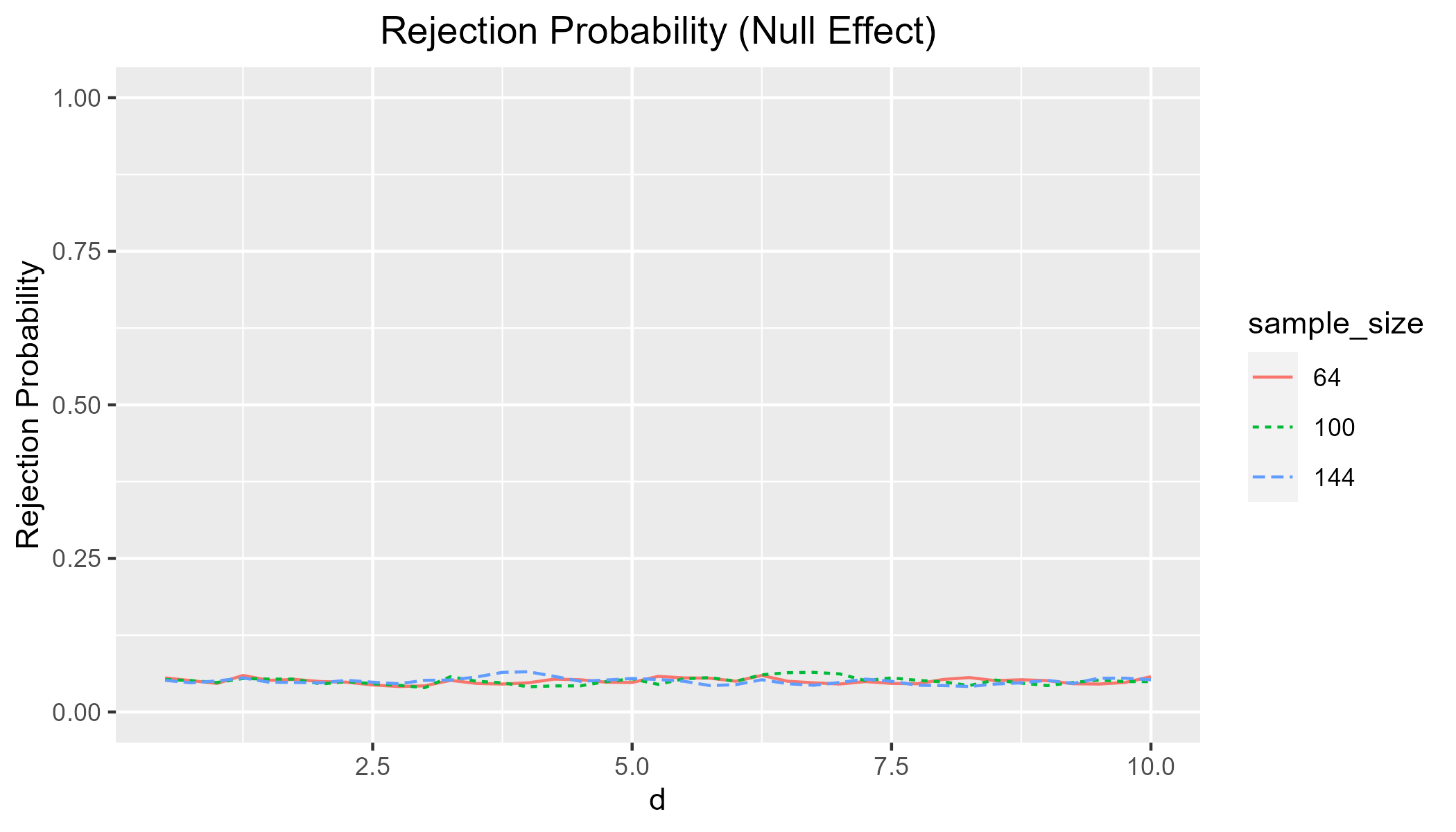}
	\includegraphics[scale=0.42]{./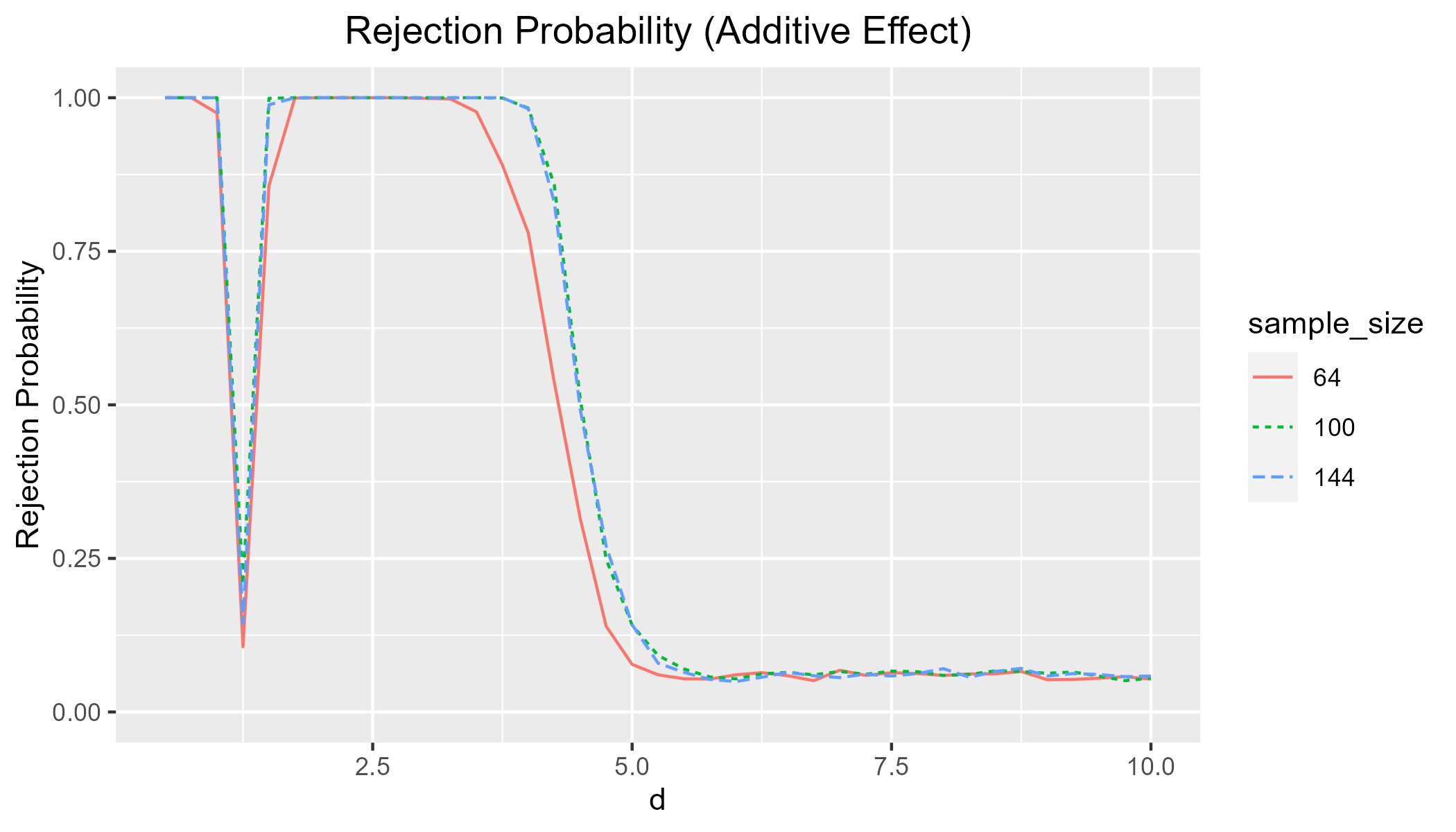}\\
	\includegraphics[scale=0.42]{./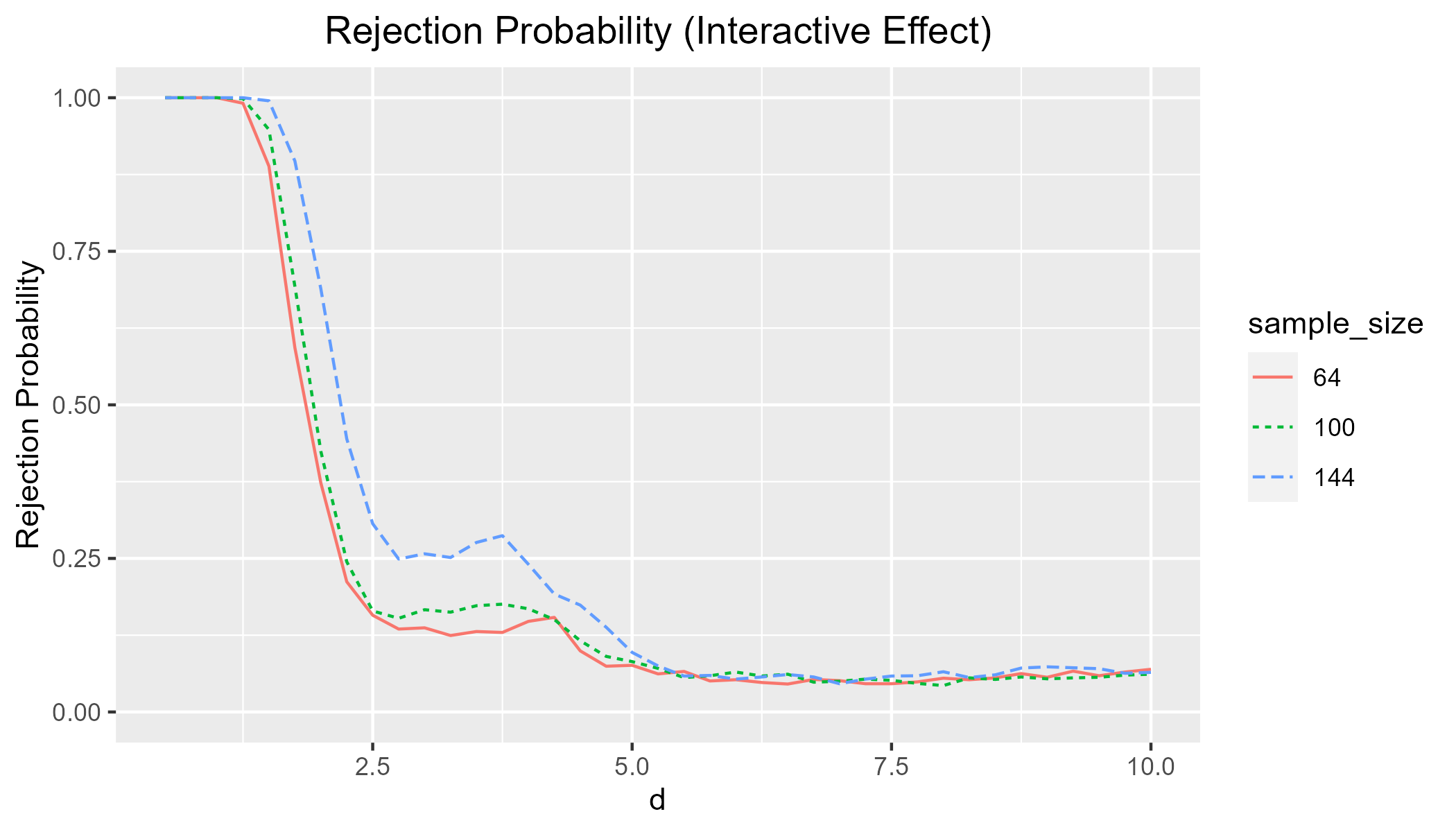}
	\caption{The figures display rejection probabilities of the pointwise randomization tests of the sharp null hypothesis for three simulation scenarios: null effect, additive effect (\ref{additive_effect}), and interactive effect (\ref{interactive_effect}).  }
	\label{fig:randomization}
\end{figure}

\section{Application}

\subsection{A forest conservation experiment \citep{Jayachandran267}}

We now return to the forest conservation experiment from \cite{Jayachandran267}. 
The authors evaluate the effects of a ``payments for ecosystems services'' (PES) program based on 121 villages in Hoima and northern Kibaale districts of Uganda. 60 villages were randomly assigned to the treatment group. Private forest owners in these villages were paid to reduce deforestation on their own land over a course of two years, from 2011 to 2013. Figure~\ref{fig:J-2017-buffers} shows the location of each village in the experiment and its treatment status. 

A primary concern in both academic and policy discussions about PES programs is what conservation scientists refer to as ``leakage,'' which in forest conservation contexts refers to negative spillover effects such that interventions reduce deforestation in targeted locations only to lead to its increase in others \citep{wunder2008we, alix2012forest, samii2014effects}. 
In the area of Uganda in which \cite{Jayachandran267} were researching, private forest owners cleared forest for either agricultural land or for timber sales into local markets. 
As such, the concern would be that forest conservation in targeted areas would cause the private forest owners to shift to clearing in other nearby forests.

\cite{Jayachandran267} originally estimated effects assuming no interference between villages, and only measure the outcome variable (forest cover) within the sampled village boundaries.  
They assessed the potential for leakage by studying whether the {\it beneficial} effects were larger in areas near forest reserves, with the idea being that these would be areas in which farmers would more easily shift forest clearing from their farmland to forest reserve land.
They did not find such a pattern. Jayachandran et al. also examined whether deforestation was higher in control villages that were near treated villages, and found no such pattern. 
This is essentially an ``exposure mapping'' approach, and its validity depends on proper specification of indirect exposure.

We use the methods above to conduct another analysis that also accounts for possible leakage into areas outside the sampled village boundaries. 
To do this, we construct a deforestation outcome variable using forest cover data from \citet{hansen2013high} for years 2012 and 2013. We code a pixel as deforested if a pixel goes from forest coverage rate greater than 25\% in 2012 to one that is below 25\% in 2013.  In Figure~\ref{fig:J-2017-buffers}, the dark spots indicate where deforestation happened. To construct the circle averages, we generate buffers around each of the village polygons.  The distance range for estimating the AME is set to run from 0 km to 15 km.

\begin{figure}[H]
\centering
\includegraphics[width=.42\textwidth]{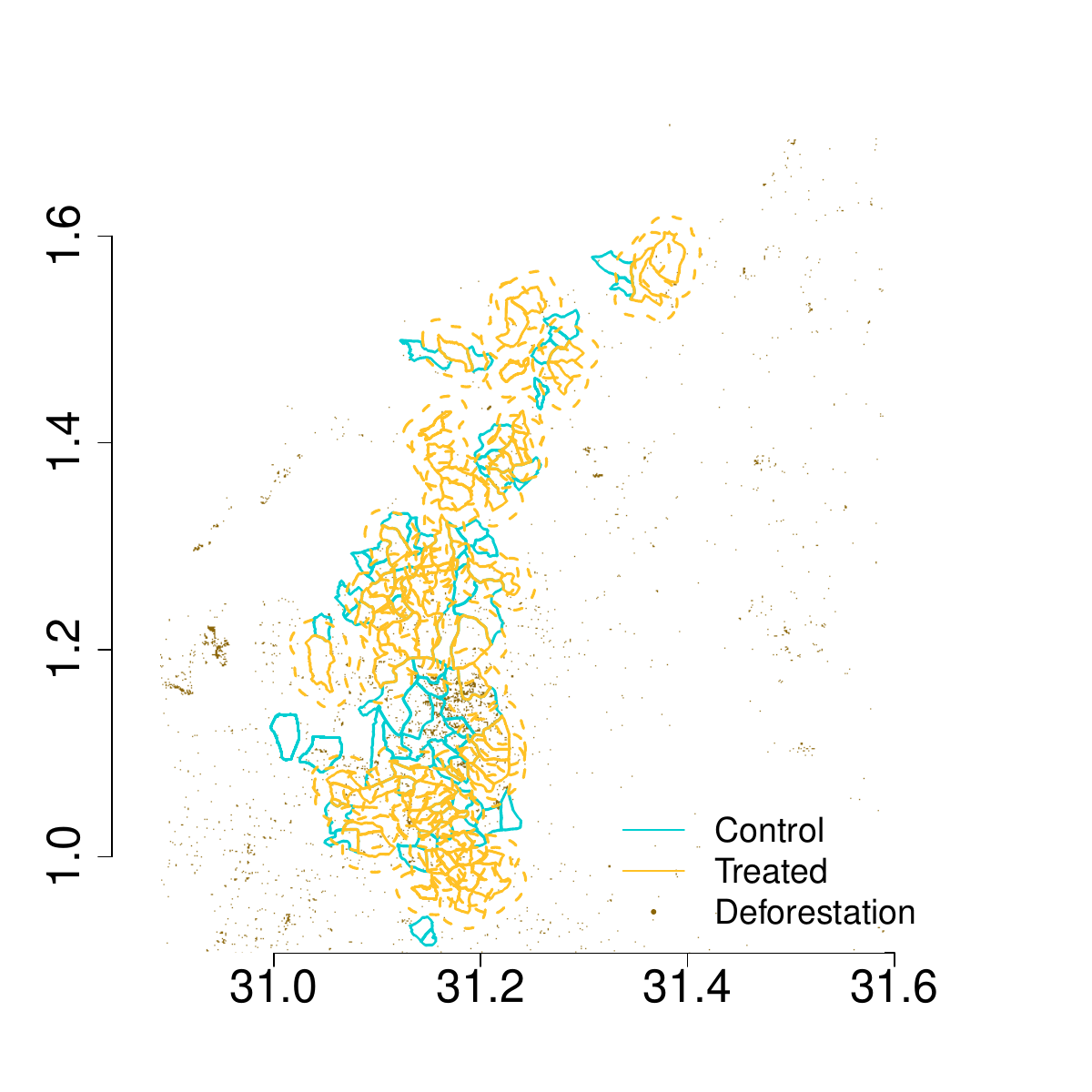}
\includegraphics[width=.42\textwidth]{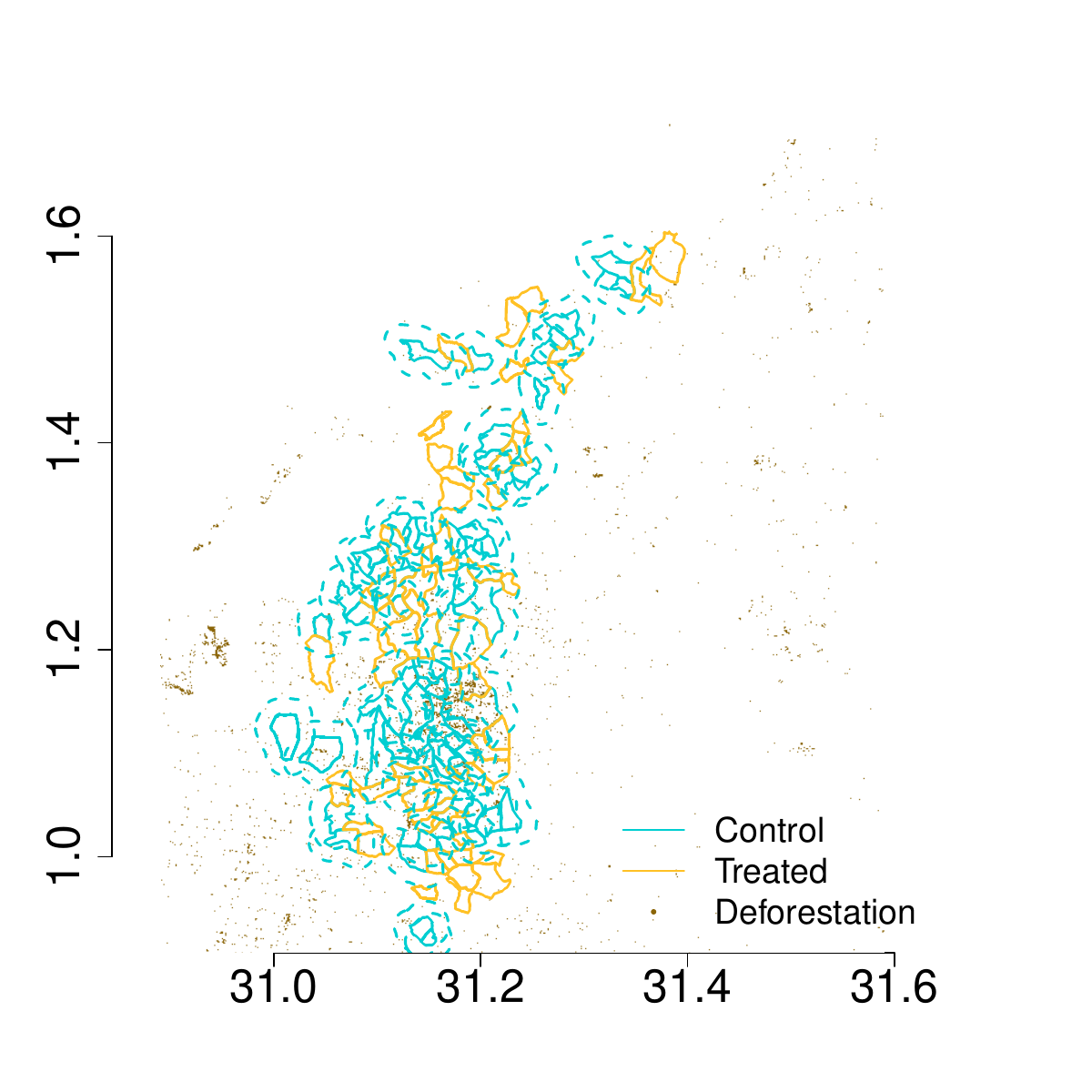}
\caption{Both plots show the boundary of the 121 villages in \citet{Jayachandran267}. Villages with a golden boundary are treated and those with a turquoise boundary are under control. Dark spots on the map represent deforestation during the experiment. The left plot shows buffers around each treated village and the right one shows buffers around each untreated village.}
\label{fig:J-2017-buffers}
\end{figure}

Figure \ref{fig:J-2017-estimates} displays results from the Hajek estimator and the smoothed Hajek estimator.\footnote{We use a triangular kernel with a bandwidth of 5km to construct the smoothed Hajek estimator.} 
The point estimates show decreased deforestation within the treated village boundaries (Distance = 0km), similar to the authors' original analysis.
Then, the estimates for distance values greater than 0 km capture the spatial spillovers. 
The issue that we seek to address is whether there is any indication of leakage---i.e., {\it increases} in deforestation within the vicinity of treated villages.
Given that we do not have a strong substantive basis to select a cutoff value for the spatial HAC variance estimator, we assess the robustness of inferences by considering a range of values (2km, 5km and 10km).\footnote{To be precise, for AME at distance value d, we set the bandwidth at 2$\tilde{d}$+2d, as discussed in Section \ref{Section:inference}.  We set $\tilde{d}=6$. }
We display two-sided 90\% confidence intervals.  
The results do not offer any indication of substantial leakage and the point estimates actually suggest some beneficial spillovers.

Alternatively, we  use a randomization test to evaluate the cumulative  effects between 0 km and 5 km.
The test statistic is the sum of the point Hajek estimates between 0 and 5 km. 
The estimated value is -0.034 (a 3.4 percentage point decline in deforested area). For the randomization test, we use 10000 random draws and the p-value is around 0.065. Thus we would reject with 90\% confidence of the sharp null hypothesis. These results indicate that the net gains from the intervention were indeed beneficial, with no indication of substantial leakage.

\begin{figure}[H]\centering
\includegraphics[width=0.4\textwidth]{./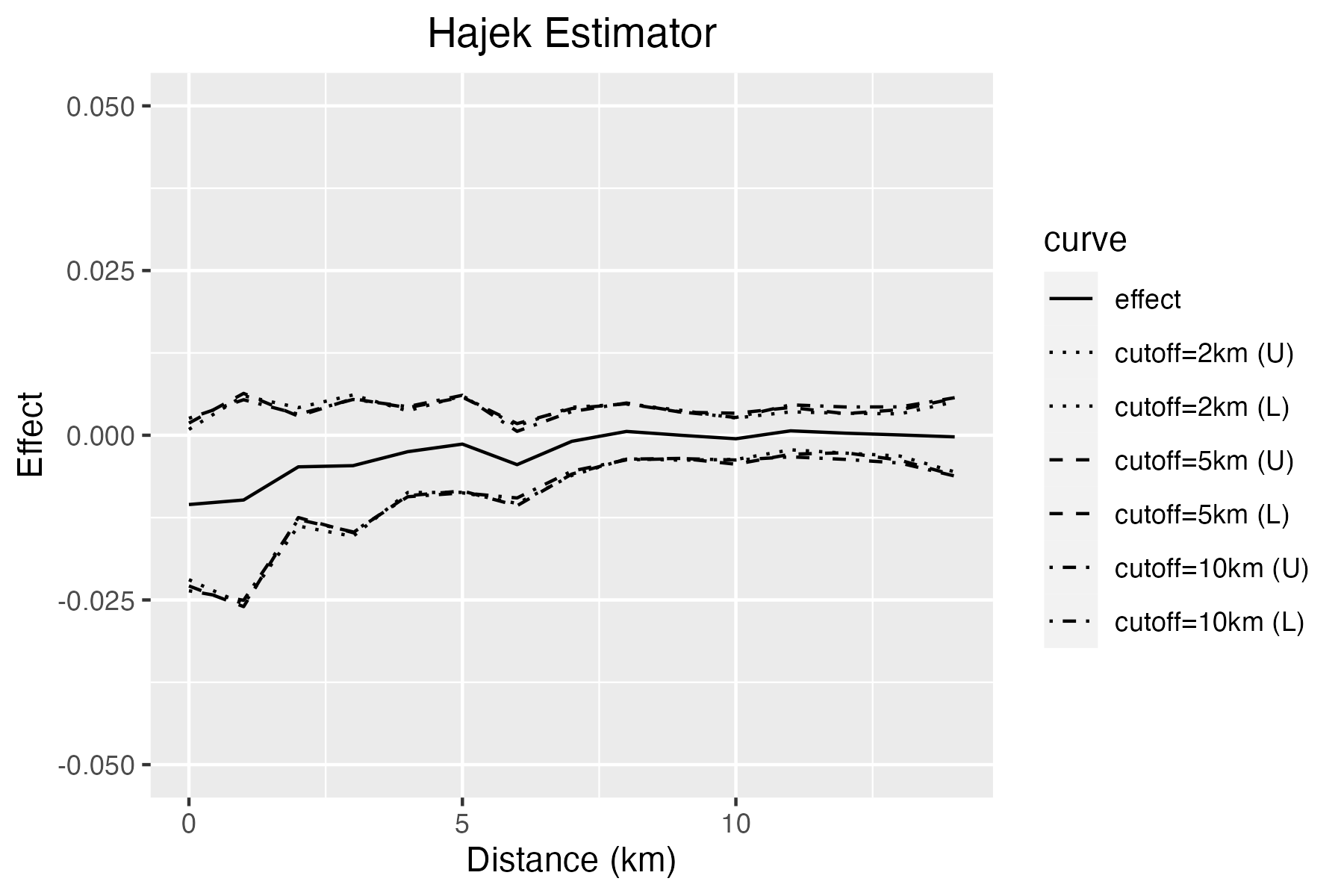}
\includegraphics[width=0.4\textwidth]{./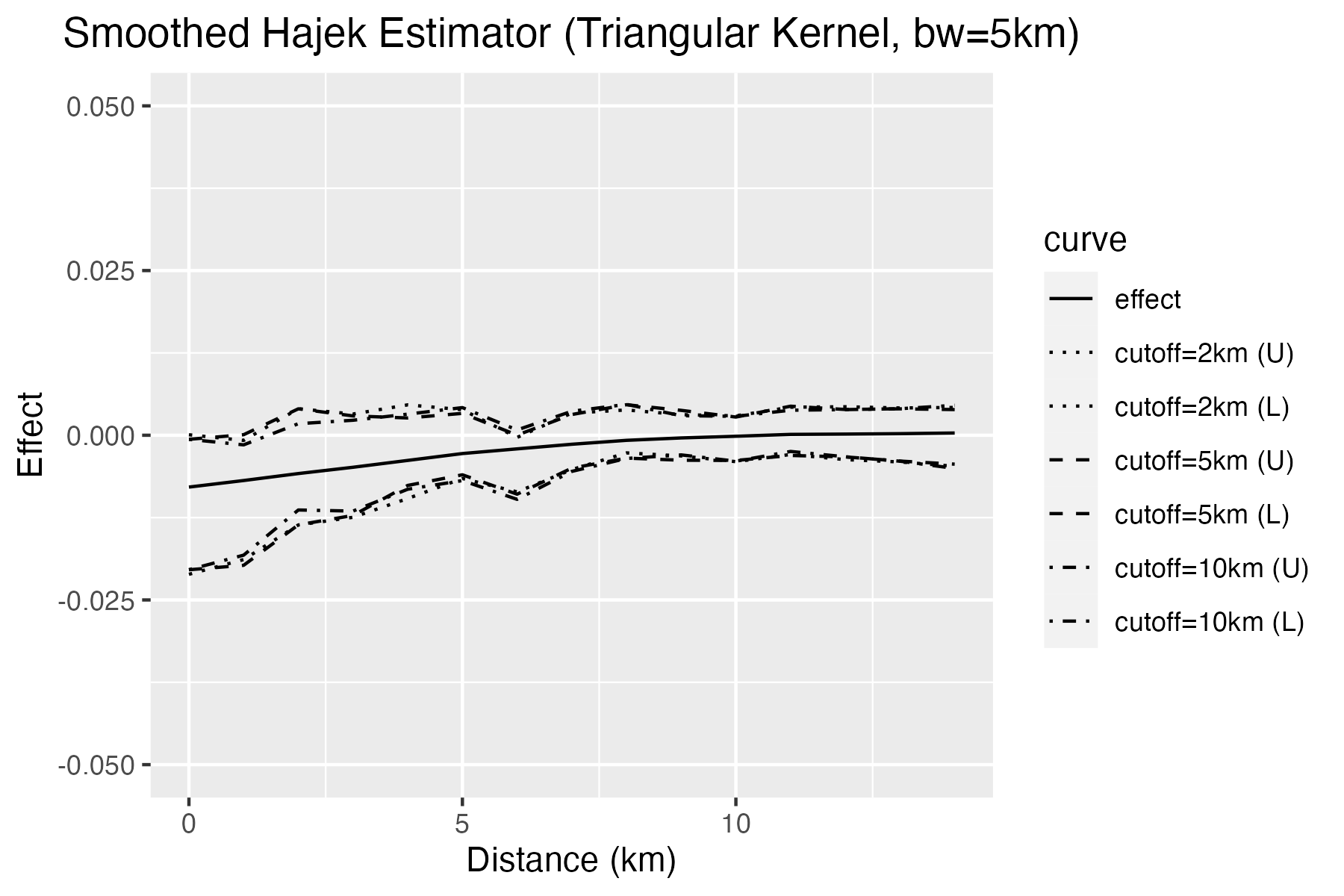}
\caption{Two plots present results based on \citep{Jayachandran267}. The top plot presents results for the Hajek estimator. The bottom plot presents results for the smoothed Hajek estimator with a triangular kernel and a bandwidth of 5km.  We plot pointwise two-sided 90\% intervals based on the positive-semidefinite HAC variance estimators and empirical degree of adjustments with three different cutoffs (2km, 5km, and 10km). (U) indicates the upper end of the interval and (L) indicates the lower end of the interval.
}
\label{fig:J-2017-estimates}
\end{figure}

\subsection{A forest conservation observational study \citep{ferraro2011conditions}}
In this section, we detail results based on \cite{ferraro2011conditions}, an observational study that scrutinized the efficacy of Costa Rica's protected areas in forest conservation. The outcome is measured at the level of parcels with a size of 3 hectares. Each parcel has a value of $1$ if deforestation occurred inside it before 1980. The parcels are incorporated in a raster object for analysis. Originally, the study compared deforestation in parcels within protected areas against those external, after matching them on observable characteristics. This approach neglects potential spillover effects on outer parcels proximate to protected areas. Our method helps to alleviate this concern. 

We construct the intervention nodes by converting the map of Costa Rica into a separate raster object. Tiles in this raster represent potential intervention nodes and are larger in size compared to parcels in which we measure the outcome. We first find all the tiles within the protected areas and those that are no more than 5 km away from the boundary of these areas. The former set of nodes are defined as treated and the latter as untreated. Then, we remove all the treated nodes that are not adjacent to untreated ones to ensure that the two sets are more comparable. We end up with 233 treated nodes and 522 untreated ones. The geographic distribution of outcomes and intervention node placement is visualized in the left panel of Figure \ref{fig:F-2015-map}. We estimate the probability for each node to be treated by running a logistic regression model of the treatment indicator on three covariates aggregated to the level of intervention nodes: soil quality, distance to the nearest road, and distance to the nearest large city. The distribution of the the propensity score estimates is presented in the right panel of Figure \ref{fig:F-2015-map}.

In the analysis, we generate buffers around each intervention node and construct circle averages accordingly. The distance range is set to run from 0 km and 20 km. The cutoff value is set at 2km, 5km or 10km. We present results from the Hajek estimator and the smoothed Hajek estimator in Figure \ref{fig:F-2015-estimates}. Similar to what we observe from the replication of \cite{Jayachandran267}, deforestation activities diminished within the protected areas and nearby areas. Notably, spillover effects wane with increasing distance from the boundaries of the protected areas, ceasing to be significant past 5 km. 

\begin{figure}[H]
\centering
\includegraphics[width=.4\textwidth]{./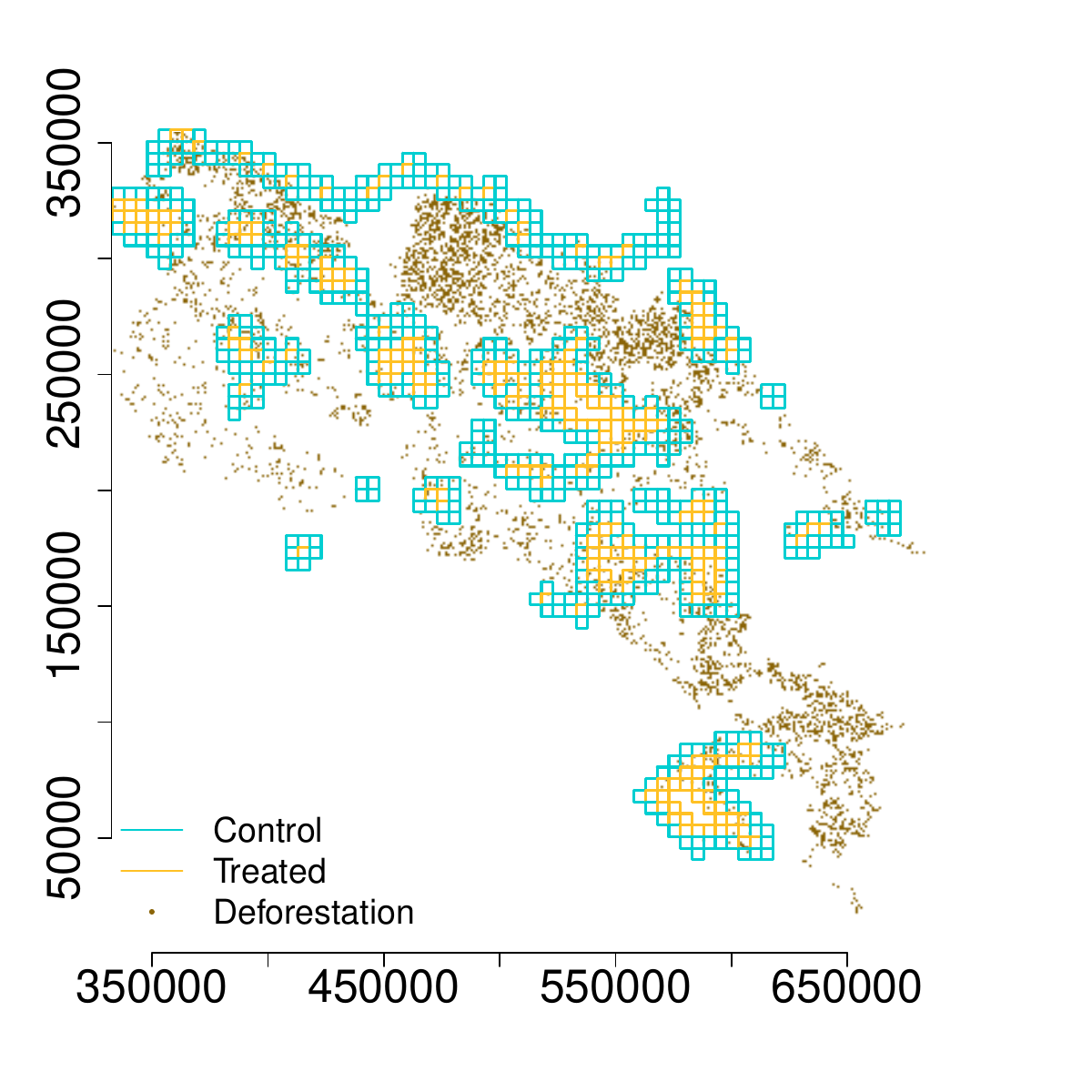}
\includegraphics[width=.4\textwidth]{./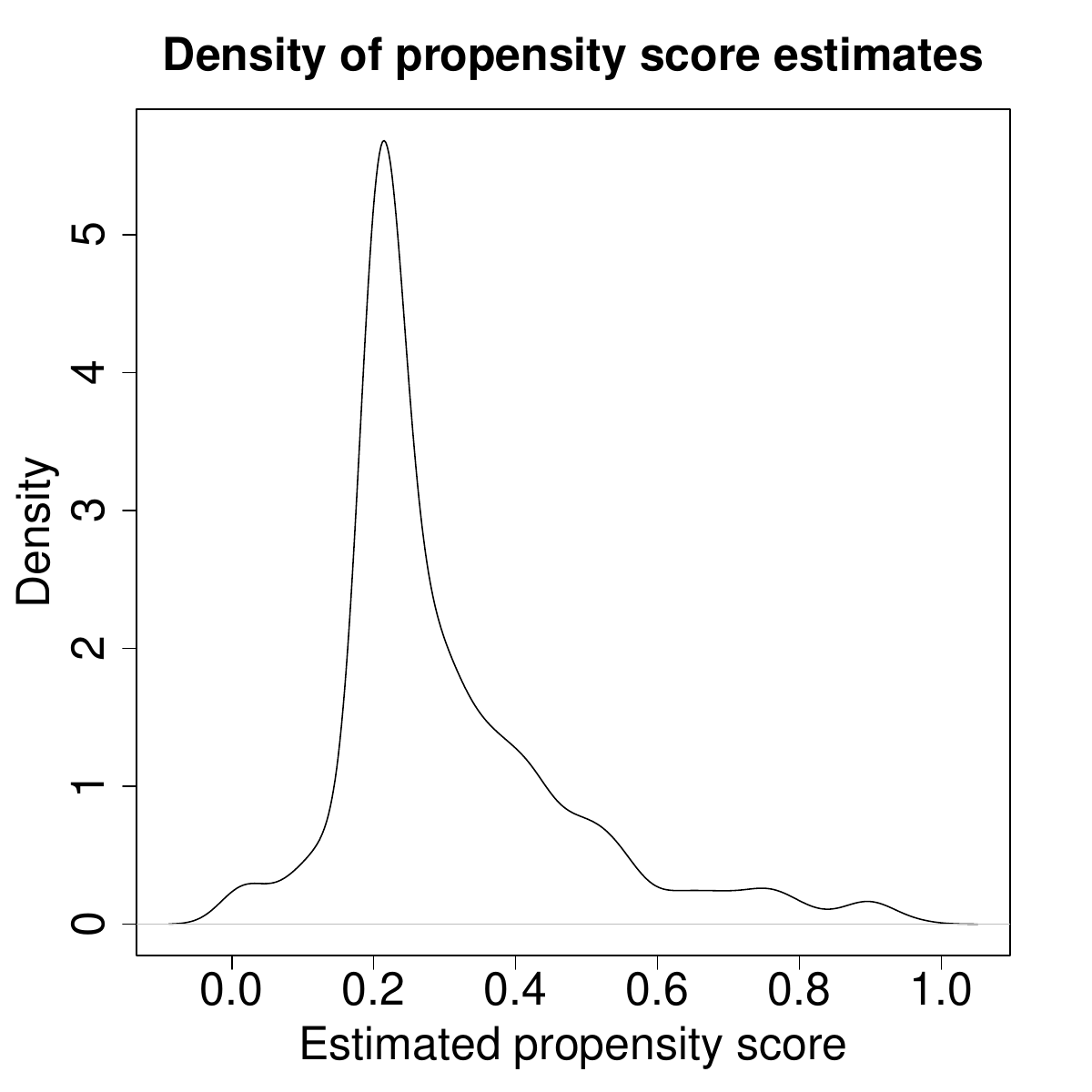}
\caption{The left plot shows the boundary of all the intervention nodes in our replication of \citet{ferraro2011conditions}. Tiles with a golden boundary are treated and those with a turquoise boundary are under control. Dark spots on the map represent the occurrence of deforestation in the outcome parcels. The right plot shows the distribution of the propensity score estimates based on logistic regression.}
\label{fig:F-2015-map}
\end{figure}

\begin{figure}[H]\centering
\includegraphics[width=0.4\textwidth]{./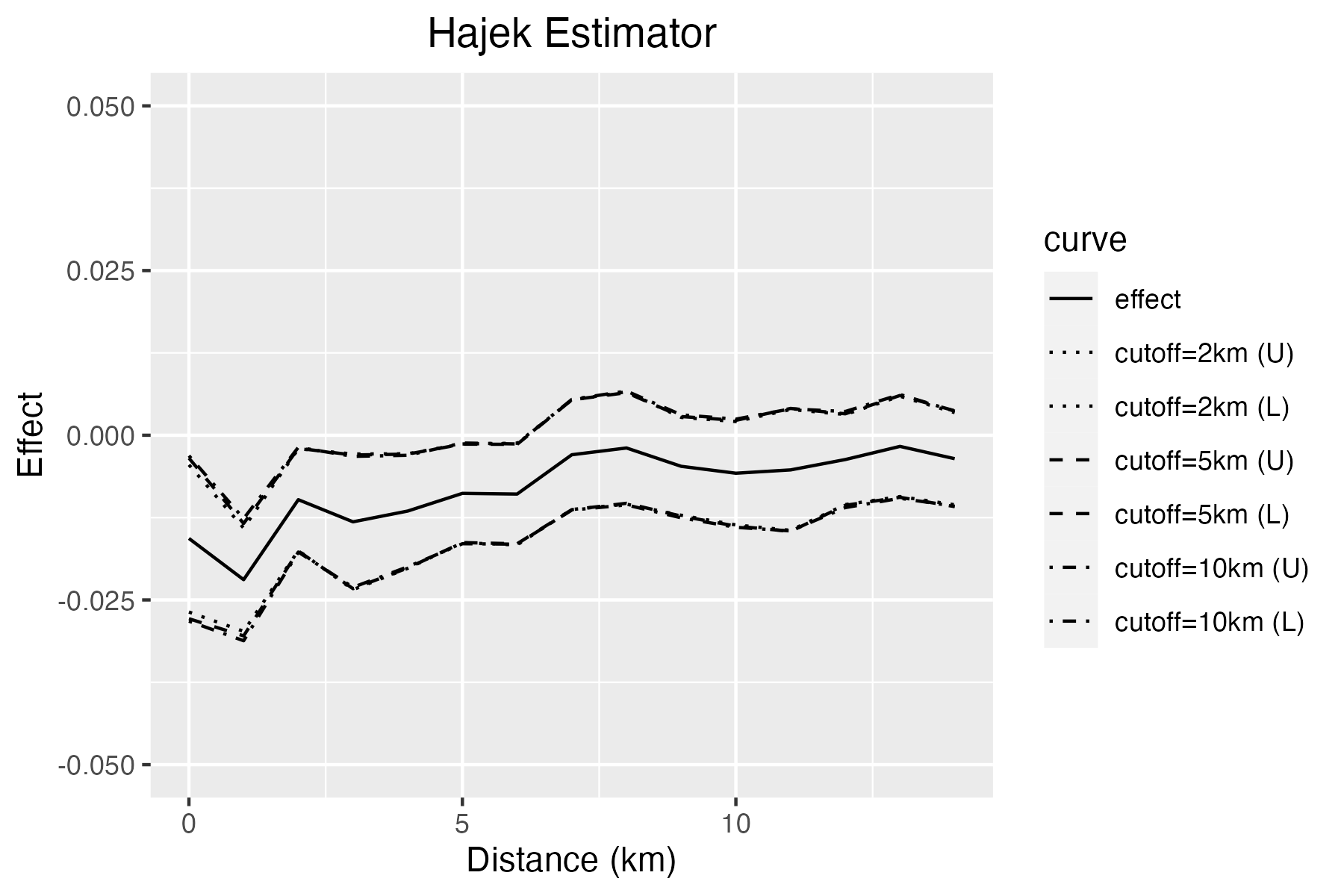}
\includegraphics[width=0.4\textwidth]{./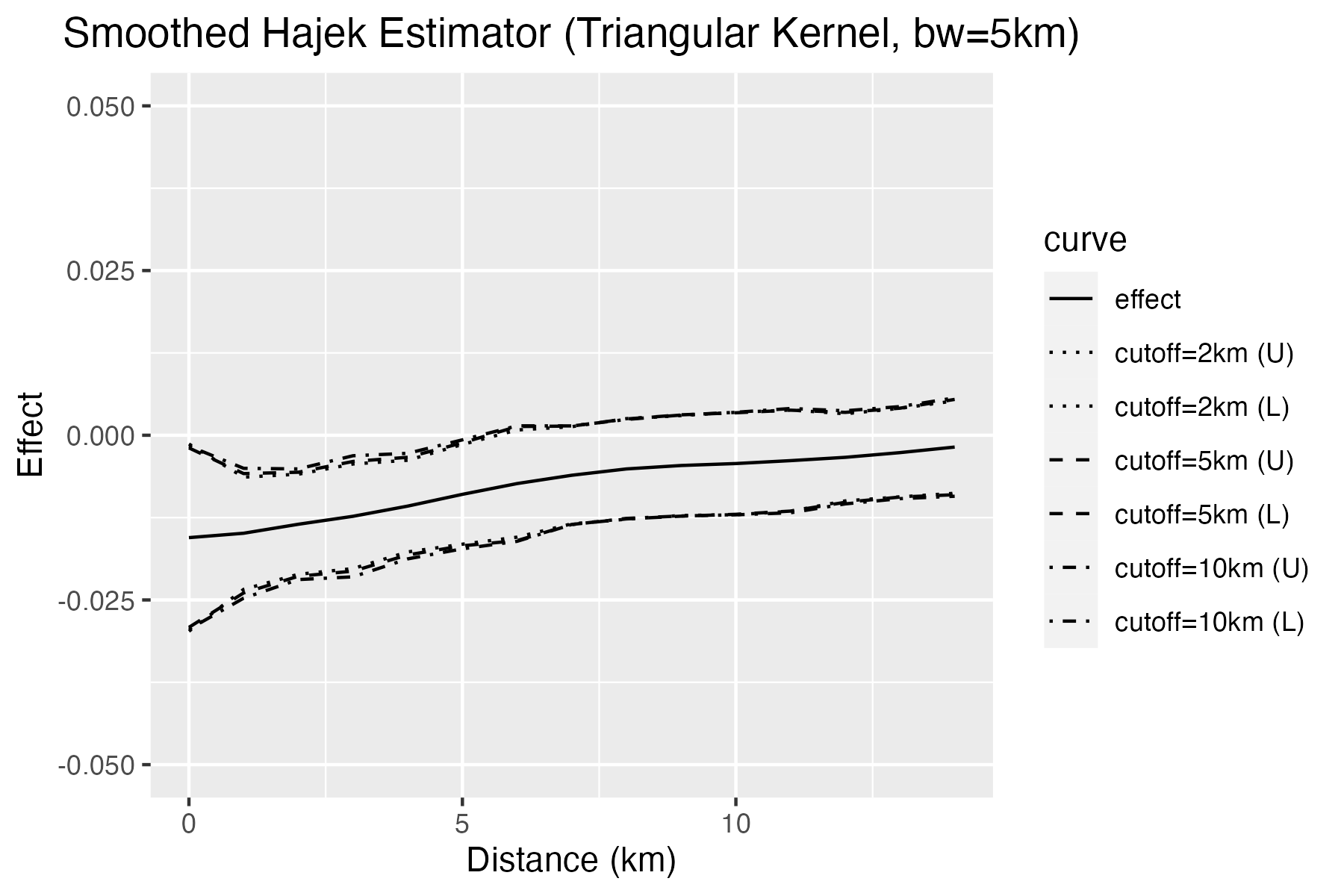}
\caption{Two plots present results based on \citep{ferraro2011conditions}. The top plot presents results from the Hajek estimator. The bottom plot presents results from the smoothed Hajek estimator with a triangular kernel and a bandwidth of 5km.  We plot pointwise two-sided 90\% intervals based on the positive semidefinite HAC variance estimators and empirical degree of adjustments with three different cutoffs (2km, 5km, and 10km). (U) indicates the upper end of the interval and (L) indicates the lower end of the interval.
}
\label{fig:F-2015-estimates}
\end{figure}

\section{Conclusion}
When treatments are applied at locations in space, the effects may bleed out and feed back in complex ways. As a result, outcomes at any point can depend on the distribution of treatments over the space, rather than on the treatment status of, e.g., the nearest intervention site.  
Such effects have important implications for policy.
For example, in approaches to forest conservation such ``payments for ecosystems services'' (PES) interventions, a primary concern is that gains in targeted areas are negated by ``leakage'' of negative spillover effects into non-targeted areas. 
To capture such effects, one needs to account for spatial interference. 
Standard approaches, which ignore such interference, yield conclusions about average policy impacts that may be unwarranted.

This paper explains how one can account for such interference in a randomized spatial experiment in which available information or knowledge is limited, and so we cannot confidently specify a parametric outcome model or non-parametric ``exposure mapping,'' nor can we be confident that interference is neatly contained within discrete geographical regions.
We show that even in this situation of ``unknown interference,'' we can still estimate a meaningful spatial effect---what we call the ``average marginalized effect'' (AME).  
The AME tells us what would happen, on average, if we switch an intervention node at a given distance into treatment, averaging over ambient effects emanating from other intervention nodes.  We can construct AME estimates for different distances, yielding a spatial effect curve.  The AME is identified under random assignment as a simple contrast.  Under restrictions on the spatial extent of interference, we can estimate the AME consistently and perform accurate inference using simple difference-in-means estimators and readily-available spatial standard error estimators.

We also develop extensions.  This includes specifying conditions under which the AME can be interpreted as a structural quantity that does not depend on the experimental design.  
We offer an approach for smoothing over distance, and explain how to test hypotheses on joint effects using Fisher-style randomization test under the sharp null.  

We illustrate our approach using simulation and applications to two real-world studies on forest conservation.  The examples show the soundness of our proposed methods but also point to areas for further research.  These include introducing methods to increase precision, for example, through covariate adjustment, variance estimation under less restrictive conditions, and inference for joint hypotheses. \\
\vspace{2em}
\noindent {\it Disclosure: The authors report there are no competing interests to declare.}

\bibliography{all}

\begin{thebibliography}{}

\bibitem[\protect\citeauthoryear{Alix-Garcia, Shapiro, and Sims}{Alix-Garcia et~al.}{2012}]{alix2012forest}
Alix-Garcia, J.~M., E.~N. Shapiro, and K.~R. Sims (2012).
\newblock Forest conservation and slippage: Evidence from mexico’s national payments for ecosystem services program.
\newblock {\em Land Economics\/}~{\em 88\/}(4), 613--638.

\bibitem[\protect\citeauthoryear{Andrews}{Andrews}{1984}]{andrews1984non}
Andrews, D.~W. (1984).
\newblock Non-strong mixing autoregressive processes.
\newblock {\em Journal of Applied Probability\/}~{\em 21\/}(4), 930--934.

\bibitem[\protect\citeauthoryear{Arbia}{Arbia}{2006}]{arbia2006spatial}
Arbia, G. (2006).
\newblock {\em Spatial econometrics: statistical foundations and applications to regional convergence}.
\newblock New York: Springer.

\bibitem[\protect\citeauthoryear{Aronow and Samii}{Aronow and Samii}{2017}]{aronow_samii2017_interference}
Aronow, P.~M. and C.~Samii (2017).
\newblock Estimating average causal effects under general interference, with application to a social network experiment.
\newblock {\em Annals of Applied Statistics\/}~{\em 11\/}(4), 1912--1947.

\bibitem[\protect\citeauthoryear{Bell and {McCaffrey}}{Bell and {McCaffrey}}{2002}]{bell_mccaffrey2002_variance}
Bell, R.~M. and D.~F. {McCaffrey} (2002).
\newblock Bias reduction in standard errors for linear regression with multi-stage samples.
\newblock {\em Survey Methodology\/}~{\em 28\/}(2), 169--181.

\bibitem[\protect\citeauthoryear{Chang}{Chang}{2023}]{chang2023design}
Chang, H. (2023).
\newblock Design-based estimation theory for complex experiments.
\newblock {\em arXiv preprint arXiv:2311.06891\/}.

\bibitem[\protect\citeauthoryear{Chen, Goldstein, and Shao}{Chen et~al.}{2010}]{chen2010normal}
Chen, L.~H., L.~Goldstein, and Q.-M. Shao (2010).
\newblock {\em Normal approximation by Stein’s method}.
\newblock Springer Science \& Business Media.

\bibitem[\protect\citeauthoryear{Conley}{Conley}{1999}]{conley99_spatial}
Conley, T.~G. (1999).
\newblock {GMM} estimation with cross sectional dependence.
\newblock {\em Journal of Econometrics\/}~{\em 92}, 1--45.

\bibitem[\protect\citeauthoryear{Cox}{Cox}{1958}]{cox58}
Cox, D.~R. (1958).
\newblock {\em Planning of Experiments}.
\newblock Wiley.

\bibitem[\protect\citeauthoryear{Darmofal}{Darmofal}{2015}]{darmofal2015-spatial-book}
Darmofal, D. (2015).
\newblock {\em Spatial Analysis for the Social Sciences}.
\newblock Cambridge: Cambridge University Press.

\bibitem[\protect\citeauthoryear{Davidson}{Davidson}{1994}]{davidson1994stochastic}
Davidson, J. (1994).
\newblock {\em Stochastic limit theory: An introduction for econometricians}.
\newblock OUP Oxford.

\bibitem[\protect\citeauthoryear{Davidson}{Davidson}{2020}]{davidson2020new}
Davidson, J. (2020).
\newblock A new consistency proof for hac variance estimators.
\newblock {\em Economics Letters\/}~{\em 186}, 108811.

\bibitem[\protect\citeauthoryear{Doukhan and Lang}{Doukhan and Lang}{2002}]{doukhan2002rates}
Doukhan, P. and G.~Lang (2002).
\newblock Rates in the empirical central limit theorem for stationary weakly dependent random fields.
\newblock {\em Statistical inference for stochastic processes\/}~{\em 5}, 199--228.

\bibitem[\protect\citeauthoryear{Ferraro, Hanauer, and Sims}{Ferraro et~al.}{2011}]{ferraro2011conditions}
Ferraro, P.~J., M.~M. Hanauer, and K.~R. Sims (2011).
\newblock Conditions associated with protected area success in conservation and poverty reduction.
\newblock {\em Proceedings of the National Academy of Sciences\/}~{\em 108\/}(34), 13913--13918.

\bibitem[\protect\citeauthoryear{Gao and Ding}{Gao and Ding}{2023}]{gao2023causal}
Gao, M. and P.~Ding (2023).
\newblock Causal inference in network experiments: regression-based analysis and design-based properties.
\newblock {\em arXiv preprint arXiv:2309.07476\/}.

\bibitem[\protect\citeauthoryear{Gibbons and Overman}{Gibbons and Overman}{2012}]{gibbons2012mostly}
Gibbons, S. and H.~G. Overman (2012).
\newblock Mostly pointless spatial econometrics?
\newblock {\em Journal of regional Science\/}~{\em 52\/}(2), 172--191.

\bibitem[\protect\citeauthoryear{Golgher and Voss}{Golgher and Voss}{2016}]{golgher2016interpret}
Golgher, A.~B. and P.~R. Voss (2016).
\newblock How to interpret the coefficients of spatial models: Spillovers, direct and indirect effects.
\newblock {\em Spatial Demography\/}~{\em 4}, 175--205.

\bibitem[\protect\citeauthoryear{Halloran and Struchiner}{Halloran and Struchiner}{1995}]{halloran1995causal}
Halloran, M.~E. and C.~J. Struchiner (1995).
\newblock Causal inference in infectious diseases.
\newblock {\em Epidemiology\/}, 142--151.

\bibitem[\protect\citeauthoryear{Hansen, Potapov, Moore, Hancher, Turubanova, Tyukavina, Thau, Stehman, Goetz, Loveland, et~al.}{Hansen et~al.}{2013}]{hansen2013high}
Hansen, M.~C., P.~V. Potapov, R.~Moore, M.~Hancher, S.~A. Turubanova, A.~Tyukavina, D.~Thau, S.~V. Stehman, S.~J. Goetz, T.~R. Loveland, et~al. (2013).
\newblock High-resolution global maps of 21st-century forest cover change.
\newblock {\em science\/}~{\em 342\/}(6160), 850--853.

\bibitem[\protect\citeauthoryear{Hirano, Imbens, and Ridder}{Hirano et~al.}{2003}]{hirano2003efficient}
Hirano, K., G.~W. Imbens, and G.~Ridder (2003).
\newblock Efficient estimation of average treatment effects using the estimated propensity score.
\newblock {\em Econometrica\/}~{\em 71\/}(4), 1161--1189.

\bibitem[\protect\citeauthoryear{Horn and Johnson}{Horn and Johnson}{2012}]{horn2012matrix}
Horn, R.~A. and C.~R. Johnson (2012).
\newblock {\em Matrix analysis}.
\newblock Cambridge university press.

\bibitem[\protect\citeauthoryear{Hu, Li, and Wager}{Hu et~al.}{2022}]{hu2022average}
Hu, Y., S.~Li, and S.~Wager (2022).
\newblock Average direct and indirect causal effects under interference.
\newblock {\em Biometrika\/}.

\bibitem[\protect\citeauthoryear{Hudgens and Halloran}{Hudgens and Halloran}{2008}]{hudgens_halloran08}
Hudgens, M.~G. and M.~E. Halloran (2008).
\newblock Toward causal inference with interference.
\newblock {\em Journal of the American Statistical Association\/}~{\em 103\/}(482), 832--842.

\bibitem[\protect\citeauthoryear{Imbens and Kolesar}{Imbens and Kolesar}{2012}]{imbens-kolesar2012-robust-small}
Imbens, G.~W. and M.~Kolesar (2012).
\newblock Robust standard errors in small samples: Some practical advice.
\newblock NBER Working Paper Series 14726.

\bibitem[\protect\citeauthoryear{Imbens and Rubin}{Imbens and Rubin}{2015}]{imbens_rubin15}
Imbens, G.~W. and D.~B. Rubin (2015).
\newblock {\em Causal Inference for Statistics, Social, and Biomedical Sciences: An Introduction}.
\newblock Cambridge: Cambridge University Press Press.

\bibitem[\protect\citeauthoryear{Jayachandran, de~Laat, Lambin, Stanton, Audy, and Thomas}{Jayachandran et~al.}{2017}]{Jayachandran267}
Jayachandran, S., J.~de~Laat, E.~F. Lambin, C.~Y. Stanton, R.~Audy, and N.~E. Thomas (2017).
\newblock Cash for carbon: A randomized trial of payments for ecosystem services to reduce deforestation.
\newblock {\em Science\/}~{\em 357\/}(6348), 267--273.

\bibitem[\protect\citeauthoryear{Jenish}{Jenish}{2016}]{jenish2016spatial}
Jenish, N. (2016).
\newblock Spatial semiparametric model with endogenous regressors.
\newblock {\em Econometric Theory\/}~{\em 32\/}(3), 714--739.

\bibitem[\protect\citeauthoryear{Jenish and Prucha}{Jenish and Prucha}{2009}]{jenish_prucha2009_clt_spatial_interaction}
Jenish, N. and I.~R. Prucha (2009).
\newblock Central limit theorems and uniform laws of large numbers for arrays of random fields.
\newblock {\em Journal of Econometrics\/}~{\em 150\/}(1), 86--98.

\bibitem[\protect\citeauthoryear{Jenish and Prucha}{Jenish and Prucha}{2012}]{jenish2012spatial}
Jenish, N. and I.~R. Prucha (2012).
\newblock On spatial processes and asymptotic inference under near-epoch dependence.
\newblock {\em Journal of econometrics\/}~{\em 170\/}(1), 178--190.

\bibitem[\protect\citeauthoryear{Kelejian and Piras}{Kelejian and Piras}{2017}]{kelejian_piras2017}
Kelejian, H. and G.~Piras (2017).
\newblock {\em Spatial Econometrics}.
\newblock New York: Elsevier.

\bibitem[\protect\citeauthoryear{Kelejian and Prucha}{Kelejian and Prucha}{2007}]{kelejian2007hac}
Kelejian, H.~H. and I.~R. Prucha (2007).
\newblock Hac estimation in a spatial framework.
\newblock {\em Journal of Econometrics\/}~{\em 140\/}(1), 131--154.

\bibitem[\protect\citeauthoryear{Leung}{Leung}{2022}]{leung2022rate}
Leung, M.~P. (2022).
\newblock Rate-optimal cluster-randomized designs for spatial interference.
\newblock {\em The Annals of Statistics\/}~{\em 50\/}(5), 3064--3087.

\bibitem[\protect\citeauthoryear{Leung}{Leung}{2023}]{leung2023design}
Leung, M.~P. (2023).
\newblock Design of cluster-randomized trials with cross-cluster interference.
\newblock {\em arXiv preprint arXiv:2310.18836\/}.

\bibitem[\protect\citeauthoryear{Li and Wager}{Li and Wager}{2022}]{li2022random}
Li, S. and S.~Wager (2022).
\newblock Random graph asymptotics for treatment effect estimation under network interference.
\newblock {\em The Annals of Statistics\/}~{\em 50\/}(4), 2334--2358.

\bibitem[\protect\citeauthoryear{Newey and McFadden}{Newey and McFadden}{1994}]{newey1994large}
Newey, W.~K. and D.~McFadden (1994).
\newblock Large sample estimation and hypothesis testing.
\newblock {\em Handbook of econometrics\/}~{\em 4}, 2111--2245.

\bibitem[\protect\citeauthoryear{Newey and West}{Newey and West}{1987}]{neweywest1987}
Newey, W.~K. and K.~D. West (1987).
\newblock A simple, positive semi-definite, heteroskedasticity and autocorrelation consistent covariance matrix.
\newblock {\em Econometrica\/}~{\em 55\/}(3), 703--708.

\bibitem[\protect\citeauthoryear{Ogburn, Sofrygin, Diaz, and van~der Laan}{Ogburn et~al.}{2020}]{ogburn2017causal}
Ogburn, E.~L., O.~Sofrygin, I.~Diaz, and M.~J. van~der Laan (2020).
\newblock Causal inference for social network data.
\newblock {\em arXiv preprint arXiv:1705.08527\/}.

\bibitem[\protect\citeauthoryear{Papadogeorgou, Imai, Lyall, and Li}{Papadogeorgou et~al.}{2020}]{papadogeorgou2020causal}
Papadogeorgou, G., K.~Imai, J.~Lyall, and F.~Li (2020).
\newblock Causal inference with spatio-temporal data: estimating the effects of airstrikes on insurgent violence in iraq.
\newblock {\em arXiv preprint arXiv:2003.13555\/}.

\bibitem[\protect\citeauthoryear{Reich, Yang, Guan, Giffin, Miller, and Rappold}{Reich et~al.}{2021}]{reich2021review}
Reich, B.~J., S.~Yang, Y.~Guan, A.~B. Giffin, M.~J. Miller, and A.~Rappold (2021).
\newblock A review of spatial causal inference methods for environmental and epidemiological applications.
\newblock {\em International Statistical Review\/}~{\em 89\/}(3), 605--634.

\bibitem[\protect\citeauthoryear{Ross et~al.}{Ross et~al.}{2011}]{ross2011fundamentals}
Ross, N. et~al. (2011).
\newblock Fundamentals of {Stein's} method.
\newblock {\em Probability Surveys\/}~{\em 8}, 210--293.

\bibitem[\protect\citeauthoryear{Samii, Lisiecki, Kulkarni, Paler, Chavis, Snilstveit, Vojtkova, and Gallagher}{Samii et~al.}{2014}]{samii2014effects}
Samii, C., M.~Lisiecki, P.~Kulkarni, L.~Paler, L.~Chavis, B.~Snilstveit, M.~Vojtkova, and E.~Gallagher (2014).
\newblock Effects of payment for environmental services (pes) on deforestation and poverty in low and middle income countries: a systematic review.
\newblock {\em Campbell Systematic Reviews\/}~{\em 10\/}(1), 1--95.

\bibitem[\protect\citeauthoryear{S\"avje, Aronow, and Hudgens}{S\"avje et~al.}{2018}]{savje-etal2018-unknown-interference}
S\"avje, F., P.~M. Aronow, and M.~G. Hudgens (2018).
\newblock Average treatment effects in the presence of unknown interference.
\newblock {\em arXiv:1711.06399 [math.ST]\/}.

\bibitem[\protect\citeauthoryear{S{\"a}vje, Aronow, and Hudgens}{S{\"a}vje et~al.}{2021}]{savje2021average}
S{\"a}vje, F., P.~M. Aronow, and M.~G. Hudgens (2021).
\newblock Average treatment effects in the presence of unknown interference.
\newblock {\em The Annals of Statistics\/}~{\em 49\/}(2), 673--701.

\bibitem[\protect\citeauthoryear{VanderWeele and Tchetgen}{VanderWeele and Tchetgen}{2011}]{vanderweele2011effect}
VanderWeele, T.~J. and E.~J.~T. Tchetgen (2011).
\newblock Effect partitioning under interference in two-stage randomized vaccine trials.
\newblock {\em Statistics \& probability letters\/}~{\em 81\/}(7), 861--869.

\bibitem[\protect\citeauthoryear{Wunder}{Wunder}{2008}]{wunder2008we}
Wunder, S. (2008).
\newblock How do we deal with leakage.
\newblock {\em Moving ahead with REDD: issues, options and implications\/}~{\em 1}, 65--75.

\bibitem[\protect\citeauthoryear{Young}{Young}{2015}]{young-2015-improved-inference}
Young, A. (2015).
\newblock Improved, nearly exact, statistical inference with robust and clustered covariance matrices using effective degrees of freedom corrections.
\newblock Unpublished Manuscript, London School of Economics.

\bibitem[\protect\citeauthoryear{Zigler and Papadogeorgou}{Zigler and Papadogeorgou}{2018}]{zigler-papadogeorgou2018-bipartite}
Zigler, C.~M. and G.~Papadogeorgou (2018).
\newblock Bipartite causal inference with interference.
\newblock {\em arXiv:1807.08660 [stat.ME]\/}.

\end{thebibliography}

\clearpage
\appendix
{\LARGE Appendix to {\it Design-Based Inference for Spatial Experiments under Unknown Interference}}

\DoToC

\section{Analytical Results}

This section contains technical results discussed in the paper. It includes the proof of Propositions \ref{prop:identification}, \ref{prop:ht-asymptotics},  \ref{prop:hajek-asymptotics}, and \ref{prop:inference}. The proof of Proposition \ref{prop:identification} is in Section \ref{section:prop1}, proofs of \ref{prop:ht-asymptotics} and \ref{prop:hajek-asymptotics} are in Section \ref{Section:Normality}, and results on HAC variance estimator are included in Section \ref{Section:HAC}.   

We first prove the unbiasedness of the Horvitz-Thompson estimator for the AME. We then characterize the variance of the Horvitz-Thompson estimator and the asymptotic variance of the Hajek estimator. Asymptotic normality follows from Lemma 1 and Lemma 2 in \cite{ogburn2017causal}. Finally, we show that the spatial HAC standard errors estimator estimates a quantity that is probably larger than the true asymptomatic variance under C\ref{assn:homo}, enabling conservative Wald-inference.

To reduce the complexity of notations, we make the following simplifications that will be used throughout the appendices:
	\begin{itemize}
		\item Without noted otherwise, the expectation is always taken over random assignments $\Z$.
		\item We write $\E[\mu_i\left( \Y\left(\Z\right);d\right)|Z_i=1]=\E[\muit]$ and $\E[\mu_i\left( \Y\left(\Z\right);d\right)|Z_i=0]=\E[\muic]$.
		\item  We write $\E[\mu_i\left( \Y\left(\Z\right);d\right)|Z_i=a,Z_j=b]=\E[\muiab]$ for $a,b\in\{0,1\}$.
		\item We write $\sum_{i=1}^N\sum_{j\in \mathcal{B}(i;d)}$ as $\sum_{i;j\in\mathcal{B}(i;d)}$.
\end{itemize}

\subsection{Results on Horvitz-Thompson Estimator}

The following lemma is useful.
\begin{lemma}\label{lemma:1}
	For any function $f: \{0,1\}^N\to \mathbb{R}$ and assuming  $C\ref{assn:bern-des}$, we have:
	\begin{description}
		\item[C\ref{assn:bern-des}.1] $\E\left[Z_i^k f(\mathbf{Z}) \right] = p\E\left[f(1, \mathbf{Z}_{-i}) \right]$ for any positive integer $k$.
		\item[C\ref{assn:bern-des}.2] $\E\left[Z_i^k Z_j^l f(\mathbf{Z}) \right] = p^2\E\left[f(1, 1, \mathbf{Z}_{-(i,j)}) \right]$ for any postive integers $k$ and $l$.
	\end{description}
\end{lemma}
\begin{proof}
	By Law of Iterated Expectations.
\end{proof}

\subsubsection{Proof for Proposition \ref{prop:identification}}\label{section:prop1}
\begin{proof}The Horvitz-Thompson estimator is defined in (\ref{eq:ht}) and the AME is defined in (\ref{def:AME}). We have:
\begin{align*}
&	\textnormal{AME}(d;\eta)   = \frac{1}{N}\sum_{i=1}^N \mu_i(1; d, \eta) - \frac{1}{N}\sum_{i=1}^N \mu_i(0; d, \eta) \\
& = \frac{1}{Np}\sum_{i=1}^N p \E[\mu_i(\Y; d) | Z_i=1 ]- \frac{1}{N(1-p)}\sum_{i=1}^N(1-p)  \E[\mu_i(\Y; d) | Z_i=0 ]\\
& = \frac{1}{Np}\sum_{i=1}^N \E[Z_i \mu_i(\Y; d) ] - \frac{1}{N(1-p)}\sum_{i=1}^N\E[(1-Z_i)\mu_i(\Y; d) ]\\
& = \E\left[ \frac{1}{Np}\sum_{i=1}^N Z_i \mu_i(\Y; d) - \frac{1}{N(1-p)}\sum_{i=1}^N (1-Z_i) \mu_i(\Y; d)\right]= \E[\widehat{\tau}_{\HT}(d) ],
\end{align*}
where the second equality uses the definition, and the third equality follows from Lemma \ref{lemma:1}.
\end{proof}

\subsubsection{Variance Characterization}
We now characterize the variance of the Horvitz-Thompson estimator. 
\begin{lemma}\label{lemma:ht-var}
	Under conditions C\ref{assn:bern-des}-C\ref{assn:local-inf}, the variance of estimator $\widehat{\tau}_{\HT}(d)$ is bounded as follows:
	\begin{align*}
		& \Var\left(\widehat{\tau}_{\HT}(d) \right) 
		\leq  \frac{1}{N^2p} \sum_{i=1}^N \E\left[\muit^2\right] + \frac{1}{N^2(1-p)} \sum_{i=1}^N \E\left[ \muic^2 \right] \\
		& + \frac{1}{N^2}\sumij\sum_{a=0}^1\sum_{b=0}^1 (-1)^{a+b}\Big\{ \E\left[\muiab\mujab  \right] - \E\left[\muia\right] \E\left[\mujb\right]\Big\},
	\end{align*}
	and, in addition under C\ref{assn:spacing}, we have that
	\begin{align*}
		\Var\left(\widehat{\tau}_{\HT}(d) \right) = O\left(\frac{1}{N}\right).
	\end{align*}
\end{lemma}

\begin{proof}
	Using the expression of the Horvitz-Thompson estimator, we have:
	\begin{align*}
		& \Var\left(\widehat{\tau}_{\HT}(d)  \right) =\frac{1}{N^2} \Var\left[\sum_{i=1}^N \left(\frac{Z_i}{p} - \frac{1-Z_i}{1-p}\right)\mu_i(\Y; d) \right] \\
		& =  \frac{1}{N^2} \sum_{i=1}^N \Var\left[ \left(\frac{Z_i}{p} - \frac{1-Z_i}{1-p}\right)\mu_i(\Y; d) \right] \\
		& + \frac{1}{N^2} \sum_{i=1}^N \sum_{j \neq i} \Cov\left[ \left(\frac{Z_i}{p} - \frac{1-Z_i}{1-p}\right)\mu_i(\Y; d), \left(\frac{Z_j}{p} - \frac{1-Z_j}{1-p}\right)\mu_j(\Y; d) \right] \\
		= & \frac{1}{N^2} \sum_{i=1}^N \E\left[\left( \left(\frac{Z_i}{p} - \frac{1-Z_i}{1-p}\right)\mu_i(\Y; d) \right)^2\right] - \frac{1}{N^2} \sum_{i=1}^N \left(\E\left[ \left(\frac{Z_i}{p} - \frac{1-Z_i}{1-p}\right)\mu_i(\Y; d) \right] \right)^2\\
		& + \frac{1}{N^2} \sum_{i=1}^N \sum_{j \neq i} \Cov\left[ \frac{Z_i}{p}\mu_i(\Y; d), \frac{Z_j}{p} \mu_j(\Y; d) \right] \\
		& - \frac{1}{N^2} \sum_{i=1}^N \sum_{j \neq i} \Cov\left[ \frac{Z_i}{p}\mu_i(\Y; d), \frac{1-Z_j}{1-p} \mu_j(\Y; d) \right] \\
		& - \frac{1}{N^2} \sum_{i=1}^N \sum_{j \neq i} \Cov\left[ \frac{1-Z_i}{1-p}\mu_i(\Y; d), \frac{Z_j}{p} \mu_j(\Y; d) \right] \\
		& + \frac{1}{N^2} \sum_{i=1}^N \sum_{j \neq i} \Cov\left[ \frac{1-Z_i}{1-p}\mu_i(\Y; d), \frac{1-Z_j}{1-p} \mu_j(\Y; d) \right].
	\end{align*}
	
	We further expand the first two terms in the above expression:
	\begin{align*}
		& \frac{1}{N^2} \sum_{i=1}^N \E\left[\left( \left(\frac{Z_i}{p} - \frac{1-Z_i}{1-p}\right)\mu_i(\Y; d) \right)^2\right] - \frac{1}{N^2} \sum_{i=1}^N \E\left[ \left(\frac{Z_i}{p} - \frac{1-Z_i}{1-p}\right)\mu_i(\Y; d) \right]^2\\
		= & \frac{1}{N^2} \sum_{i=1}^N \E\left[ \frac{Z_i^2}{p^2} \mu_i^2(\Y; d) \right] + \frac{1}{N^2} \sum_{i=1}^N \E\left[ \frac{(1-Z_i)^2}{(1-p)^2} \mu_i^2(\Y); d) \right] \\
		& - \frac{1}{N^2} \sum_{i=1}^N \E^2\left[ \left(\frac{Z_i}{p} - \frac{1-Z_i}{1-p}\right)\mu_i(\Y; d) \right] \\
		= & \frac{1}{N^2p} \sum_{i=1}^N \E\left[\muit^2\right] + \frac{1}{N^2(1-p)} \sum_{i=1}^N \E\left[ \muic^2\right] - \frac{1}{N^2} \sum_{i=1}^N \left(\E\left[\muit - \muic \right]\right)^2 \\
		\leq &  \frac{1}{N^2p} \sum_{i=1}^N \E\left[\muit^2 \right] + \frac{1}{N^2(1-p)} \sum_{i=1}^N \E\left[ \muic^2\right].
	\end{align*}
	We also have:
	\begin{align*}
		\frac{1}{Np} \sum_{i=1}^N \E\left[\muit^2\right] + \frac{1}{N^2(1-p)} \sum_{i=1}^N \E\left[ \muic^2\right] - \frac{1}{N^2} \sum_{i=1}^N \left(\E\left[\muit - \muic \right]\right)^2= O\left(1\right),
	\end{align*}
	since C\ref{assn:bounded-y} implies that all the moments are bounded. Next, we examine the first covariance term, which equals
	\begin{align*}
		& \frac{1}{N^2} \sum_{i=1}^N \sum_{j \neq i} \Cov\left[ \frac{Z_i}{p}\mu_i(\Y(\Z); d), \frac{Z_j}{p} \mu_j(\Y(\Z); d) \right] = \frac{1}{N^2}\sumij\Cov\left[ \frac{Z_i}{p}\mu_i(\Y(\Z); d), \frac{Z_j}{p} \mu_j(\Y(\Z); d) \right] \\
		= & \frac{1}{N^2} \sumij\E\left[\muitt\mujtt\right]- \frac{1}{N^2}\sumij\E\left[\muit \right]\E\left[\mujt \right].
	\end{align*}
	The first equality holds because of assumption C3 on local interference. Moreover,
	\begin{align*}
		\frac{1}{N^2} \sumij\E\left[\muitt\mujtt\right]- \frac{1}{N^2}\sumij\E\left[\muit \right]\E\left[\mujt \right]= O\left(1\right),
	\end{align*}
	since $|\mathcal{B}(i;d)|$ is bounded by (C4a). Other covariance terms have similar forms. We obtain the bound of the variance and its convergence rate by combining these terms together. 
\end{proof}
\subsubsection{Unbiasedness of the HT estimator in observaitional studies}\label{Section:HTobsproof}

\begin{proof}
	Note that in our setup the only randomness comes from the random assignment. Hence functions of the covariates are considered fixed.  In the scenario of observational studies and known propensity scores, we have
	\begin{align*}
		&	\textnormal{AME}(d;\eta)  = \frac{1}{N}\sum_{i=1}^N \mu_i(1; d, \eta) - \frac{1}{N}\sum_{i=1}^N \mu_i(0; d, \eta) \\
		& = \frac{1}{N}\sum_{i=1}^N \E\left[ \frac{p(\C) }{p(\C) }\mu_i(\Y(\mathbf{Z}); d)|Z_i=1\right] - \frac{1}{N}\sum_{i=1}^N \E\left[\frac{1-p(\C) }{1-p(\C) }\mu_i(\Y( \mathbf{Z}); d)|Z_i=0\right]\\
		& = \frac{1}{N}\sum_{i=1}^N\frac{p(\C) }{p(\C) } \E[Z_i\mu_i(\Y(\mathbf{Z}); d) | Z_i=1] -  \frac{1}{N}\sum_{i=1}^N\frac{1-p(\C) }{1-p(\C) } \E[(1-Z_i)\mu_i(\Y(\mathbf{Z}); d) | Z_i=0]   \\
		& = \E\left[ \frac{1}{N}\sum_{i=1}^N \frac{Z_i}{p(\C) } \mu_i(\Y; d) - \frac{1}{N}\sum_{i=1}^N \frac{1-Z_i}{p(\C)} \mu_i(\Y; d)\right]
	\end{align*}
	The fourth equality uses C\ref{assn:assignment_obsstudy}-(ii) and the law of total expectations.
\end{proof}


\subsection{Results on Hajek Estimator}


\subsubsection{Linearization and Asymptotic Variance Characterization}
We derive the limiting variance of the Hajek estimator using linearization technique.
\begin{lemma}\label{lemma:ha-var}
Consider the estimator $\widehat{\tau}_{\HA}(d)$ defined in (\ref{eq:hajek}). It has the following asymptotic linear expansion:
\begin{align*}\label{eq:ha-taylor}
	& \widehat{\tau}^{\Taylor}_{\HA}(d) =\textnormal{AME}(d;\eta) + \frac{1}{Np}\sum_{i=1}^N Z_i( \mu_i(\Y(\Z); d) - \bar{\mu}^1(d))  -  \frac{1}{Np}\sum_{i=1}^N (1-Z_i)(\mu_i(\Y(\Z);d)-\bar{\mu}^0(d)).
\end{align*}
Such an expansion satisfies $\sqrt{N}( \widehat{\tau}^{\Taylor}_{\HA}(d)-\widehat{\tau}_{\HA}(d))=o_p(1)$.
\end{lemma}

\begin{proof}
Denote $\widehat{\mu}^1(d)=\frac{1}{Np}\sum_{i=1}^N Z_i \mu_i(\Y(\Z); d)$, $\widehat{\mu}^0(d)=\frac{1}{N(1-p)}\sum_{i=1}^N (1-Z_i) \mu_i(\Y(\Z); d)$, $\widehat{N}_1=\frac{\sum_{i=1}^N Z_i}{Np}$, $\widehat{N}_0=\frac{\sum_{i=1}^N (1-Z_i)}{N(1-p)}$, and $\W = (\widehat{\mu}^1(d), \widehat{\mu}^0(d), \widehat{N}_1, \widehat{N}_0)$.  Further define $\bar{\mu}^1(d) = \frac{1}{N}\sum_{i=1}^N \E \left[\muit\right]$ and $\bar{\mu}^0(d) = \frac{1}{N}\sum_{i=1}^N \E \left[\muic\right]$.

We know that $\E \left[\widehat{\mu}^1(d) \right] = \bar{\mu}^1(d)$, $\E \left[\widehat{\mu}^0(d) \right] = \bar{\mu}^0(d)$, $\E[\widehat{N}_1] = \E[\widehat{N}_0] =1$. Thus, $\E[ \W] = (\bar{\mu}^1(d), \bar{\mu}^0(d)), 1, 1)$. Define $f(w)=f(a, b, c, d) = \frac{a}{c} - \frac{b}{d}$. Then the Hajek estimator can be written as $ f(\W)=f(\widehat{\mu}^1(d), \widehat{\mu}^0(d), \widehat{N}_1, \widehat{N}_0)$.

With probability approaching $1$, we have the following Taylor expansion of the Hajek estimator:\footnote{ $\nabla f(\E[\W])$ is the gradient of function $f$ evaluated at $\E[\W]$. $||\cdot||_2$ denotes the vector $l_2$ norm.}
\begin{align*}
\widehat{\tau}_{\HA}(d) = f(\W)= f(\E[ \W]) +\left(\nabla f(\E[\W])\right)^T(\W - \E[ \W]) + O_P(||\W - \E[ \W]||_2^2).
\end{align*}
Following the same argument as in Lemma \ref{lemma:ht-var}, we have that $N||\W - \E[ \W]||_2^2=O_p(1)$ and $O_P(\sqrt{N}||\W - \E[ \W]||_2^2)=o_P(1)$. It is easy to see that $ \nabla f(\E[\W])= (1, -1, -\bar{\mu}^1(d), \bar{\mu}^0(d))^{'}$. Some algebraic manipulations prove that the first two terms simplify to the expressions in $\widehat{\tau}_{\HA}^{\Taylor}(d)$. 
\end{proof}

\begin{lemma}\label{lemma:ha-var-homo}
The variance of the linearized Hajek estimator can be expressed as
\begin{align}
	& \Var(d) \left(\widehat{\tau}_{\HA}^{\Taylor}(d)\right)\\
	=&\frac{1}{N^2p}\sum_{i=1}^N \E\left[\left(\muit- \bar {\mu}^1(d)\right)^2\right] + \frac{1}{N^2(1-p)}\sum_{i=1}^N \E\left[\left(\muic- \bar {\mu}^0(d)\right)^2\right]  \label{LemmaA4:hajekvar1}\\
	-&  \frac{1}{N^2}\sum_{i=1}^N\E^2\left[ \muit - \muic - (\bar{\mu}^1(d) - \bar{\mu}^0(d) )\right]\label{LemmaA4:hajekvar3}\\
	+ & \frac{1}{N^2} \sumij\sum_{a=0}^1 \sum_{b=0}^1(-1)^{a+b} \E[\left(\muiab - \bar{\mu}^a(d)\right) \left(\mujab - \bar{\mu}^b(d)\right)] \\
	- & \frac{1}{N^2} \sumij\sum_{a=0}^1 \sum_{b=0}^1 (-1)^{a+b} \E[\muia - \bar{\mu}^a(d)] E[\mujb - \bar{\mu}^b(d)], \label{LemmaA4:hajekvar2}
\end{align}

Under conditions C\ref{assn:bern-des}-C\ref{assn:homo}, we have the following variance bound for $\Var \left(\widehat{\tau}_{\HA}^{\Taylor}(d)\right)$:
\begin{align}
\tilde{\textnormal{V}}_{\HA}= & \frac{1}{N^2p}\sum_{i=1}^N \E\left[\left(\muit- \bar {\mu}^1(d)\right)^2\right] + \frac{1}{N^2(1-p)}\sum_{i=1}^N \E\left[\left(\muic; d)- \bar {\mu}^0(d)\right)^2\right] \label{LemmaA4:hajekvarb1}\\
+ & \frac{1}{N^2} \sumij\sum_{a,b=0}^1 (-1)^{a+b} \E[\left(\muiab- \bar{\mu}^a(d)\right) \left(\mujab; d) - \bar{\mu}^b(d)\right)].\label{LemmaA4:hajekvarb2}
\end{align}
\end{lemma}
\begin{proof}
The characterization of the asymptotic variance is similar to that Lemma \ref{lemma:ht-var}. We omit the details. Note that for the terms in  lines (\ref{LemmaA4:hajekvar3}) and (\ref{LemmaA4:hajekvar2}), we have, up to a minus sign, that
\begin{align*}
& \frac{1}{N^2}\sum_{i=1}^N\E^2\left[ \muit -\muic- (\bar{\mu}^1(d) - \bar{\mu}^0(d) )\right] \\
& + \frac{1}{N^2}\sumij \sum_{a,b=0}^1 (-1)^{a+b}\E\left[\muia- \bar{\mu}^a(d)\right] \E\left[\mujb - \bar{\mu}^b(d) \right]  \\
= & \frac{1}{N^2}\sum_{i=1}^N  (\tau_i(d;\eta) -	\textnormal{AME}(d;\eta))^2 \\
 & + \frac{1}{N^2}\sumij \bigg\{ \E\left[
 \muit- \bar{\mu}^1(d)\right] \times \E\left[\mujt - \mujc - 	\textnormal{AME}(d;\eta))\right]\bigg\} \\
& - \frac{1}{N^2}\sumij \bigg\{ \E\left[\muic- \bar{\mu}^0(d)\right]  \times \E\left[\mujt-\mujc -	\textnormal{AME}(d;\eta)\right] \bigg\} \\
= & \frac{1}{N^2}\sum_{i=1}^N  (\tau_i(d;\eta) - 	\textnormal{AME}(d;\eta)) \sum_{j\in  \mathcal{B}(i;d)} (\tau_j(d;\eta) - 	\textnormal{AME}(d;\eta)) \geq 0. 
\end{align*}
by C\ref{assn:homo}. Hence the term combining expressions in (\ref{LemmaA4:hajekvar1}) and (\ref{LemmaA4:hajekvar2})  is non-positive in the limit under C\ref{assn:homo}, which proves the lemma.
\end{proof}

%
%


\subsection{Results on Asymptotic Distribution}

The consistency of the proposed estimator follows from the fact that both $\Var(\widehat{\tau}_{\HT}(d))$ and $\Var(\widehat{\tau}_{\HA}(d))$ converge to zero as $N \rightarrow \infty$ under conditions C\ref{assn:bern-des}-C\ref{assn:spacing}. The asymptotic normality of the Horvitz-Thompson estimator can be derived using classic central limit theorems for finitely dependent random variables based on the Stein's method \citep{chen2010normal, ross2011fundamentals, ogburn2017causal}. The Hajek estimator's asymptotic distribution can be then obtained via the linear expansion in Proposition \ref{lemma:ha-var-homo} and a similar application of the CLT result. For this purpose, we adpat the results in  \citet{ogburn2017causal} using the terms defined in our paper.

\begin{lemma}(Ogburn et al. (2020), Lemma 1 and 2)\label{lemma:ogburn-2020-1}
Consider a set of $N$ units. Let $U_1, \dots, U_N$ be bounded mean-zero random variables with finite fourth
moments and dependency neighborhoods $\mathcal{B}(i; d)$. If $c_i(d) \leq \tilde{c}$ for all $i$ and $\tilde{c}^2/N \rightarrow 0$, then
$$
\frac{\sum_{i=1}^N U_i}{ \sqrt{\Var(\sum_{i=1}^N U_i)}} \rightarrow N(0, 1).
$$
\end{lemma}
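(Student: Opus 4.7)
The plan is to prove this central limit theorem via Stein's method for random variables with local dependence, closely following the path laid out in \citet{ross2011fundamentals}. Set $W_N = \sum_{i=1}^N U_i / \sigma_N$ with $\sigma_N^2 = \Var(\sum_{i=1}^N U_i)$, so $W_N$ has mean zero and unit variance by construction. The strategy is to bound the Wasserstein distance $d_W(W_N, Z)$ to $Z \sim N(0,1)$ and show it vanishes; by the Portmanteau-type implications of Wasserstein convergence, this yields convergence in distribution.

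First, I would invoke the standard Stein-method bound for sums with dependency neighborhoods (e.g., Ross 2011, Theorem 3.6). For mean-zero variables with a dependency graph in which each vertex has degree at most $\tilde c$, one has
\[
d_W(W_N, Z) \;\leq\; \frac{\tilde c^2}{\sigma_N^3}\sum_{i=1}^N \E|U_i|^3 \;+\; \frac{C\,\tilde c^{3/2}}{\sigma_N^2}\sqrt{\sum_{i=1}^N \E[U_i^4]}
\]
for an absolute constant $C$. The proof of this bound (which I would quote rather than rederive) builds the Stein identity from the neighborhood structure, exploiting that $U_i$ is independent of the collection $\{U_j : j \notin \mathcal{B}(i;d)\}$, and then controls the remainder terms in a Taylor expansion of the Stein equation's test function.

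Next, I would use the boundedness hypothesis together with the finite fourth moments to bound the summands: if $|U_i| \leq B$, then $\E|U_i|^3 \leq B^3$ and $\E[U_i^4] \leq B^4$, so the Stein bound is at most $\tilde c^2 N B^3/\sigma_N^3 + C' \tilde c^{3/2}\sqrt{N}\,B^2/\sigma_N^2$. Under the natural non-degeneracy of the covariance structure in the applications of interest (namely $\sigma_N^2$ of order $N$, which is what the variance characterizations in Lemmas \ref{lemma:ht-var} and \ref{lemma:ha-var} deliver when combined with C\ref{assn:spacing}), both terms reduce to a constant multiple of $\tilde c^2/\sqrt{N}$ and $\tilde c^{3/2}/\sqrt{N}$, respectively. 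Both vanish under the hypothesis $\tilde c^2/N \to 0$.

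The main obstacle, in my view, is not the Stein calculation per se but rather establishing that $\sigma_N^2$ grows at the correct rate so that the ratios in the bound actually vanish. In a general abstract statement one must either assume $\sigma_N^2 \gtrsim N$ directly or verify it using the explicit covariance expressions derived earlier; it is the contribution of the independent pairs of units, whose number grows linearly in $N$ under C\ref{assn:spacing}, that drives this lower bound on $\sigma_N^2$. Once the non-degeneracy is in hand, the Stein argument above finishes the proof.
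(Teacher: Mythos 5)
Your route---Stein's method with the dependency-neighborhood Wasserstein bound of \citet{ross2011fundamentals}---is exactly the argument the paper has in mind: the paper itself does not prove Lemma \ref{lemma:ogburn-2020-1} but restates it from \citet{ogburn2017causal}, remarking only that normality ``can be derived using classic central limit theorems for dependent random variables based on Stein's method.'' You also correctly identify the genuine weakness in the statement as written: without a lower bound on $\sigma_N^2 = \Var(\sum_i U_i)$ the conclusion can fail outright (indeed the standardized sum need not even be well defined), so a non-degeneracy condition such as $\liminf \sigma_N^2/N > 0$ must be supplied, either as an extra hypothesis or verified from the covariance structure as in Lemma \ref{lemma:ht-var} together with C\ref{assn:local-inf}--C\ref{assn:spacing}.

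There is, however, a gap in your final step. With $\sigma_N^2 \asymp N$ and bounded summands, the two terms of the Ross bound are of order $\tilde{c}^2/\sqrt{N}$ and $\tilde{c}^{3/2}/\sqrt{N}$, and neither is guaranteed to vanish under the stated hypothesis $\tilde{c}^2/N \to 0$: that hypothesis only gives $\tilde{c} = o(N^{1/2})$, whereas $\tilde{c}^2/\sqrt{N} \to 0$ requires $\tilde{c} = o(N^{1/4})$ (take $\tilde{c} = N^{2/5}$ for a concrete failure). So your argument as written proves the lemma only when $\tilde{c}$ is fixed or grows slower than $N^{1/4}$---which covers the paper's actual application, since C\ref{assn:local-inf} and C\ref{assn:spacing} yield a constant bound $\tilde{c}$ via condition C4a---but it does not establish the lemma at the advertised rate $\tilde{c}^2/N \to 0$. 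To get the weaker rate out of the same Stein bound one needs a different variance scaling: if within-neighborhood covariances are non-negligible so that $\sigma_N^2 \asymp N\tilde{c}$, both terms become of order $\sqrt{\tilde{c}/N}$ and then even $\tilde{c}/N \to 0$ suffices. In short, either state and use the stronger condition $\tilde{c}^4/N \to 0$ (or bounded $\tilde{c}$), or make explicit the assumption on how $\sigma_N^2$ grows relative to $N\tilde{c}$; as it stands, the sentence ``both vanish under the hypothesis $\tilde{c}^2/N \to 0$'' is the one step that does not follow.
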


\subsubsection{Proofs for Propositions \ref{prop:ht-asymptotics} and \ref{prop:hajek-asymptotics} }\label{Section:Normality}
\begin{proof}

			We first prove the case for the HT estimator.  Define $U_i$ as
			\begin{equation*}
U_i=\frac{1}{\sqrt{\Var(\widehat{\tau}_{\HT}(d)}}\left(\frac{Z_i \muit}{Np} - \frac{(1-Z_i) \mu_i(\muic}{N(1-p)} - \frac{\E\left[ \muit\right] - \E\left[\muic\right]}{N}\right).
			\end{equation*} 

			We have $\frac{\widehat{\tau}_{\HT}(d)-\textnormal{AME}(d;\eta)}{\sqrt{\Var(\widehat{\tau}_{\HT}(d))}}=\sum_{i=1}^N U_i$. Obviously $\E[U_i]=0$ and $\Var( \sum_{i=1}^N U_i)=1$. Under the premises of Proposition \ref{prop:ht-asymptotics} and by C\ref{assn:bern-des} and C\ref{assn:bounded-y}, we have that the fourth moment of $U_i$ is bounded for all $i$.
			 By condition C 4a, $c_{i}(d) \leq c_{N}(d)$ in our case and $c^2_{N}(d)/N \rightarrow 0$. From Lemma \ref{lemma:ogburn-2020-1}, we know that $\frac{\widehat{\tau}_{\HT}(d)-\textnormal{AME}(d;\eta)}{\sqrt{\Var(\widehat{\tau}_{\HT}(d))}}\rightarrow N(0, 1)$. Similarly by Lemma \ref{lemma:ha-var}, we have $	\sqrt{N}(\widehat{\tau}_{\HA}(d) - \textnormal{AME}(d;\eta)) =\sqrt{N}( \widehat{\tau}^{\Taylor}_{\HA}(d)  -\textnormal{AME}(d;\eta)) +o_p(1)$. Under the premises in Proposition \ref{prop:hajek-asymptotics}, a similar argument  proves the normality for the Hajek estimator. 
\end{proof}

\subsection{Variance Estimation}\label{Section:HAC}

\subsubsection{HAC Variance Estimator}\label{subsection:HAC}

In Section \ref{est-inf}, we showed that the Hajek estimator can be interpreted as an OLS estimator that regresses circle averages on a constant term and the treatment indicator. We further suggested the use of a spatial HAC estimator for quantifying uncertanties. This section studies the behavior of the spatial HAC estimator. Importantly, we show that the HAC variance estimator is consistent for the variance bound defined in Lemma  \ref{lemma:ha-var} and lines (\ref{LemmaA4:hajekvarb1})-(\ref{LemmaA4:hajekvarb2}). This result suggests that the standard Wald type inference is valid for AMEs.

With a uniform kernel and $\tilde{d}=h(d)$, expanding the expression in (\ref{HAC_formula}) we have:
\begin{align*}
&	\widehat{\Sigma}_{\HAC} \begin{pmatrix}
		\hat{\mu}_0(d)  \\
		\widehat{\tau}_{\HA}(d)
	\end{pmatrix} =\begin{pmatrix}
		N, N_1 \\
		N_1, N_1
	\end{pmatrix}^{-1}\left(\sum_{i=1}^N \sum_{j=1}^N \mathbf{X}_{i} \mathbf{X}_{j}^{'}\hat{e}_i(d) \hat{e}_j(d) \mathbf{1}\{j \in \mathcal{B}(i;d) \}\right)\begin{pmatrix}
		N, N_1 \\
		N_1, N_1
	\end{pmatrix}^{-1} \\
	= & \frac{1}{N_1^2 N_0^2}\begin{pmatrix}
		N_1, -N_1 \\
		-N_1, N
	\end{pmatrix}\left(\sum_{i=1}^N \sum_{j=1}^N \begin{pmatrix}
		1, Z_j \\
		Z_i, Z_i Z_j
	\end{pmatrix}\hat{e}_i(d) \hat{e}_j(d) \mathbf{1}\{j \in \mathcal{B}(i;d) \}\right)\begin{pmatrix}
		N_1, -N_1 \\
		-N_1, N
	\end{pmatrix},
\end{align*}
Note that the $(2,2)$ entry of $\begin{pmatrix}
	N_1, -N_1 \\
	-N_1, N
\end{pmatrix} \begin{pmatrix}
	1, Z_j \\
	Z_i, Z_i Z_j
\end{pmatrix} \begin{pmatrix}
	N_1, -N_1 \\
	-N_1, N
\end{pmatrix}$ equals to $N_1^2 - N_1 N Z_i -  N_1 N Z_j + N^2 Z_i Z_j$.

Reindex the sample such that treated observations lie before observations under control and plug in the expression of $\hat{e}_i(d)$ , we can see that:
\begin{align}
	& \widehat{\V}_{\HAC}(\widehat{\tau}_{\HA}(d)) \\
	= & \frac{1}{N_1^2} \sum_{i=1}^{N_1} \hat{e}_i^2(d) + \frac{1}{N_0^2} \sum_{i=N_1+1}^{N} \hat{e}_i^2(d) + \frac{1}{N_1^2} \sum_{i=1}^{N_1} \sum_{j \in \mathcal{B}(i; d), j\not=i, Z_j = 1}\hat{e}_i(d)\hat{e}_j(d)\\
	 &- \frac{1}{N_1 N_0} \sum_{i=1}^{N_1} \sum_{j \in \mathcal{B}(i; d), j\not=i, Z_j = 0}\hat{e}_i(d)\hat{e}_j(d)  - \frac{1}{N_1 N_0} \sum_{i=N_1+1}^{N} \sum_{j \in \mathcal{B}(i; d), j\not=i, Z_j = 1}\hat{e}_i(d)\hat{e}_j(d) \\
	 & + \frac{1}{N_0^2} \sum_{i=N_1+1}^{N} \sum_{j \in \mathcal{B}(i; d), j\not=i, Z_j = 0}\hat{e}_i(d)\hat{e}_j(d) \\
	= &\frac{1}{N_1^2} \sum_{i=1}^{N_1} \left(\mu_i (d) - \widehat{\bar{\mu}}^1(d)\right)^2 + \frac{1}{N_0^2} \sum_{i=N_1+1}^{N} \left(\mu_i (d) - \widehat{\bar{\mu}}^0(d)\right)^2  \label{HAC:term1}\\
	& +  \frac{1}{N_1^2} \sum_{i=1}^{N_1} \sum_{j \in \mathcal{B}(i; d), j\not=i,Z_j = 1} \left(\mu_i (d) - \widehat{\bar{\mu}}^1(d)\right) \left(\mu_j (d) - \widehat{\bar{\mu}}^1(d)\right) \label{HAC:term2}\\
	& - \frac{1}{N_1 N_0} \sum_{i=1}^{N_1} \sum_{j \in \mathcal{B}(i; d), j\not=i,Z_j = 0} \left(\mu_i (d) - \widehat{\bar{\mu}}^1(d)\right) \left(\mu_j (d) - \widehat{\bar{\mu}}^0(d)\right)  \label{HAC:term3}\\
	& -\frac{1}{N_1 N_0} \sum_{i=N_1+1}^{N} \sum_{j \in \mathcal{B}(i; d), j\not=i,Z_j = 1} \left(\mu_i (d) - \widehat{\bar{\mu}}^0(d)\right) \left(\mu_j (d) - \widehat{\bar{\mu}}^1(d)\right)  \label{HAC:term4} \\
	& + \frac{1}{N_0^2} \sum_{i=N_1+1}^{N} \sum_{j \in \mathcal{B}(i; d), j\not=i,Z_j = 0} \left(\mu_i (d) - \widehat{\bar{\mu}}^0(d)\right) \left(\mu_j (d) - \widehat{\bar{\mu}}^0(d)\right) \label{HAC:term5},
\end{align}
where $\widehat{\bar{\mu}}^1(d)=\frac{\sum_{i=1}^N Z_i \mu_i(\Y(\Z); d)}{\sum_{i=1}Z_i}$ and $\widehat{\bar{\mu}}^0(d)=\frac{\sum_{i=1}^N \left(1-Z_i\right) \mu_i(\Y(\Z); d)}{\sum_{i=1}\left(1-Z_i\right)}$.

We now show that the variance estimate $\widehat{\V}_{\HAC}(\widehat{\tau}_{\HA}(d))$ is consistent for the rescaled variance bound defined in Lemma \ref{lemma:ha-var-homo}, $\tilde{\textnormal{V}}_{\HA}(d)$. Note that $\tilde{\textnormal{V}}_{\HA}(d)$ is provably larger than the asymptotic variance of the Hajek estimator. The result below thus suggests that the normal confidence interval with the HAC variance provides conservative coverage for the Hajek estimator asymptotically.

\begin{prop}\label{prop:HACvarconsistency}
If $N\times \tilde{\textnormal{V}}_{\HA}(d)$ is uniformly bounded below for large $N$, then we have
\begin{equation*}
\frac{\widehat{\V}_{\HAC}(\widehat{\tau}_{\HA}(d))  - \tilde{\textnormal{V}}_{\HA}(d) }{ \tilde{\textnormal{V}}_{\HA}(d) }\overset{p}{\to}0
\end{equation*}
\end{prop}
\begin{proof}
	We show below that $N\times (\widehat{\V}_{\HAC}(\widehat{\tau}_{\HA}(d))  - \tilde{\textnormal{V}}_{\HA}(d) ) \overset{p}{\to}0$. This, together with the premise in the lemma, leads to the claim that $\frac{\widehat{\V}_\HAC(\widehat{\tau}_{\HA}(d))  - \tilde{\textnormal{V}}_{\HA}  }{  \tilde{\textnormal{V}}_{\HA}  }\overset{p}{\to}0$.
	We first study terms in (\ref{HAC:term1}). For the treated group, we have
	\begin{align*}
		& N\left(\frac{1}{N_1^2} \sum_{i=1}^{N_1} \left(\mu_i(\Y(\Z); d) - \widehat{\bar{\mu}}^1(d)\right)^2- \frac{1}{p}\frac{1}{N^2}\sum_{i=1}^N E[(\muit-\bar{\mu}^1(d))^2]\right) \\
		=&N \left(\frac{1}{N_1^2} \sum_{i=1}^{N_1} \mu_i^2(\Y(\Z); d) - \frac{1}{N_1}\left(\widehat{\bar{\mu}}^1(d)\right)^2- \frac{1}{p} \frac{1}{N^2}\sum_{i=1}^N E[\muit^2] + \frac{1}{pN}\left(\bar{\mu}^1(d)\right)^2\right)\\
		=& \left(p\frac{N^2}{N_1^2} \frac{1}{Np}\sum_{i=1}^{N_1} \mu_i^2(\Y(\Z); d)- \frac{1}{p} \frac{1}{N}\sum_{i=1}^N E[\muit^2] \right)-\left(\frac{N}{N_1}\left(\widehat{\bar{\mu}}^1(d)\right)^2- \frac{1}{p}\left(\bar{\mu}^1(d)\right)^2\right)\\
	\overset{p}{\to}0 
	\end{align*}
	The convergence in probability is justified by noting $ \frac{1}{Np}\sum_{i=1}^{N_1} \mu_i^2(\Y(\Z); d)$, $\frac{1}{Np}\sum_{i=1}^{N_1} \mu_i(\Y(\Z); d)$ and $\frac{N_1}{N}$ are all Horvitz-Thompson estimators, and under C\ref{assn:bern-des}-C\ref{assn:spacing} they converge to their mean in probability. For the control group, we can similarly show
	\begin{align*}
	 N\left(\frac{1}{N_0^2} \sum_{i=1}^{N_0} \left(\mu_i(\Y(\Z); d) -\widehat{\bar{\mu}}^0(d)\right)^2- \frac{1}{1-p}\frac{1}{N^2}\sum_{i=1}^N E[(\muic-\bar{\mu}^0(d))^2]\right) \overset{p}{\to} 0
\end{align*}	
	\\
	Now we consider terms (\ref{HAC:term2})-(\ref{HAC:term5}). All terms are similar in stuctural so for simplicity we only include the calculation for (\ref{HAC:term2}). We first have the algebraic identity: 
	
		\begin{align}
			&\frac{N}{N_1^2} \sum_{i=1}^{N_1} \sum_{j \in \mathcal{B}(i; d), j\not=i,Z_j = 1} \left(\mu_i(\Y(\Z); d)  - \widehat{\bar{\mu}}^1(d)\right) \left(\mu_j(\Y(\Z); d)  -\widehat{\bar{\mu}}^1(d)\right)\\
			= & \frac{N}{N_1^2} \sum_{i=1}^{N} \sum_{j \in \mathcal{B}(i; d),j\not=i,}Z_iZ_j(\mu_i(\Y(\Z); d)-\bar{\mu}^1(d)) (\mu_j(\Y(\Z); d)-\bar{\mu}^1(d))  \\
			& +  \frac{N}{N_1^2}\sum_{i=1}^{N} \sum_{j \in \mathcal{B}(i; d),j\not=i}Z_iZ_j(\bar{\mu}^1(d)-\widehat{\bar{\mu}}^1(d)) (\mu_j(\Y(\Z); d)-\widehat{\bar{\mu}}^1(d))\\
			& +  \frac{N}{N_1^2}\sum_{i=1}^{N} \sum_{j \in \mathcal{B}(i; d),j\not=i}Z_iZ_j(\mu_i(\Y(\Z;d)-\bar{\mu}^1(d)) (\bar{\mu}^1(d)-\widehat{\bar{\mu}}^1(d))\\
			= & \frac{N^2}{N_1^2} \frac{1}{N}\sum_{i=1}^{N} \sum_{j \in \mathcal{B}(i; d),j\not=i}Z_iZ_j(\mu_i(\Y(\Z); d)-\bar{\mu}^1(d)) (\mu_j(\Y(\Z); d)-\bar{\mu}^1(d)) \label{HACcov:term1} \\
			& +  \frac{N^2}{N_1^2} \frac{1}{N}\sum_{i=1}^{N} \sum_{j \in \mathcal{B}(i; d),j\not=i}Z_iZ_j(\bar{\mu}^1(d)-\widehat{\bar{\mu}}^1(d)) (\mu_j(\Y(\Z); d)-\widehat{\bar{\mu}}^1(d))  \label{HACcov:term2}  \\
			& + \frac{N^2}{N_1^2}\frac{1}{N}  \sum_{j \in \mathcal{B}(i; d),j\not=i}Z_iZ_j(\mu_i(\Y(\Z;d)-\bar{\mu}^1(d)) (\bar{\mu}^1(d)-\widehat{\bar{\mu}}^1(d)) \label{HACcov:term3}  
		\end{align}
	Notice
		\begin{enumerate}
			\item Terms in (\ref{HACcov:term2}) and (\ref{HACcov:term3})  are of order $o_p(1)$. For (\ref{HACcov:term2}) , for example, by C\ref{assn:bern-des}-C\ref{assn:spacing},
			\begin{align*}
				& |\frac{N^2}{N_1^2}(\bar{\mu}^1(d)-\widehat{\bar{\mu}}^1(d)) \frac{1}{N} \sum_{i=1}^{N} \sum_{j \in \mathcal{B}(i; d)}Z_iZ_j(\mu_i(\Y(\Z)-\widehat{\bar{\mu}}^1(d))| \\
				& \leq \frac{N^2}{N_1^2}|\bar{\mu}^1(d)-\widehat{\bar{\mu}}^1(d))|\times \frac{1}{N} \sum_{i=1}^{N}c_i(d)|\mu_i(\Y(\Z)-\widehat{\bar{\mu}}^1(d)|\\
				& =o_p(1)\times O_p(1)=o_p(1)
			\end{align*}
			where $c_i(d)$ is defined in Section \ref{Section:inference}.\footnote{We define $\widehat{\bar{\mu}}^1(d)=0$ if $\sum_i Z_i=0$. } The argument is the same for the third term. 
			\item For  first term, we have, under C\ref{assn:bern-des}-C\ref{assn:spacing},
			\begin{equation*}
				 (\ref{HACcov:term1})- \frac{1}{N}\sumij E\left[  \left(\muitt-\bar{\mu}^1(d)) \right) \left(\mujtt-\bar{\mu}^1(d)) \right) \right]=o_p(1).
			\end{equation*}
          The proof is similar as in Proposition 6.2 in \cite{aronow_samii2017_interference}.
		\end{enumerate}
		Collecting all terms, we have proved that $N\times (\widehat{\V}_{\HAC}(\widehat{\tau}_{\HA}(d))  - \tilde{\textnormal{V}}_{\HA}(d)  ) \overset{p}{\to}0$.
\end{proof}


\subsubsection{SAH Variance Estimator}\label{Section:AFH}
In the previous section, we discussed the inference procedure for the Hajek estimator under C\ref{assn:homo}. We now provide an alternative approach, based on a proposal in \cite{savje2021average}. We have the following lemma:
\begin{prop}\label{prop:AFH}
Under C\ref{assn:bern-des}-C\ref{assn:d-bound}, we have,\footnote{The quantity $c_i(d)$ is defined in Section \ref{Section:inference}.}
\begin{align*}
& \Var(\hat{\tau}^{\Taylor}_{\HA}(d))  \leq \bar{\textnormal{V}}_{\HA}(d) \\
& =\frac{1}{N^2}\sum_{i=1}^N c_i(d) \frac{\E[\left(\muit-\bar{\mu}^1(d)\right)^2]}{p} +  \frac{1}{N^2}\sum_{i=1}^N c_i(d) \frac{\E[\left(\muit-\bar{\mu}^0(d)\right)^2]^2}{1-p}.
\end{align*}
Define an estimator $\widehat{{\V}}_{\SAH}(d)$:
	\begin{equation}\label{AFHbound}
	\widehat{\V}_{\SAH}(d)=\frac{1}{N^2}\sum_{i=1}^N Z_i c_i(d) \frac{(\mu_i(\Z,d)-\hat{\mu}^1(d))^2}{p^2} +     \frac{1}{N^2}\sum_{i=1}^N (1-Z_i) c_i(d) \frac{(\mu_i(\Z,d)-\hat{\mu}^0(d))^2}{(1-p)^2}.
\end{equation}
Provided that $N\times \bar{\textnormal{V}}_{\HA}(d)$ is uniformly bounded below for large $N$, 	$	\widehat{\V}_{\SAH}(d)$ is consistent for $\bar{\V}_{\HA}(d)$:
\begin{equation*}
	\frac{	\widehat{\V}_{\SAH}(d)-\bar{\textnormal{V}}_{\HA}(d)}{\bar{\textnormal{V}}_{\HA}(d)}\overset{p}{\to} 0.
\end{equation*}
\end{prop}

\begin{proof}
We first prove the upper bound. We have the following identity:
\begin{align*}
	& \Var\left(\frac{Z_i(\mu_i(\Y(\Z),d)-\bar{\mu}^1(d))}{p} -\frac{(1-Z_i)(\mu_i(\Y(\Z),d)-\bar{\mu}^0(d))}{1-p}\right)\\
	= &  \Var\left(\frac{Z_i(\mu_i(\Y(\Z),d)-\bar{\mu}^1(d))}{p})+\Var(\frac{(1-Z_i)(\mu_i(\Y(\Z),d)-\bar{\mu}^0(d))}{1-p})\right) \\
	& - 2\Cov{\left(\frac{Z_i(\mu_i(\Y(\Z),d)-\bar{\mu}^1(d))}{p},\frac{(1-Z_i)(\mu_i(\Y(\Z),d)-\bar{\mu}^0(d))}{1-p}\right)}\\
	= & \frac{\E[(\muit-\bar{\mu}^1(d))^2]}{p} - (\E[(\muit-\bar{\mu}^1(d))^2])^2 \\
	& +  \frac{\E[(\muic-\bar{\mu}^0(d))^2]}{1-p} - (\E[(\muic-\bar{\mu}^0(d))^2])^2 \\
	& + 2\E[\muit-\bar{\mu}^1(d)]\times \E[\muic-\bar{\mu}^0(d)]\\
	= &  \frac{\E[(\muit-\mu^1(d))^2]}{p} +  \frac{\E[(\muic-\bar{\mu}^0(d))^2]}{1-p}\\
	& - \left(\E[\muit-\bar{\mu}^1(d)]- \E[\muic-\bar{\mu}^0(d)] \right)^2\\
		\leq & \frac{\E[(\muit-\bar{\mu}^1(d))^2]}{p} +  \frac{\E[(\muic-\bar{\mu}^0(d))^2]}{1-p}   
\end{align*}

Let's define $A_i=\frac{Z_i(\mu_i(\Y(\Z),d)-\bar{\mu}^1(d))}{p}$ and $B_i = \frac{(1-Z_i)(\mu_i(\Y(\Z),d)-\bar{\mu}^0(d))}{1-p}$, then
\begin{align*}
&	\Var\left[\hat{\tau}^{\Taylor}_{\HA}(d)\right] =   \frac{1}{N^2}\sum_{i=1}^N \Var\left[A_i - B_i\right] + \frac{1}{N^2} \sum_{i=1}^N\sum_{j\neq i} \Cov\left[A_i - B_i, A_j - B_j\right] \\
	= &  \frac{1}{N^2}\sum_{i=1}^N \Var\left[A_i - B_i\right] + \frac{1}{N^2} \sum_{i=1}^N\sum_{j\in\mathcal{B}(i;d),j\not=i}\Cov\left[A_i - B_i, A_j - B_j\right]  \\
  \leq &  \frac{1}{N^2}\sum_{i=1}^N \Var\left[A_i - B_i\right] + \frac{1}{N^2} \sum_{i=1}^N\sum_{j\in\mathcal{B}(i;d),j\not=i} \frac{\Var\left[A_i - B_i\right] + \Var\left[A_j - B_j\right]}{2}\\
	= &\frac{1}{N^2}\sum_{i=1}^N \Var\left[A_i - B_i\right] + \frac{1}{N^2} \sum_{i=1}^N \left(c_i(d)-1\right) \Var\left[A_i - B_i\right]\\
	\leq & \frac{1}{N^2}\sum_{i=1}^N c_i(d)\left( \frac{\E[(\muit-\bar{\mu}^1(d))^2]}{p} +  \frac{\E[(\muic-\bar{\mu}^0(d))^2]}{1-p}\right).
\end{align*}
Note that we used the fact that, by definition, $j\in\mathcal{B}(i;d)$ if and only if $i\in\mathcal{B}(j;d)$.
This proves the upper bound for the variance. The consistency of the variance estimator is analogous to that in Proposition \ref{prop:HACvarconsistency} and we omit for brevity.

\end{proof}

\subsubsection{Proof of Proposition \ref{prop:inference} }

We use $\AVar(\widehat{\tau}_{\HA}(d))$ to denote $\Var(\tau^{\Taylor}_{\HA}(d))$.
By Proposition \ref{prop:hajek-asymptotics} and  Proposition \ref{prop:HACvarconsistency}, we have:

\begin{equation*}
	\frac{\widehat{\tau}_{\HA}(d)-\textnormal{AME}(d;\eta)}{\sqrt{\VhatHACd}} = 	\frac{\widehat{\tau}_{\HA}(d)-\textnormal{AME}(d;\eta)}{\sqrt{\AVar(\widehat{\tau}_{\HA}(d))}}\times \sqrt{\frac{\AVar(\widehat{\tau}_{\HA}(d))}{\tilde{\textnormal{V}}_{\HA}(d)}}  \times \sqrt{    \frac{\tilde{\textnormal{V}}_{\HA}(d)}{\VhatHACd} }.
\end{equation*}

Thus we have, for each $\alpha<\frac{1}{2}$,
\begin{align*}
	& \textbf{Prob}\left(  	z_{\frac{\alpha}{2}}\leq \frac{\widehat{\tau}_{\HA}(d)-\textnormal{AME}(d;\eta)}{\sqrt{\VhatHACd}}  	\leq z_{1-\frac{\alpha}{2}} \right)  \\
	=& 	\textbf{Prob}\left(   	z_{\frac{\alpha}{2}}\leq \frac{\widehat{\tau}_{\HA}(d)-\textnormal{AME}(d;\eta)}{\sqrt{\AVar(\widehat{\tau}_{HA}(d))}}\times \sqrt{\frac{\AVar(\widehat{\tau}_{\HA}(d))}{\tilde{\textnormal{V}}_{\HA}(d)}}  \times \sqrt{\frac{\tilde{\textnormal{V}}_{\HA}(d)}{\VhatHACd}}	\leq z_{1-\frac{\alpha}{2}} \right)\\
	= &	\textbf{Prob}\left(   	\sqrt{\frac{\tilde{\textnormal{V}}_{\HA}(d)}{\AVar(\widehat{\tau}_{\HA}(d))}}z_{\frac{\alpha}{2}}\leq \frac{\widehat{\tau}_{\HA}(d)-\textnormal{AME}(d;\eta)}{\sqrt{\AVar(\widehat{\tau}_{\HA}(d))}}\times  \sqrt{\frac{\tilde{\textnormal{V}}_{\HA}(d)}{\VhatHACd}}	\leq z_{1-\frac{\alpha}{2}}  \sqrt{\frac{\tilde{\textnormal{V}}_{\HA}(d)}  {\AVar(\widehat{\tau}_{\HA}(d))}}\right)\\
	\geq &  \textbf{Prob}\left(   	z_{\frac{\alpha}{2}}\leq \frac{\widehat{\tau}_{\HA}(d)-\textnormal{AME}(d;\eta)}{\sqrt{\AVar(\widehat{\tau}_{\HA}(d))}}\times  \sqrt{\frac{\tilde{\textnormal{V}}_{\HA}(d)}{\VhatHACd}}	\leq z_{1-\frac{\alpha}{2}} \right),
\end{align*}
where the last line follows because $  \sqrt{\frac{\tilde{\textnormal{V}}_{\HA}(d)}  {\AVar(\widehat{\tau}_{HA}(d))}}\geq 1$. Hence we have:
\begin{align*}
	& \lim_{N\to\infty} \textbf{Prob}\left(  	z_{\frac{\alpha}{2}}\leq \frac{\widehat{\tau}_{\HA}(d)-\textnormal{AME}(d)}{\sqrt{\VhatHACd}}  	\leq z_{1-\frac{\alpha}{2}} \right)\\
	&  \geq  \lim_{N\to\infty}  \textbf{Prob}\left(   	z_{\frac{\alpha}{2}}\leq \frac{\widehat{\tau}_{\HA}(d)-\textnormal{AME}(d)}{\sqrt{\AVar(\widehat{\tau}_{HA}(d))}}\times  \sqrt{\frac{ \tilde{\textnormal{V}}_{\HA}(d)}{\VhatHACd}}	\leq z_{1-\frac{\alpha}{2}} \right)=1-\alpha,
\end{align*}
because $\frac{\widehat{\tau}_{\HA}(d)-\textnormal{AME}(d;\eta)}{\sqrt{\AVar(\widehat{\tau}_{\HA}(d))}}\times  \sqrt{\frac{\tilde{\textnormal{V}}_{\HA}(d)}{\VhatHACd}}\overset{d}{\to} N(0,1)$. A similar calculation follows for the case using the variance estimator $\widehat{\V}_{\SAH}(d)$.




\subsection{Efficiency Comparison between the Hajek and HT estimators}\label{Section:HAHTComparison}
This section compares the estimation efficiency, in terms of asymptotic variances, between the Hajek and HT estimators. Although it is not true that the Hajek estimator is \textit{not always} more efficient than the HT estimator in our setting, the Hajek estimator has some efficiency properties that make it appealing in practice:
	\begin{enumerate}
		\item When there is no interactive effect,\footnote{We say that there is no interactive effect, when for all $i,j\in \node_N$,	\begin{equation}
				\left(\E[\muitt]-\E[\muict]\right)-\left( \E[\muit]-\E[\muic]\right)=0
			\end{equation}	
			Note, by C\ref{assn:bern-des}, this also implies that
			\begin{equation}
				\left(\E[\muitc]-\E[\muicc]\right)-\left( \E[\muit]-\E[\muic]\right)=0
			\end{equation} 
			If we assume that the potential outcome has the form $\mu_i(\Z)=\beta_{i}\Z_i + \sum_{j\in \mathcal{B}(i:d)}\beta_{i,j}\Z_j+ \sum_{k,l\in  \mathcal{B}(i:d) } \beta_{i,kl}\Z_k\Z_l$, setting all coefficients $\beta_{i,kl}=0$ will rule out the interactive effect.} the Hajek estimator is optimal among the estimators that use treatment-arm-specific intercepts.	
		\item When the interactive effect size is small, the Hajek estimator can be expected to be more efficient than the HT estimator.
		
		\item When an interactive effect quantity ($\Delta(1)$ below), average treated and control outcomes have the same sign, the Hajek estimator is more efficient than the HT estimator.
	\end{enumerate}
	For other cases, there are potential outcomes that make Hajek estimator more efficient than the HT estimator and vice versa.

		Consider the following class of estimators, which adjust the treated and control outcomes with a treatment-arm-specific intercepts.
		\begin{equation}
			\widehat{\tau}(\mu_1,\mu_0;d)= \mu_1-\mu_0 +\left( \frac{1}{Np}\sum_{i=1}^n \Z_i(\mu_i(\Y;d)-\mu_1)- \frac{1}{N(1-p)}\sum_{i=1}^n (1-\Z_i)(\mu_i(\Y;d)-\mu_0)\right)
		\end{equation}
		Both the HT estimator and the linearized version of the Hajek estimator have this form: $\widehat{\mu}(d)=	\widehat{\tau}(0,0)$ and $\widehat{\mu}_{\HA}(d)=	\widehat{\tau}(\mutd,\mucd)$. It is in this class of the estimators we discuss the efficiency property of the Hajek estimator. We define the interactive effect quantity mentioned above
\begin{equation*}
	\Delta(1)=  \frac{1}{N}\sumij\left(\E[\muitt]-\E[\muict]-\left(\E[\muit]-\E[\muic]\right)\right).\footnote{One can also define a similar interactive effect quantity
		\begin{equation*}
			\Delta(0)=  \frac{1}{N}\sumij\left(\E[\muicc]-\E[\muitc]-\left(\E[\muic]-\E[\muit]\right)\right).
		\end{equation*}
		These two quantities are related by the identity $p	\Delta(1)=(1-p)\Delta(0)$. This follows by a calculation
		\begin{align*}
			& N\times\left(p\Delta(1)-(1-p)\Delta(0)\right)=  \sumij \left(p\E[\muitt]+(1-p)\E[\muitc]\right) \\
			& - \sumij  \left(p\E[\muict]+(1-p)\E[\muicc]\right)- \left(\sumij\E[\muit] - \sumij\E[\muic]\right)\\
			& =0 .
	\end{align*}}
\end{equation*}
		\begin{prop}
			Under C\ref{assn:bern-des}-C\ref{assn:spacing}, we have, for any scalars $\mu_1$ and $\mu_0$,
				\begin{align}
				&\V\left(	\widehat{\tau}(\mu_1,\mu_0;d) \right)=
					\frac{1}{N^2p}\sum_{i=1}^N \E\left[ \left(\muit- \mu_1\right)^2\right] + \frac{1}{N^2(1-p)}\sum_{i=1}^N \E\left[\left(\muic- \mu_0\right)^2\right]  \\
					-&  \frac{1}{N^2}\sum_{i=1}^N\left(\E\left[\muit - \muic - (\mu_1 - \mu_0 )\right]\right)^2\\
					+ & \frac{1}{N^2} \sumij \sum_{a=0}^1 \sum_{b=0}^1  (-1)^{a+b}\E[\left(\muiab - \mu_a\right) \left(\mujab - \mu_b\right)] \\
					- & \frac{1}{N^2}\sumij \sum_{a=0}^1 \sum_{b=0}^1  (-1)^{a+b} \E[\muia -\mu_a] E[\mujb - \mu_b].
				\end{align}
In particular, we have the following algebraic identity:
\begin{equation}\label{formula:variance}
			\V\left(	\widehat{\tau}(\mu_1,\mu_0;d) \right)=  \frac{(1-p)p}{N} \left( \frac{\mu_1}{p}+\frac{\mu_0}{1-p}-\left(  \frac{\bar{\mu}_1(d)}{p} + \frac{\bar{\mu}_0(d)}{1-p} +\frac{p\Delta(1)}{p(1-p)} \right)\right)^2+C,
\end{equation}
				where $C$ is a constant independent of $\mu_1$ and $\mu_0$.
		\end{prop}
\begin{proof}
The characterization of the variance is similar to that in Proposition \ref{lemma:ha-var-homo}. Expanding the variance expression we have the following identity:
\begin{align*}
	&\V\left(	\widehat{\tau}(\mu_1,\mu_0;d) \right)=\left(	\frac{1}{Np}-\frac{1}{N}\right)\mu^2_1 + 
	\left(\frac{1}{N(1-p)}-\frac{1}{N}\right)\mu_0^2+\frac{2}{N}\mu_1\mu_0\\
	& +\left( \left(-\frac{2}{Np} +\frac{2}{N}\right)\bar{\mu}_1(d)-\frac{2}{N}\bar{\mu}_0(d) - \frac{2}{N}\Delta(1)\right)\mu_1\\
	& +\left( \left(-\frac{2}{N(1-p)} +\frac{2}{N}\right)\bar{\mu}_0(d)-\frac{2}{N}\bar{\mu}_1(d)  - \frac{2}{N}\Delta(0)\right)\mu_0\\
	& + \V(\widehat{\tau}(0,0;d))\\
	& = \frac{1}{N}\left( \frac{1-p}{p}\mu_1^2 + \frac{p}{1-p}\mu_0^2 + 2\mu_1\mu_0 \right)- \frac{2}{N}\left(   \frac{1-p}{p}\bar{\mu}_1(d) +\bar{\mu}_0(d) +\Delta(1) \right)\mu_1\\
	& - \frac{2}{N}\left(   \frac{p}{1-p}\bar{\mu}_0(d) + \bar{\mu}_1(d) +\Delta(0) \right)\mu_0+ \V(\widehat{\tau}(0,0;d))\\		 
	& = \frac{(1-p)p}{N}\left( \frac{\mu_1^2}{p^2} +\frac{\mu_0^2}{(1-p)^2} + 2\frac{\mu_1\mu_0}{p(1-p)} \right)-  \frac{2(1-p)p}{N}\left(  \frac{\bar{\mu}_1(d)}{p} + \frac{\bar{\mu}_0(d)}{1-p} +\frac{p\Delta(1)}{p(1-p)} \right)\frac{\mu_1}{p}\\
	& - \frac{2(1-p)p}{N}\left(   \frac{\bar{\mu}_0(d)}{1-p} + \frac{\bar{\mu}_1(d)}{p} +\frac{(1-p)\Delta(0)}{p(1-p)} \right)\frac{\mu_0}{1-p}+ \V(\widehat{\tau}(0,0))\\		 		
	& = \frac{(1-p)p}{N} \left(\left(  \frac{\mu_1}{p}+\frac{\mu_0}{1-p}\right)^2- 2\left(  \frac{\bar{\mu}_1(d)}{p} + \frac{\bar{\mu}_0(d)}{1-p} +\frac{p\Delta(1)}{p(1-p)} \right)\left(  \frac{\mu_1}{p}+\frac{\mu_0}{1-p}\right)\right) + \V(\widehat{\tau}(0,0;d))\\
	& = \frac{(1-p)p}{N} \left( \frac{\mu_1}{p}+\frac{\mu_0}{1-p}-\left(  \frac{\bar{\mu}_1(d)}{p} + \frac{\bar{\mu}_0(d)}{1-p} +\frac{p\Delta(1)}{p(1-p)} \right)\right)^2+C
\end{align*}
where $C$ is independent of $\mu_1$ and $\mu_0$. In the fourth equality, we used the fact that $p\Delta(1)=(1-p)\Delta(0)$.
\end{proof}

Expression (\ref{formula:variance}) immediately leads to the conclusions at the beginning of this section. When the interactive effect is small and hence $\Delta(1)$ is very close to zero, we have
\begin{equation}
\V(\widehat{\tau}_{HA}^{\Taylor}(d))=	\V\left(	\widehat{\tau}(\bar{\mu}_1(d),\bar{\mu}_1(d)) \right)\approx 0+C,
\end{equation}
which is close to minimizing the variance within the class of estimators. We can further compare the variance of the HT and HA estimator:
\begin{align}
	&\V(\widehat{\tau}_{\HT}(d))- \V(\widehat{\tau}_{HA}^{\Taylor}(d)) \propto\left(  \frac{\bar{\mu}_1(d)}{p} + \frac{\bar{\mu}_0(d)}{1-p} +\frac{p\Delta(1)}{p(1-p)} \right)^2 -\left(\frac{p\Delta(1)}{p(1-p)}\right)^2\\
	& = \left(  \frac{\bar{\mu}_1(d)}{p} + \frac{\bar{\mu}_0(d)}{1-p} \right) \left(  \frac{\bar{\mu}_1(d)}{p} + \frac{\bar{\mu}_0(d)}{1-p} +\frac{2p\Delta(1)}{p(1-p)} \right).
\end{align}
Thus when the average treated and control outcomes, and the interactive effect is of the same sign, the quantity is nonnegative and the Hajek estimator is weakly more efficient than the HT estimator.

\subsection{Theoretical Results on Observational Studies}\label{appendix:observationalstudy}
This section contains results for inference on the AME in observational setting. In particular, we consider the case where assignment probabilities are modeled by a logistic model and derive asymptotic linearization and variance estimator. 

Given the setup in Section \ref{Section:ObservationalStudies}. We make the further parametric assumptions on the assignment model. 
\begin{assn}\label{assn:MLE}
The following properties hold for all sample size $N$ and for all $i\in\node_N$:
\begin{enumerate}[label=(\roman*)]
	\item (Fixed Dimensionality)The confounder $\C=(o_{1i},...,o_{ki})'$ is of dimension $k$.
	\item (Bounded Confounders) Confouders are uniformly bounded: there exists a constant $U$ such that $\max_{i\in\node_N}||\C||_{\infty}\leq U$.\footnote{$||\cdot||_{\infty}$ is defined as $||\C||_{\infty}=\max_{s=1,...,k}|o_{si}|$.} 
	\item (Logistic Model) The assignment probabilites follow a logistic model: for all $i\in\node_N$ there exists a $\theta_0\in\mathbb{R}^k$ such that
\begin{equation*}
	P(Z_i=1)=p(\C|\theta) = \frac{\exp(\C'\theta_0)}{1+ \exp(\C'\theta_0)}
\end{equation*}	
  \item Compact Parameter Space: $\theta_0\in\Theta_0$ where $\Theta_0\subset \mathbb{R}^k$ is a compact set.
	\item (Nonsingularity) The smallest eigenvalue of $\frac{1}{N}\sum_{i=1}^N \C \C'$ is uniformly bounded below.
	\item The MLE estimator is used to estimate the coefficient vector $\theta$:
	\begin{equation*}
		\widehat{\theta}_{\textnormal{MLE}} = \arg\max_{\theta\in\Theta_0} \sum_{i=1}^N \left(Z_i \C'\theta - \log(1+\exp(\C'\theta))\right).
	\end{equation*}
\end{enumerate}
\end{assn}
\begin{lemma}\label{lemma:logistic}
	Under C\ref{assn:assignment_obsstudy} and  C\ref{assn:MLE}, $\widehat{\theta}_{MLE}-\theta_0 = o_p(1)$ and 
	\begin{equation*}
		\widehat{\theta}_{\textnormal{MLE}}-\theta_0 = \left(  \frac{1}{N}\sum_{i=1}^N p(\C|\theta_0)(1-p(\C|\theta_0))\C\C'  \right)^{-1} \frac{1}{N} \sum_{i=1}^N \left( \left(Z_i-p(\C|\theta_0)\right)\C\right)+ o_p(N^{-\frac{1}{2}}).\footnote{We note that the randomness is reasoned over random assignments $Z_i$'s, holding potential outcomes and confoundners as fixed.}
			\end{equation*}
In particular, 	$\widehat{\theta}_{\textnormal{MLE}}-\theta_0=O_p(N^{-\frac{1}{2}})$.
\end{lemma}
\begin{proof}
	Identification is established by the standard KL-divergence argument and C\ref{assn:MLE}-(v). The rest of the proof is standard by Taylor expansions. See for example \cite{newey1994large} and \cite{chang2023design}.
\end{proof}

We defined the following IPW estimator:
\begin{equation*}
	\widehat{\tau}_{\IPW}(d) = \frac{1}{N}\sum_{i=1}^N \frac{Z_i}{p(\C|\widehat{\theta}_{\textnormal{MLE}}) } \mu_i(\Y; d) - \frac{1}{N}\sum_{i=1}^N \frac{1-Z_i}{1-p(\C|\widehat{\theta}_{\textnormal{MLE}})} \mu_i(\Y; d).
\end{equation*}	
For brevity, we denote $p\left(\C|\theta_0\right)=p_0(\C)$. We define the following coefficients:
\begin{equation*}
	\beta_{1,N} = \left(\frac{1}{N}\sum_{i=1}^N p_0(\C)\left(1-p_0(\C)\right)\C\C'\right)^{-1} \left(\frac{1}{N}\sum_{i=1}^N \E[\mu_i(1;d)p_0(\C)\left(1-p_0(\C)\right)\C] \right),
\end{equation*}	
and,
\begin{equation*}
	\beta_{0,N} = \left(\frac{1}{N}\sum_{i=1}^N p_0(\C)\left(1-p_0(\C)\right)\C\C'\right)^{-1} \left(\frac{1}{N}\sum_{i=1}^N \E[\mu_i(0;d)p_0(\C)\left(1-p_0(\C)\right)\C] \right).\footnote{Interested readers can check $\beta_{1_N}=\bar{\mu}_1(d)$ and $\beta_{0_N}=\bar{\mu}_0(d)$ when $\C$ only contains an intercept.}
\end{equation*}
A similar homophily condition is needed for the HAC variance estimation.
\begin{assn}\label{assn:obs_homophily}
For all sample size $N$, 
\begin{equation*}
\frac{1}{N}\sum_{i=1}^N \left(\tau_{i}(d;\eta)- \left(\C'\beta_{1,N}-\C'\beta_{0,N}\right)\right) \sum_{j\in  \mathcal{B}(i;d)} \left(\tau_{j}(d;\eta) -\left(\C'\beta_{1,N}-\C'\beta_{0,N}\right)\right)\geq 0.
\end{equation*}
\end{assn}

We can similarly define a HAC type estimator as in (\ref{assn:assignment_obsstudy}). Let $\X  = \begin{pmatrix}
		1  & 1 &  \hdots, 1 \\
		Z_1 & Z_2 & \hdots Z_N\\
	\end{pmatrix}'\in \mathbb{R}^{N\times 2}$.
	\begin{align*}
		\widehat{ \V}_{\HAC}^{\obs} (d)= (\X' \X)^{-1}\left(\sum_{i=1}^N \sum_{j=1}^N \X_i \X_j^{'}\hat{\epsilon}_i\hat{\epsilon}_j \mathbf{1}\{j \in \mathcal{B}(i;d) \}  \right)(\X' \X)^{-1},
	\end{align*}
	where $\widehat{\epsilon}_i =\frac{N_1}{N}\frac{Z_i}{p(\Cj|\widehat{\theta}_{\textnormal{MLE}})}\left(\mu_i(\Y;d)-  \C'\widehat{\beta}_{1,N} \right)+ \frac{N_0}{N}\frac{1-Z_i}{1-p(\Cj|\widehat{\theta}_{\textnormal{MLE}})}\left(\mu_i(\Y;d)-  \C'\widehat{\beta}_{0,N} \right) $.\footnote{$\widehat{\beta}_{1,N}= \left(\frac{1}{N}\sum_{i=1}^N p(\C|\widehat{\theta}_{\textnormal{MLE}})\left(1-p(\C|\widehat{\theta}_{\textnormal{MLE}})\right)\C\C'\right)^{-1}  \left(\frac{1}{N}\sum_{i=1}^N \frac{Z_i\mu_i(\Y;d)}{p(\C|\widehat{\theta}_{\textnormal{MLE}})} p(\C|\widehat{\theta}_{\textnormal{MLE}})(1-p(\C|\widehat{\theta}_{\textnormal{MLE}}))\C\right)$  and 
		$\widehat{\beta}_{0,N}= \left(\frac{1}{N}\sum_{i=1}^N p(\C|\widehat{\theta}_{\textnormal{MLE}})\left(1-p(\C|\widehat{\theta}_{\textnormal{MLE}})\right)\C\C'\right)^{-1}  \left(\frac{1}{N}\sum_{i=1}^N \frac{\left(1-Z_i\right)\mu_i(\Y;d)}{1-p(\C|\widehat{\theta}_{\textnormal{MLE}})} p(\C|\widehat{\theta}_{\textnormal{MLE}})(1-p(\C|\widehat{\theta}_{\textnormal{MLE}}))\C\right)$.  
	} The formula is similar to (\ref{assn:assignment_obsstudy}), except now that the residuals are weighted with the inverse propensity scores. Note that the covariates $\C$ do not appear in the matrix $\X$. Similarly, we can similarly define a \cite{savje-etal2018-unknown-interference} type variance estimator as in (\ref{varianceSAH}):
	\begin{equation*}
		\widehat{V}^{\obs}_{\SAH}(d)=\frac{1}{N^2}\sum_{i=1}^N \frac{Z_i c_i(d) \widehat{\epsilon}^2_i }{p^2(\C|\widehat{\theta}_{\textnormal{MLE}} )}+     \frac{1}{N^2}\sum_{i=1}^N \frac{ \left(1-Z_i\right) c_i(d) \widehat{\epsilon}^2_i }{\left(1-p(\C|\widehat{\theta}_{\textnormal{MLE}} )\right)^2},
	\end{equation*}

We have the following linearization results.
\begin{prop}
	Define:
\begin{align*}
& \widehat{\tau}^{\Taylor}_{\IPW}(d)=\frac{1}{N}\sum_{i=1}^N \C'\beta_{1,N}- \frac{1}{N}\sum_{i=1}^N \C'\beta_{0,N}\\
& + \frac{1}{N}\sum_{i=1}^n \frac{Z_i\left( \mu_i(\Y;d)-\C'\beta_{1,N} \right)}{p(\C|\theta_0)}-  \frac{1}{N}\sum_{i=1}^n \frac{\left(1-Z_i\right)\left( \mu_i(\Y;d)-\C'\beta_{0,N} \right)}{1-p(\C|\theta_0)}.
\end{align*}
	Under C\ref{assn:bounded-y}, C\ref{assn:local-inf}, C\ref{assn:spacing}, C\ref{assn:assignment_obsstudy}, and C\ref{assn:MLE}, 
\begin{equation}\label{eqn:obs_equivalence}
\widehat{\tau}^{\Taylor}_{\IPW}(d)-\widehat{\tau}_{\IPW}(d)=o_p(N^{-\frac{1}{2}}).
\end{equation}
Define
\begin{align}
	& \V\left(\widehat{\tau}^{\Taylor}_{\IPW}(d)\right) =
		\frac{1}{N^2}\sum_{i=1}^N\frac{\E\left[ \left(\muit- \C'\beta_{1,N}\right)^2\right]}{p_0(\C)} + \frac{1}{N^2}\sum_{i=1}^N \frac{\E\left[\left(\muic- \C'\beta_{0,N}\right)^2\right]}{1-p_0(\C)}  \\
		-&  \frac{1}{N^2}\sum_{i=1}^N\left(\E\left[\muit - \muic - (\C'\beta_{1,N} - \C'\beta_{0,N} )\right]\right)^2\\
		+ & \frac{1}{N^2} \sumij \sum_{a=0}^1 \sum_{b=0}^1 \E[\left(\muiab - \C'\beta_{a,N}\right) \left(\mujab -  \C'\beta_{b,N}\right)] \\
		- & \frac{1}{N^2}\sumij \sum_{a=0}^1 \sum_{b=0}^1  (-1)^{a+b} \E[\muia - \C'\beta_{a,N}] E[\mujb -  \C'\beta_{b,N}].
	\end{align}
In addition under C\ref{assn:obs_homophily},
\begin{align*}
	 & \V\left(\widehat{\tau}^{\Taylor}_{\IPW}(d)\right) \leq 	\frac{1}{N^2}\sum_{i=1}^N\frac{\E\left[ \left(\muit- \C'\beta_{1,N}\right)^2\right]}{p_0(\C)} + \frac{1}{N^2}\sum_{i=1}^N \frac{\E\left[\left(\muic- \C'\beta_{0,N}\right)^2\right]}{1-p_0(\C)}  \\
		+ & \frac{1}{N^2} \sumij \sum_{a=0}^1 \sum_{b=0}^1 \E[\left(\muiab - \C'\beta_{a,N}\right) \left(\mujab -  \C'\beta_{b,N}\right)] \\
\end{align*}
Provided that $N\times \V\left(\widehat{\tau}^{\Taylor}_{\IPW}(d)\right)$ is uniformly bounded below for all large $N$, $\widehat{\tau}_{\IPW}(d)$ follows the asymptotic distribution:
\begin{equation*}
	\frac{\widehat{\tau}_{\IPW}(d)-\textnormal{AME}(d;\eta)}{\sqrt{ \V\left(\widehat{\tau}^{\Taylor}_{\IPW}(d)\right)}}\overset{d}{\to} N(0,1).
\end{equation*}
Further more, for each $\alpha < \frac{1}{2}$,
	\begin{enumerate}[label=(\roman*)]
		\item $\lim_{N\to\infty}\mathbf{Prob}(z_{\frac{\alpha}{2}}\leq \left|\frac{\widehat{\tau}_{\HA}(d)-\textnormal{AME}(d;\eta) }{\sqrt{\widehat{\V}^{\obs}_{\SAH}(d)}}\right| \leq z_{1-\frac{\alpha}{2}})\geq 1-\alpha$;
		\item additionally under C\ref{assn:obs_homophily}, $\lim_{N\to\infty}\mathbf{Prob}(z_{\frac{\alpha}{2}}\leq \left|\frac{\widehat{\tau}_{\HA}(d)-\textnormal{AME}(d;\eta) }{\sqrt{\widehat{\V}^{\obs}_{\HAC}(d)}}\right| \leq z_{1-\frac{\alpha}{2}})\geq 1-\alpha$.
	\end{enumerate}
\end{prop}

\begin{proof}
We only show the calculation for the treated group. Calculation for the control group is similar. First note we have the following identitity: for any $\theta_1$ and $\theta_2$
\begin{equation}\label{eqn:lowerbound}
	\left|p(\C|\theta_1)-p(\C|\theta_2) \right| \leq ||\C||_2 \times ||\theta_1-\theta_2||_2,\footnote{$||\cdot||_2$ denote the vector $l_2$ norm.}
\end{equation}
by C\ref{assn:MLE}-(iii) and a Taylor expansion argument. Note this implies $p(\C|\theta_2)\in [p(\C|\theta_1)-||\C||_2 \times ||\theta_1-\theta_2||_2,p(\C|\theta_1)+||\C||_2 \times ||\theta_1-\theta_2||_2 ]$.

Now a Taylor expansion of $ \frac{1}{N}\sum_{i=1}^N \frac{Z_i}{p(\C|\widehat{\theta}_{\textnormal{MLE}}) } \mu_i(\Y; d)$ around $\theta_0$ yields
\begin{align}\label{eqn:linearization}
\begin{split}
 & \frac{1}{N}\sum_{i=1}^N \frac{Z_i \mu_i(\Y; d)}{p(\C|\widehat{\theta}_{\textnormal{MLE}}) }=  \frac{1}{N}\sum_{i=1}^N \frac{Z_i \mu_i(\Y; d)}{p(\C|\theta_0) } - \frac{1}{N}\sum_{i=1}^N \frac{Z_i \mu_i(\Y; d)}{p(\C|\theta_0)}(1-p(\C|\theta_0))\C' (\widehat{\theta}_{\textnormal{MLE}}-\theta_0)\\
& + (\widehat{\theta}_{\textnormal{MLE}}-\theta_0)' \left( \frac{1}{2N}\sum_{i=1}^N  \frac{Z_i\mu_i(\Y;d)}{p(\C|\tilde{\theta})} \left(1-p(\C|\tilde{\theta})\right) \C\C' \right)(\widehat{\theta}_{\textnormal{MLE}}-\theta_0),
\end{split}
\end{align}
where $\tilde{\theta}$ is a point between $\theta_0$ and $\widehat{\theta}_{\textnormal{MLE}}$. By  (\ref{eqn:lowerbound}) and $||\widehat{\theta}_{\textnormal{MLE}}-\theta_0 ||_2=O_p(N^{-\frac{1}{2}})$, we have $||\tilde{\theta}-\theta_0||=O_p(N^{-\frac{1}{2}})$, and $\inf_{i\in \node_N} p(\C|\tilde{\theta})$ is bounded away from 0 with probability one. Together with C\ref{assn:bounded-y}, C\ref{assn:MLE}-(ii) and Lemma \ref{lemma:logistic}, we have that the second order term is of order $O_p(N^{-1})$. 

Let $H=\frac{1}{N}\sum_{i=1}^N p(\C|\theta_0)(1-p(\C|\theta_0)\C\C'$. It's clear that by C\ref{assn:bounded-y}, C\ref{assn:assignment_obsstudy}-(iii), C\ref{assn:MLE}-(ii) and (v), and an argument similar to \ref{lemma:ht-var}   :
\begin{equation}
	H^{-1} \left(\frac{1}{N}\sum_{i=1}^N \frac{Z_i\mu_i(\Y;d)}{p(\C|\theta_0)} p(\C|\theta_0)(1-p(\C|\theta_0))\C\right) - \beta_{1,N} = o_p(1). 
\end{equation}
Hence (\ref{eqn:linearization}) becomes 
\begin{equation*}
 \frac{1}{N}\sum_{i=1}^N \frac{Z_i \mu_i(\Y; d)}{p(\C|\widehat{\theta}_{\textnormal{MLE}}) }=  \frac{1}{N}\sum_{i=1}^N \frac{Z_i \mu_i(\Y; d)}{p(\C|\theta_0) } - \frac{1}{N}\sum_{i=1}^N \frac{Z_i-p(\C|\theta_0)}{p(\C|\theta_0)}\C'\beta_{1,N} + o_p(N^{-\frac{1}{2}}).
\end{equation*}
In particular,
\begin{equation*}
 \frac{1}{N}\sum_{i=1}^N \frac{Z_i \mu_i(\Y; d)}{p(\C|\widehat{\theta}_{\textnormal{MLE}}) }-\bar{\mu}_1(d) = \frac{1}{N}\sum_{i=1}^N \C'\beta_{1,N}- \bar{\mu}_1(d) + \frac{1}{N}\sum_{i=1}^n \frac{Z_i\left( \mu_i(\Y;d)-\C'\beta_{1,N} \right)}{p(\C|\theta_0)}+ o_p(N^{-\frac{1}{2}})
\end{equation*}
This proves (\ref{eqn:obs_equivalence}). The calculation of the asymptotic variance, variance upper bound is similar to that in Lemma \ref{lemma:ha-var-homo}. The asymptotic distribution result follows similarly from Lemma \ref{lemma:ogburn-2020-1}.

For the variance estimation results, the proof is similar to that of Proposition \ref{prop:HACvarconsistency}. The only difference is to establish the equivalence between the variance estimator with estimated coefficient $\widehat{\theta}_{\textnormal{MLE}}$ and the variance estimator with true coefficient $\theta_0$. This again can be shown to hold using a Taylor expansion argument together with C\ref{assn:bounded-y}, C\ref{assn:local-inf}, C\ref{assn:spacing}
C\ref{assn:assignment_obsstudy}, C\ref{assn:MLE}-(ii) and (v). We omit the details here for brevity.
\end{proof}

\subsection{Effective Degree of Freedom Adjustment \label{app:edof}}
When the distance of the AME is significant relative to the dataset's total spatial coverage, the reliability of confidence intervals based on a normal approximation may diminish due to small effective sample sizes. This situation is similar to the cluster-robust inference settings with a small number of clusters. To improve the finite-sample coverage of the confidence intervals, we derive the effective degree of freedom adjustment as in \citep{imbens-kolesar2012-robust-small, bell_mccaffrey2002_variance, young-2015-improved-inference}.  

Recall the regression interpretation of the Hajek estimator discussed in (\ref{est-inf}) and the HAC variance estimator $\widehat{\Sigma}_{\HAC}(d)$ in (\ref{HAC_formula}). Define $\mathbf{w}=(0,1)'$. Suppose we want to test the null hypothesis $\textnormal{AME}(d) = \tau_0$, the t-statistic under the null can be written as
\begin{align*}
\frac{\widehat{\tau}(d) - \tau_0}{\sqrt{\mathbf{w}' \widehat{\Sigma}_{\HAC}\mathbf{w}}} = 
\frac{\frac{\widehat{\tau}_\HA(d) - \tau_0}{\sqrt{\mathbf{w}' (\mathbf{X}'\mathbf{X})^{-1}\mathbf{w}}}}{\sqrt{\frac{\mathbf{w}' \widehat{\Sigma}_{\HAC}\mathbf{w}}{\mathbf{w}' (\mathbf{X}'\mathbf{X})^{-1}\mathbf{w}}}}.
\end{align*}

Define $\lambda = \mathbf{w}' (\mathbf{X}'\mathbf{X})^{-1}\X \in\mathbb{R}^{1\times N}$,  $\M = \I - \X'(\mathbf{X}'\mathbf{X})^{-1}\X\in\mathbb{R}^{N\times N}$, and the weight matrix $\Omega\in\mathbb{R}^{\N\times N}$ where $\Omega_{ij}= \mathbf{1}\{j \in \mathcal{B}(i;d) \}$.  \cite{young-2015-improved-inference}'s effective degree of freedom adjustment in our setting is calculated as:
\begin{equation}
	\mu = \mathbf{Trace}\left( \frac{N_1 N_0}{N} \M \left(\Omega \circ  
\left(\lambda'\lambda\right) \right) \M\right),
\end{equation}
where $\circ$ denotes the pointwise (Hadamard) matrix product. The estimated variance is inflated by $\frac{1}{\mu}$, becoming $\frac{\mathbf{w}' \widehat{\Sigma}_{\HAC}\mathbf{w}}{\mu}$. We further define the quantity:
\begin{equation}
	\nu = 2*\mathbf{Trace}\left( \left(\frac{N_1 N_0}{N} \M \left(\Omega \circ  
	\left(\lambda'\lambda\right) \right) \M\right)\left(\frac{N_1 N_0}{N} \M \left(\Omega \circ  
	\left(\lambda'\lambda\right) \right) \M\right)\right).
\end{equation}	
We use the $\frac{\alpha}{2}$-quantile and $\left(1-\frac{\alpha}{2}\right)$-quantile of the t-distribution with $\frac{2\mu^2}{v}$ degree of freedom to construct the confidence interval. 

\newpage

\section{Simulation Results}\label{Section:Simulation_Appendix}

\subsection{Simulation Designs}
\textbf{Outcome Points and Intervention Nodes}\\
Let $S=\{80,100,120\}$. We first generate a raster with $S \times S$ tiles, each of which is an outcome point. The side length of each tile is $1$ generic unit. The untreated potential outcome for the outcome point $x$, $Y_x(0)$, is randomly drawn from the standard normal distribution.

For point interventions, we coarsen the raster into $\frac{S^2}{100}$ tiles and random sample half of the tiles. For each sampled tile, we add perturbations to the center of the tile twice to create a pair of intervention points. 

For polygon interventions, we subsample $\frac{S^2}{10}$ tiles, use the sampled tiles and the Voronoi tessellation to generate polygons. To construct the set of intervention nodes, we randomly sample $\frac{S^2}{200}$ polygons and, for each sampled polygons, we randomly sample an adjacent polygon. This gives a total of $\frac{S^2}{100}$ polygons as intervention nodes.

\noindent \textbf{Data Generating Process}\\
Let $\Gamma(d,a,b)$ denote the density of a gamma distribution at value  d with shape parameter a and scale parameter b.  
For an outcome point $x$, we define the effect function:
\begin{equation}
f_x(d) =  3\alpha_x\times\left(\Gamma(d; 1, 1) - \Gamma(d; 5, 0.5)\right) \times \max\left( \left(1-\frac{d^2}{36}\right),0\right),
\end{equation}
where $\alpha_x$ captures the heterogeneity in treatment effects among outcome points. $\alpha_x$ are generated from kriging interpolation of some randomly generated values on a coarsened raster. The term $ \max\left( \left(1-\frac{d^2}{36}\right),0\right)$ is used to guarantee that there is no treatment effect beyond 6 units of distance.

For the additive-effect case, the outcome at point $x$, $Y_x(\Z)$, is generated by the formula
\begin{equation}
Y_x(\Z) = Y_x(0) + \sum_{i=1}^{n} f_x(d_{ix})Z_i,
\end{equation}
where $n$ denotes the number of intervention nodes, and $d_{ix}$ is the distance from the outcome point $x$ to the intervention node $i$. 

For the interactive-effect case, we define another effect function:
\begin{equation}
	g_x(d) =  3\alpha_x\times\left(\Gamma(d; 5, 0.5)\right) \times \max\left( \left(1-\frac{d^2}{36}\right),0\right).
\end{equation}
The outcome at point $x$, $Y_x(\Z)$, is generated by the formula
\begin{equation}
	Y_x(\Z) = Y_x(0) + \sum_{i=1}^{n} f_x(d_{ix})Z_i + \sum_{i=1}^{n} g_x(d_{ix})Z_i Z_{\mathcal{N}(i)},
\end{equation}
where $\mathcal{N}(i)$ denotes the intervention node that is closest to the intervention node $i$.  When there are multiple polygons that are adjacent to the $i$th polygon, we pick the polygon with the smallest assigned index to be $\mathcal{N}(i)$.

\noindent \textbf{Experimental Designs}\\
For our simulation, we use a Bernoulli design where $Z_i=1$ with probability 0.5 for all intervention nodes. We run the simulation for 2000 times. 
Denote the $p$th assignment of $\Z$ as $\Z_p$ and the value of $Z_i$ under the assignment as $Z_{pi}$. To obtain AME, we first calculate
\begin{equation}
\begin{aligned}
	\tau_{ix}(\eta) \approx \frac{\sum_{p=1}^{2000}Z_{pi}Y_x(\Z_p)}{\sum_{p=1}^{2000}Z_{pi}} - \frac{\sum_{p=1}^{2000}(1-Z_{pi})Y_x(\Z_p)}{\sum_{p=1}^{2000}(1-Z_{pi})}.
\end{aligned}
\end{equation}
Then $\tau_i(d; \eta)$ and the AME can be constructed following their definitions.

\subsection{More Point-Intervention Simulation Results in Section \ref{Section:Simulation}  }

We report simulation results  for the Smoothed Hajek estimator with a triangular kernel and a bandwidth 1. Figure \ref{fig:MSE_sm_Hajek_point} reports results on MSE, Figure  \ref{fig:AME_sm_hajek_additive_point} on coverage and half length for the additive case, and Figure \ref{fig:AME_sm_hajek_interactive_point} on  coverage and half length for the interactive case

\begin{figure*}[h]
	\centering
	\begin{subfigure}[b]{0.5\textwidth}
		\centering
		\includegraphics[scale=0.45]{./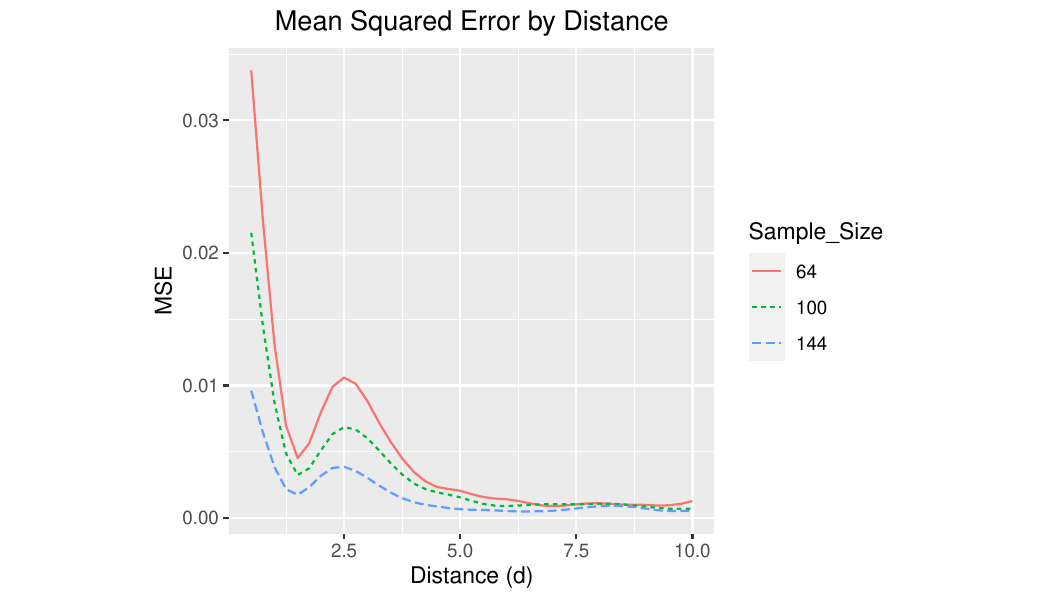}
		\caption{MSE for the additive case (\ref{additive_effect})}
	\end{subfigure}%
	~ 
	\begin{subfigure}[b]{0.5\textwidth}
		\centering
		\includegraphics[scale=0.45]{./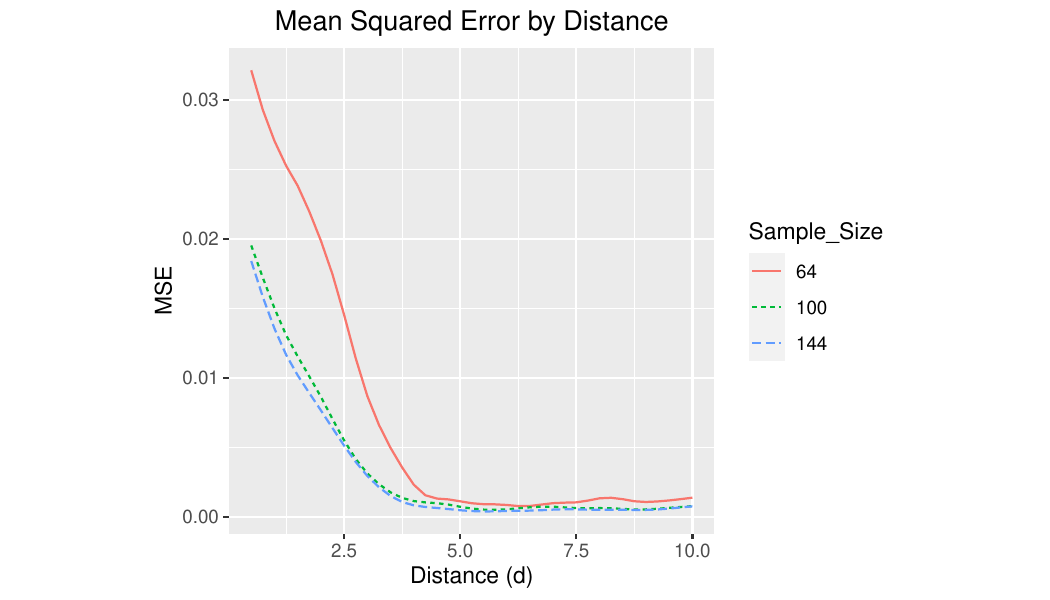}
		\caption{MSE for the interactive case (\ref{interactive_effect})}
	\end{subfigure}
	\caption{The left and right figures report the Mean Squared Errors of the Smoothed Hajek estimator in the additive-effect case and the interactive-effect case, respectively.  }
	\label{fig:MSE_sm_Hajek_point}
\end{figure*}

\begin{figure}[h]
	\centering
	\includegraphics[width=0.5\textwidth]{./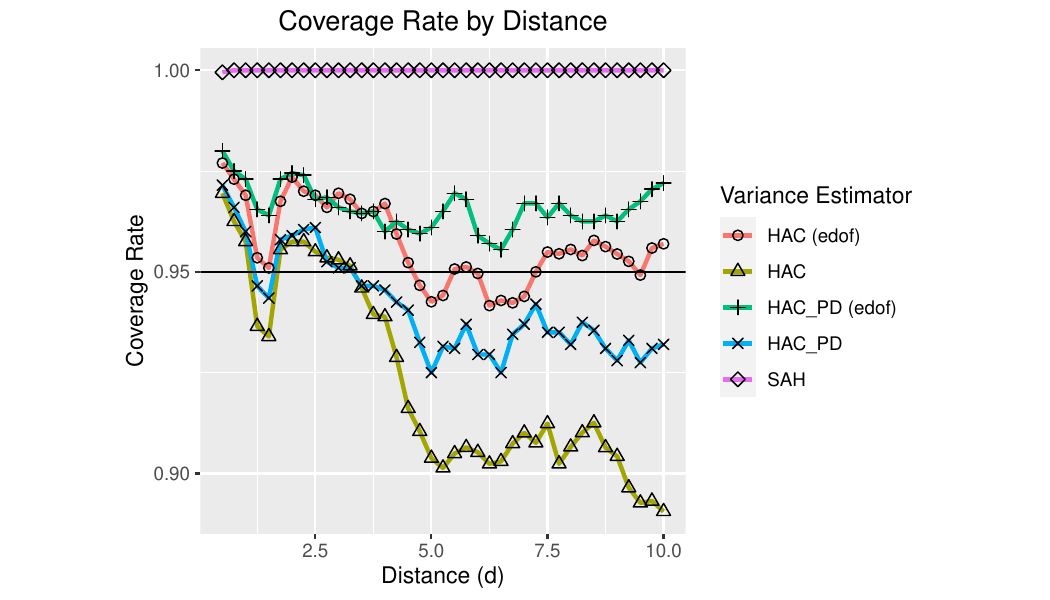}\includegraphics[width=0.5\textwidth]{./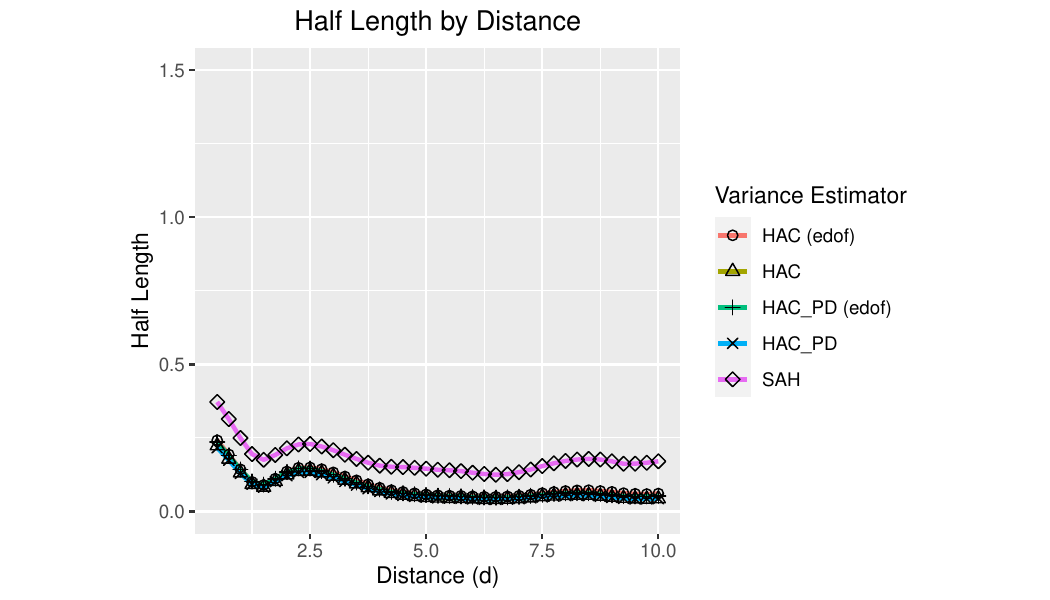}
	\caption{ Point-intervention simulation results on the coverage rates and half lengthes of two-sided 95\% confidence intervals with the Smoothed Hajek estimator and different variance estimators in the additive effect case (\ref{additive_effect}). The sample size is 144. HAC refers to the CI with the HAC variance estimator in (\ref{HAC_formula}) and a normal critical value. The length and coverage of the HAC CI is assesed with respect to the cases where HAC estimator returns a nonnegative value. HAC\_PD refers to the CI with positive-semidefinite HAC variance estimator in (\ref{formula:HAC_PD}) and a normal critical value.  HAC (edof) refers to the CI with HAC variance estimator in (\ref{HAC_formula}) and empirical degree of freedom adjustment. HAC\_PD (edof) refers to the CI with HAC variance estimator in (\ref{formula:HAC_PD})  and empirical degree of freedom adjustment. SAH refers to the CI with  SAH variance estimator (\ref{varianceSAH}).  }
	\label{fig:AME_sm_hajek_additive_point}
\end{figure}

\begin{figure}[h]
	\centering
	\includegraphics[width=0.5\textwidth]{./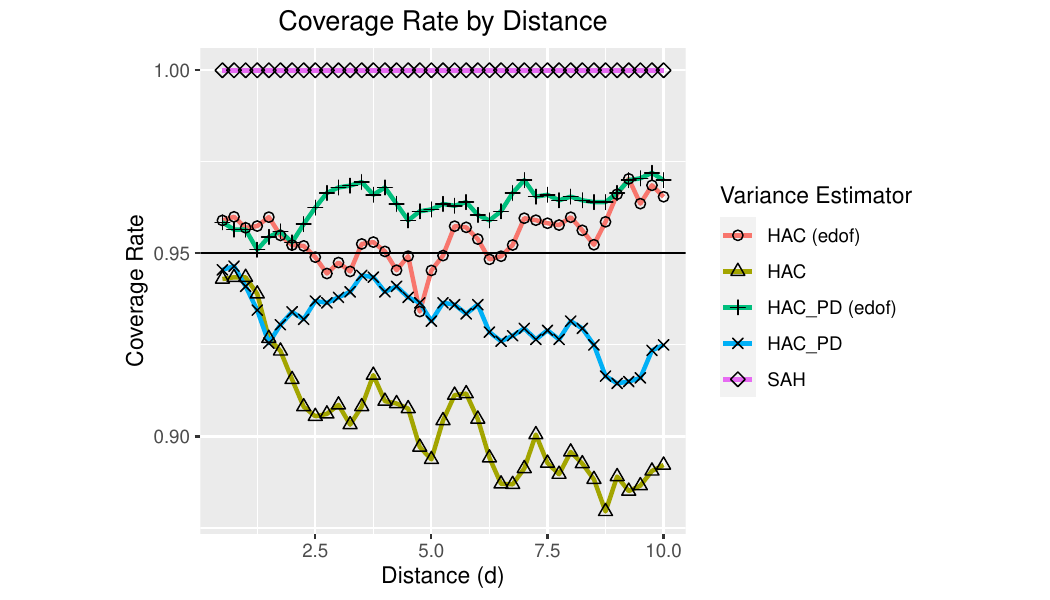}\includegraphics[width=0.5\textwidth]{./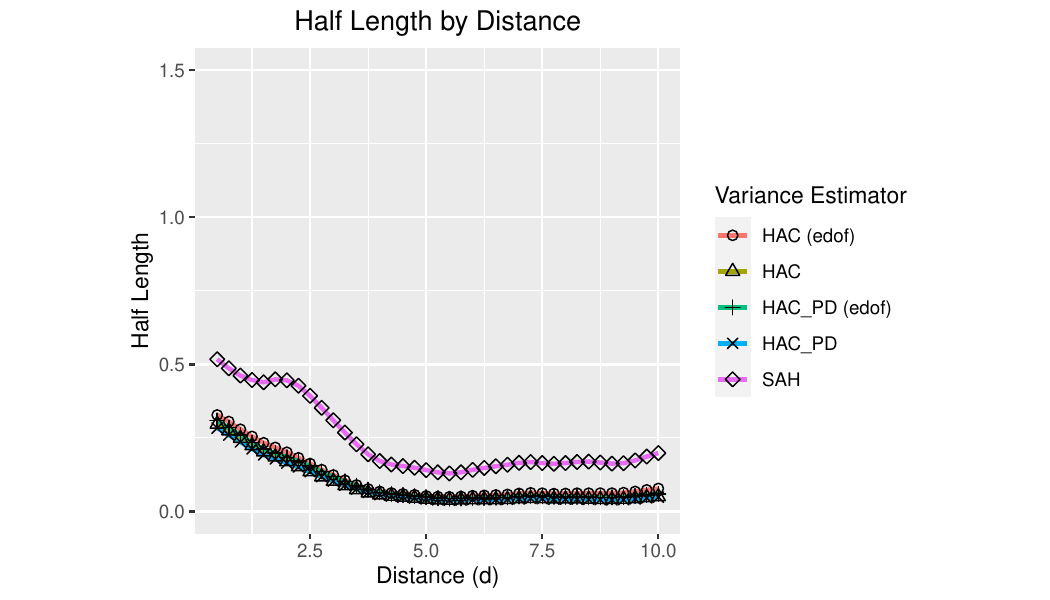}
	\caption{ Point-intervention simulation results on the coverage rates and half lengthes of two-sided 95\% confidence intervals with the Smoothed Hajek estimator and different variance estimators in the interactive effect case (\ref{interactive_effect}). The sample size is 144. HAC refers to the CI with the HAC variance estimator in (\ref{HAC_formula}) and a normal critical value. The length and coverage of the HAC CI is assesed with respect to the cases where HAC estimator returns a nonnegative value. HAC\_PD refers to the CI with positive-semidefinite HAC variance estimator in (\ref{formula:HAC_PD}) and a normal critical value.  HAC (edof) refers to the CI with HAC variance estimator in (\ref{HAC_formula}) and empirical degree of freedom adjustment. HAC\_PD (edof) refers to the CI with HAC variance estimator in (\ref{formula:HAC_PD})  and empirical degree of freedom adjustment. SAH refers to the CI with  SAH variance estimator (\ref{varianceSAH}).  }
	\label{fig:AME_sm_hajek_interactive_point}
\end{figure} 

\subsection{Results for a Polygon-Intervention Simulation}

We report simulation results for the Hajek estimator with a polygon intervention simulation. Figure \ref{fig:MSE_Hajek_polygon} reports results on MSE, Figure \ref{fig:AME_hajek_additive_polygon} on coverage and half length for the additive case, and Figure \ref{fig:AME_hajek_interactive_polygon} on coverage and half length for the interactive case.

\begin{figure*}[h]
	\centering
	\begin{subfigure}[b]{0.5\textwidth}
		\centering
		\includegraphics[scale=0.45]{./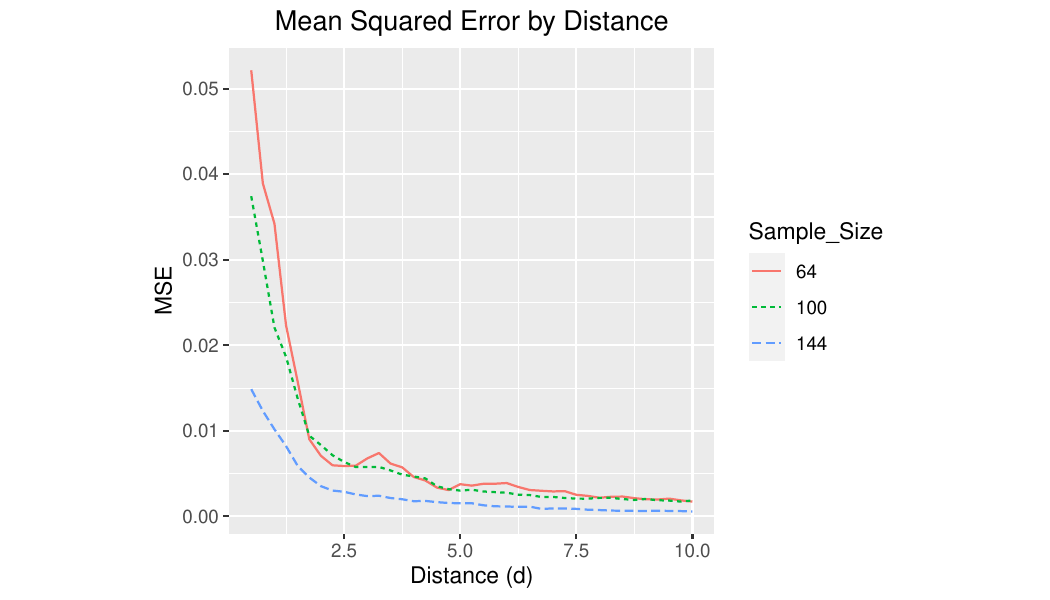}
		\caption{MSE for the additive case (\ref{additive_effect})}
	\end{subfigure}%
	~ 
	\begin{subfigure}[b]{0.5\textwidth}
		\centering
		\includegraphics[scale=0.45]{./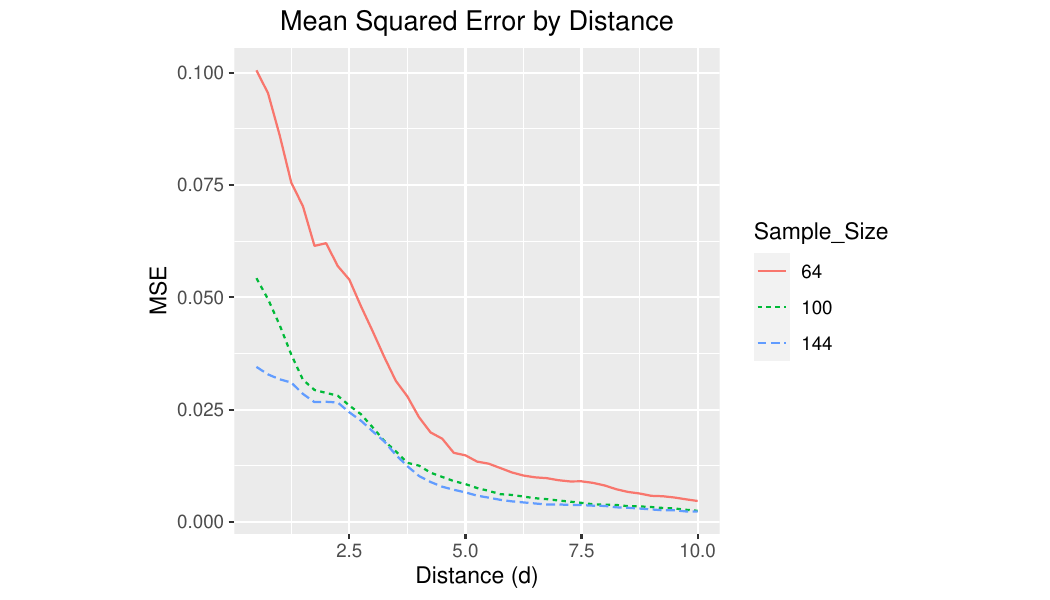}
		\caption{MSE for the interactive case (\ref{interactive_effect})}
	\end{subfigure}
	\caption{The left and right figures report the Mean Squared Errors of the Hajek estimator in the additive-effect case and the interactive-effect case, respectively.  }
	\label{fig:MSE_Hajek_polygon}
\end{figure*}

\begin{figure}[h]
	\centering
	\includegraphics[width=0.5\textwidth]{./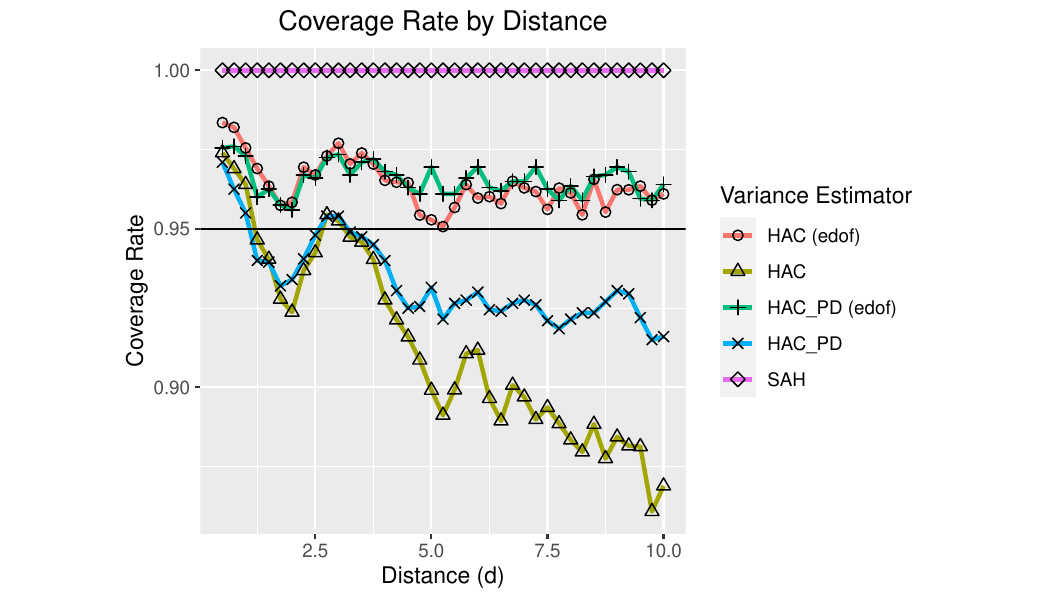}\includegraphics[width=0.5\textwidth]{./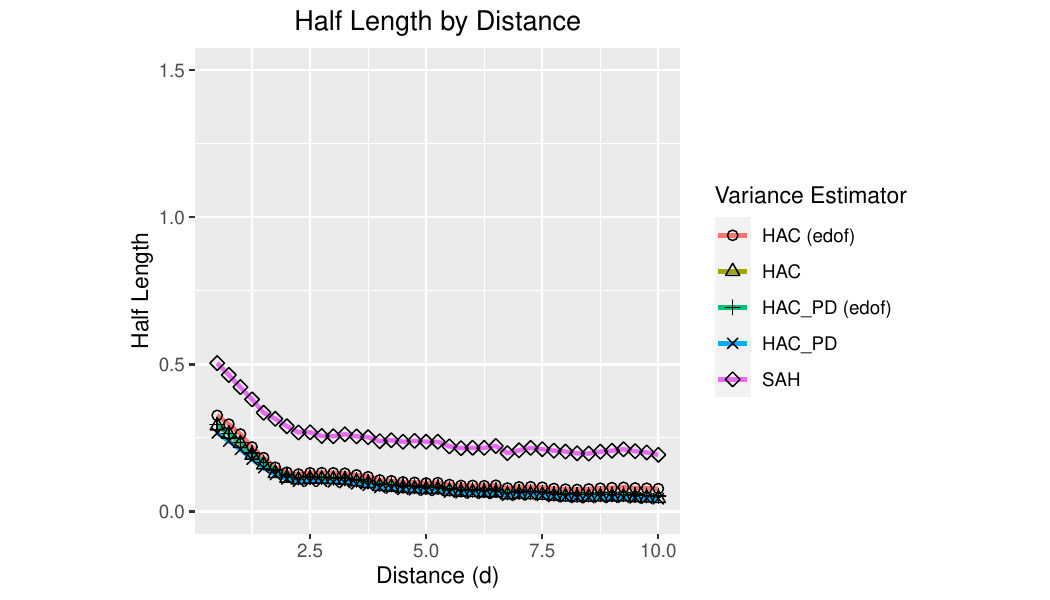}
	\caption{ Polygon-intervention simulation results on the coverage rates and half lengthes of two-sided 95\% confidence intervals with the Hajek estimator and different variance estimators in the additive effect case (\ref{additive_effect}). The sample size is 144. HAC refers to the CI with the HAC variance estimator in (\ref{HAC_formula}) and a normal critical value. The length and coverage of the HAC CI is assesed with respect to the cases where HAC estimator returns a nonnegative value. HAC\_PD refers to the CI with positive-semidefinite HAC variance estimator in (\ref{formula:HAC_PD}) and a normal critical value.  HAC (edof) refers to the CI with HAC variance estimator in (\ref{HAC_formula}) and empirical degree of freedom adjustment. HAC\_PD (edof) refers to the CI with HAC variance estimator in (\ref{formula:HAC_PD})  and empirical degree of freedom adjustment. SAH refers to the CI with  SAH variance estimator (\ref{varianceSAH}).  }
	\label{fig:AME_hajek_additive_polygon}
\end{figure}

\begin{figure}[h]
	\centering
	\includegraphics[width=0.5\textwidth]{./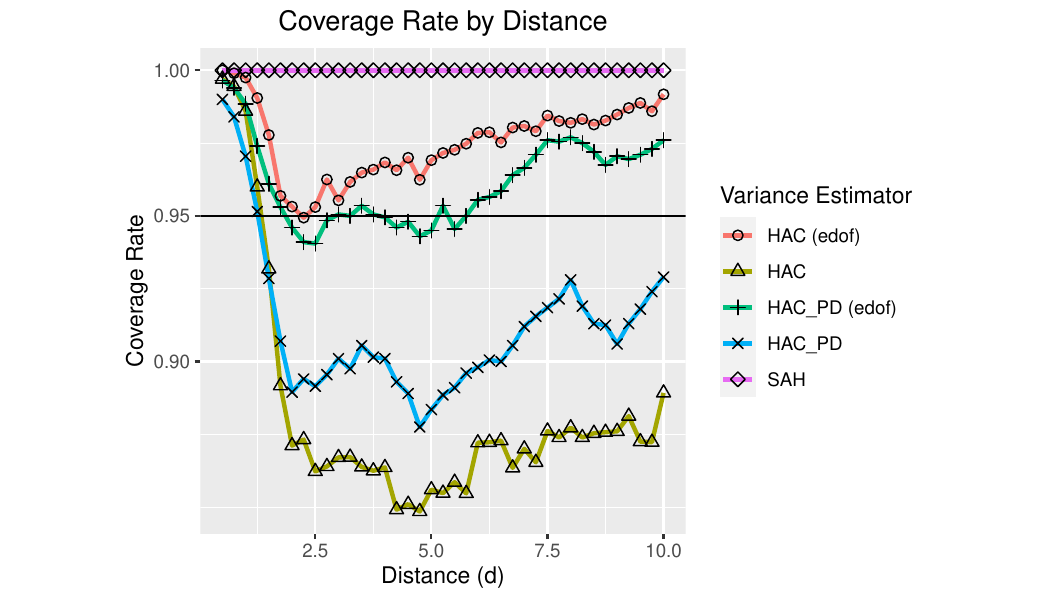}\includegraphics[width=0.5\textwidth]{./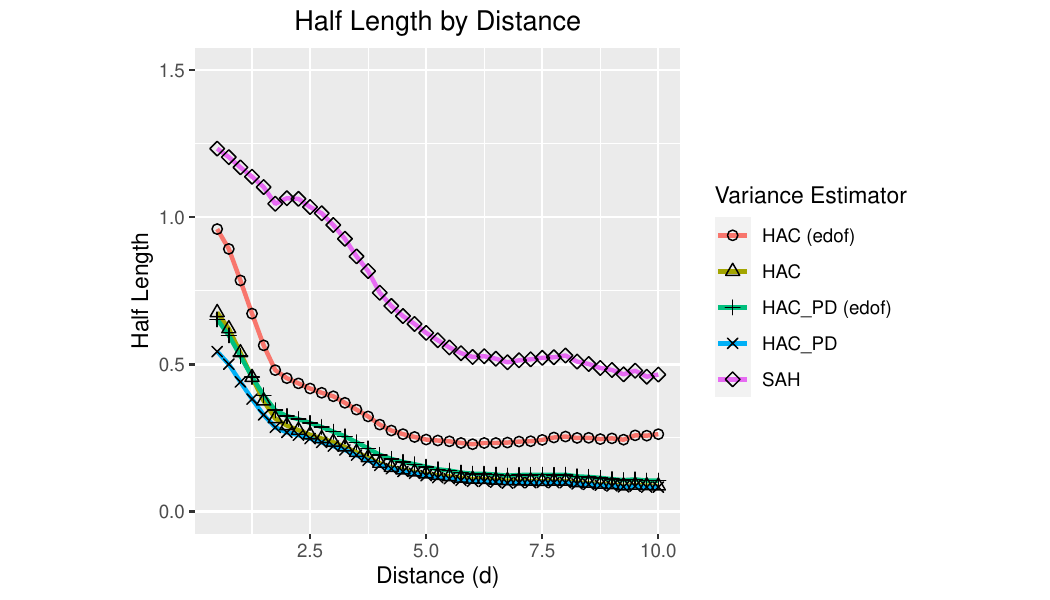}
	\caption{ Polygon-intervention simulation results on the coverage rates and half lengths of two-sided 95\% confidence intervals with the Hajek estimator and different variance estimators in the interactive effect case (\ref{interactive_effect}). The sample size is 144. HAC refers to the CI with the HAC variance estimator in (\ref{HAC_formula}) and a normal critical value. The length and coverage of the HAC CI is assesed with respect to the cases where HAC estimator returns a nonnegative value. HAC\_PD refers to the CI with positive-semidefinite HAC variance estimator in (\ref{formula:HAC_PD}) and a normal critical value.  HAC (edof) refers to the CI with HAC variance estimator in (\ref{HAC_formula}) and empirical degree of freedom adjustment. HAC\_PD (edof) refers to the CI with HAC variance estimator in (\ref{formula:HAC_PD})  and empirical degree of freedom adjustment. SAH refers to the CI with  SAH variance estimator (\ref{varianceSAH}).  }
	\label{fig:AME_hajek_interactive_polygon}
\end{figure}

We report simulation results for the Smoothed Hajek estimator with a polygon intervention simulation. We use a triangular kernel and bandwidth 1. Figure \ref{fig:MSE_Hajek_polygon_sm} reports results on MSE, Figure  \ref{fig:AME_hajek_additive_polygon_sm} reports results on coverage and half length for the additive case, and Figure  \ref{fig:AME_hajek_interactive_polygon_sm} on coverage and half length for the interactive case.

\begin{figure*}[h]
	\centering
	\begin{subfigure}[b]{0.5\textwidth}
		\centering
		\includegraphics[scale=0.45]{./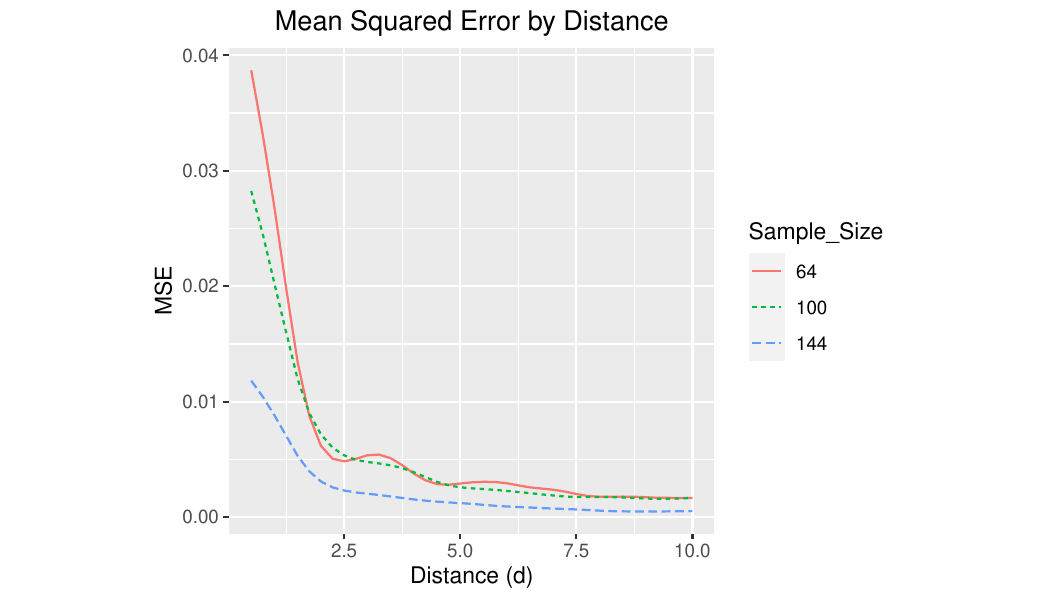}
		\caption{MSE for the additive case (\ref{additive_effect})}
	\end{subfigure}%
	~ 
	\begin{subfigure}[b]{0.5\textwidth}
		\centering
		\includegraphics[scale=0.45]{./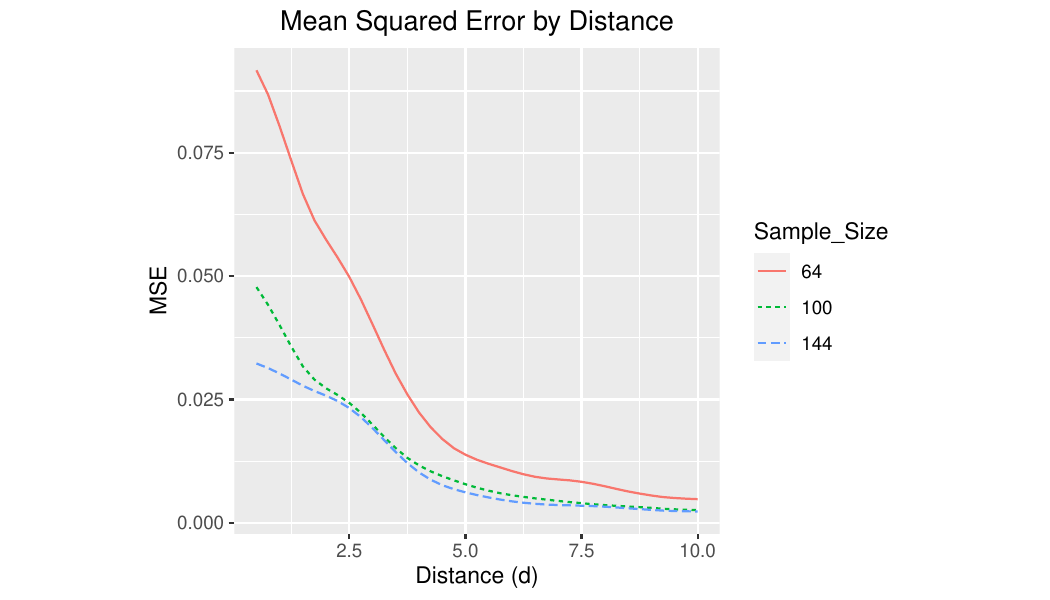}
		\caption{MSE for the interactive case (\ref{interactive_effect})}
	\end{subfigure}
	\caption{The left and right figures report the Mean Squared Errors of the Smoothed Hajek estimator in the additive-effect case and the interactive-effect case, respectively.  }
	\label{fig:MSE_Hajek_polygon_sm}
\end{figure*}

\begin{figure}[h]
	\centering
	\includegraphics[width=0.5\textwidth]{./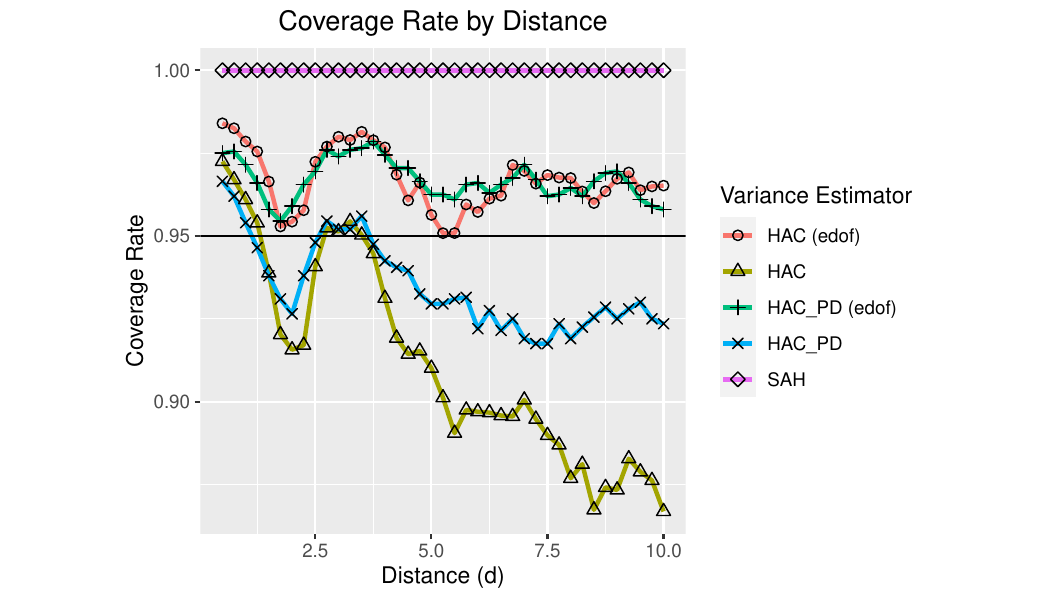}\includegraphics[width=0.5\textwidth]{./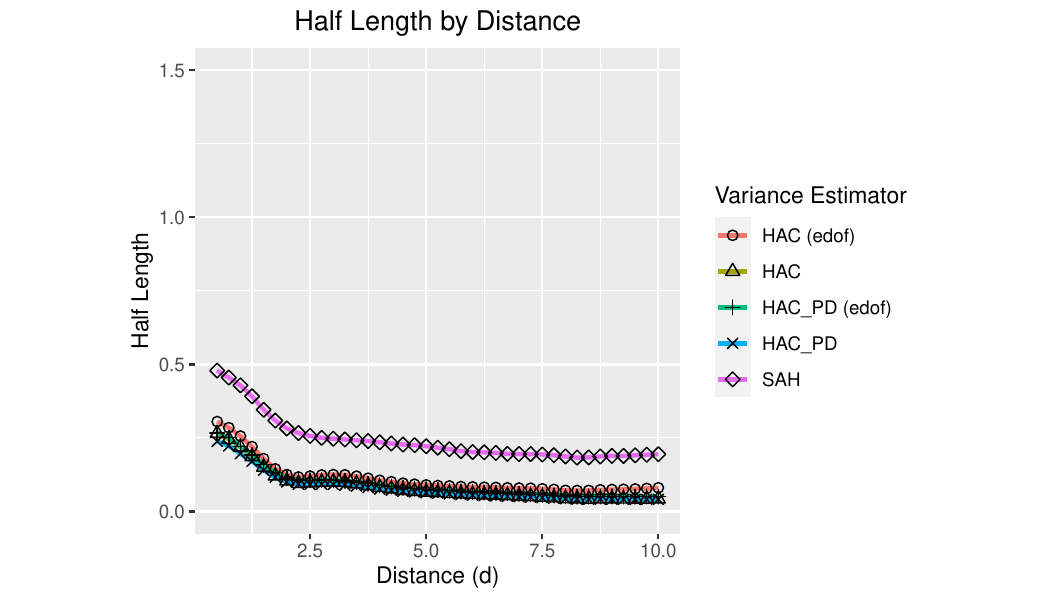}
	\caption{ Polygon-intervention simulation results on the coverage rates and half lengthes of two-sided 95\% confidence intervals with the Smoothed Hajek estimator and different variance estimators in the additive effect case (\ref{additive_effect}). The sample size is 144. HAC refers to the CI with the HAC variance estimator in (\ref{HAC_formula}) and a normal critical value. The length and coverage of the HAC CI is assesed with respect to the cases where HAC estimator returns a nonnegative value. HAC\_PD refers to the CI with positive-semidefinite HAC variance estimator in (\ref{formula:HAC_PD}) and a normal critical value.  HAC (edof) refers to the CI with HAC variance estimator in (\ref{HAC_formula}) and empirical degree of freedom adjustment. HAC\_PD (edof) refers to the CI with HAC variance estimator in (\ref{formula:HAC_PD})  and empirical degree of freedom adjustment. SAH refers to the CI with  SAH variance estimator (\ref{varianceSAH}).  }
	\label{fig:AME_hajek_additive_polygon_sm}
\end{figure}

\begin{figure}[h]
	\centering
	\includegraphics[width=0.5\textwidth]{./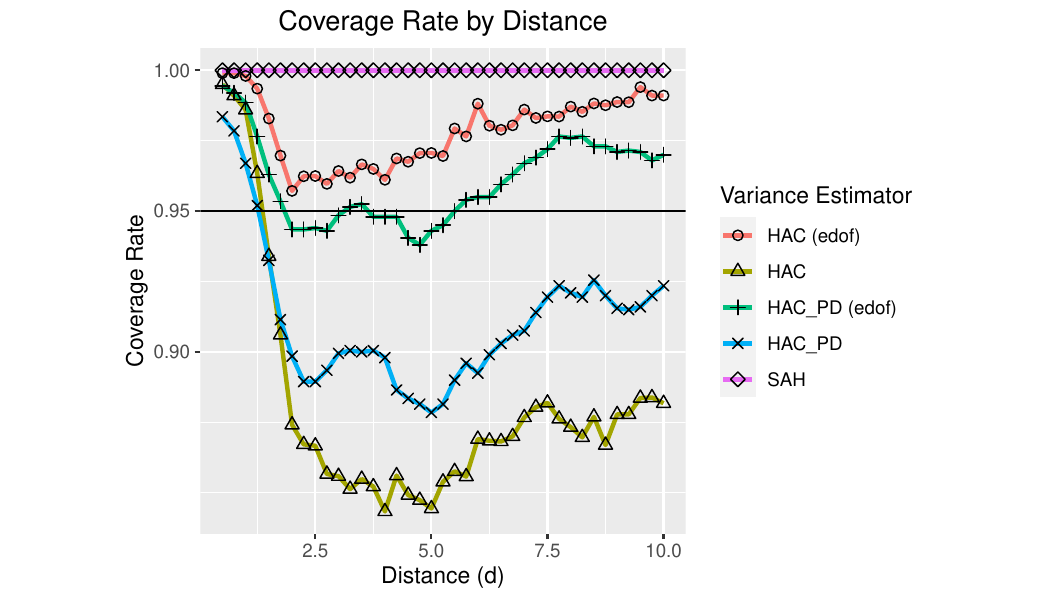}\includegraphics[width=0.5\textwidth]{./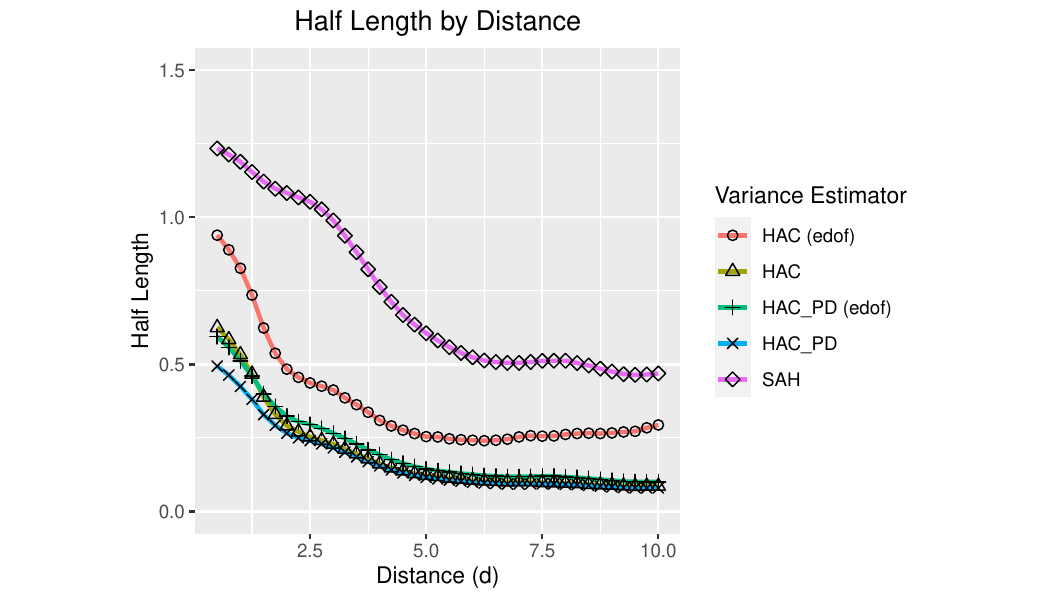}
	\caption{ Polygon-intervention simulation results on the coverage rates and half lengthes of two-sided 95\% confidence intervals with the Smoothed Hajek estimator and different variance estimators in the interactive effect case (\ref{interactive_effect}). The sample size is 144. HAC refers to the CI with the HAC variance estimator in (\ref{HAC_formula}) and a normal critical value. The length and coverage of the HAC CI is assesed with respect to the cases where HAC estimator returns a nonnegative value. HAC\_PD refers to the CI with positive-semidefinite HAC variance estimator in (\ref{formula:HAC_PD}) and a normal critical value.  HAC (edof) refers to the CI with HAC variance estimator in (\ref{HAC_formula}) and empirical degree of freedom adjustment. HAC\_PD (edof) refers to the CI with HAC variance estimator in (\ref{formula:HAC_PD})  and empirical degree of freedom adjustment. SAH refers to the CI with  SAH variance estimator (\ref{varianceSAH}).  }
	\label{fig:AME_hajek_interactive_polygon_sm}
\end{figure}

\clearpage

\subsection{Additional Results for Empirical Applications}

Figure~\ref{fig:jaya-forests} provides more details on the construction of the new outcome variable for use in our re-analysis of the \citet{Jayachandran267} experiment.

\begin{figure}[b]\centering 
\includegraphics[width=0.4\textwidth]{./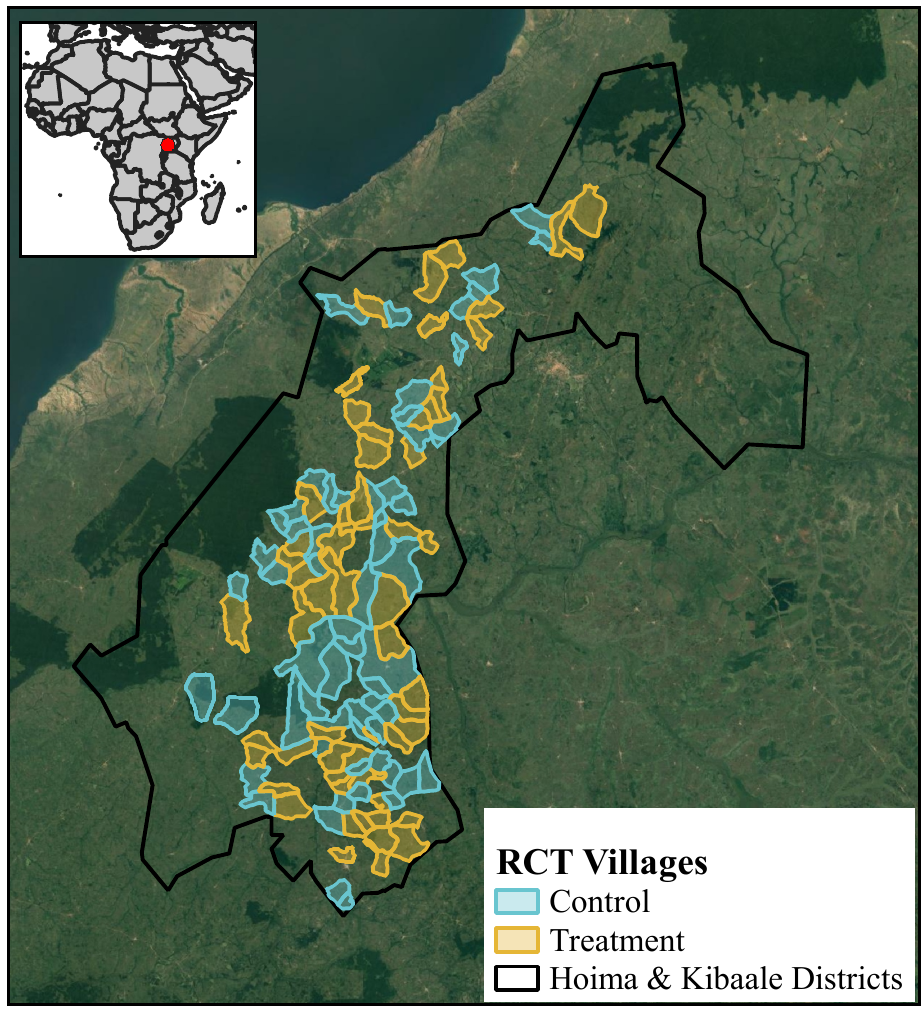} \\
\includegraphics[width=.8\textwidth]{./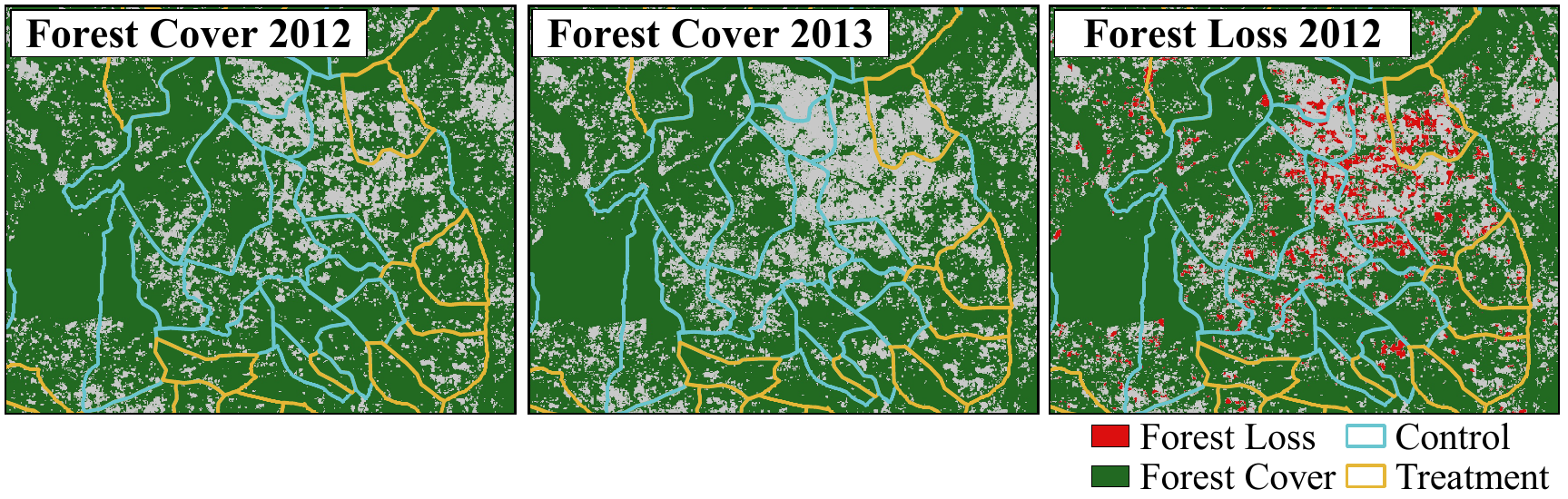}
\caption{\label{fig:jaya-forests} Top plot: Study area of randomized control trial for a PES program in Hoima and Kibaale district in Uganda, from \citet{Jayachandran267}. Boundaries of treatment (60) and control (61) villages were digitized using publicly available data and published maps. Bottom plots: The Global Forest Cover (GFC) dataset over a subset of the study area showing forest cover for 2012, 2013, and forest loss in 2012 \citep{hansen2013high}.}
\end{figure}


\clearpage

\section{Weaker assumptions on the extent of interference}
\subsection{Weaker Assumptions on the Extent of Interference}\label{Section:weak_interference}

This section extends our inferential results on the Hajek estimator by relaxing the local interference assumption C\ref{assn:local-inf} such that spatial dependence does not have to be contained within a strict distance cutoff. 
We allow for more general spatial ``near-epoch dependence'' \citep{jenish2012spatial} and provide results on root-N consistency, asymptotic normality, and HAC variance estimation.\footnote{The near-epoch dependence overcomes limitations of the spatial mixing literature, for example, as in \cite{jenish_prucha2009_clt_spatial_interaction}. The mixing condition may be too stringent for design-based causal inference settings, in which outcomes are modeled as a function of Bernoulli random variables. See \cite{andrews1984non} and \cite{doukhan2002rates}.   } 
We do not pursue the most general results (e.g. the most relaxed conditions on data moments and correlation structures). Rather, we impose assumptions that are commensurable with the assumptions in the main text. Proofs of root-N consistency and asymptotic normality are standard and follow as special cases of the results in \cite{jenish2012spatial}. The proof of the consistency of the HAC variance estimator is relatively new, as far as we know.\footnote{The logic of our proof closely follows \cite{conley99_spatial}, but some estimates are derived in the near-epoch dependency setup rather than the spatial-mixing setup. \cite{davidson2020new} studies a similar problem but the setup and assumption are different from this paper.}

We introduce the near-epoch dependence concept. Our version is a simplified version from \cite{jenish2012spatial}. Denote the sample size by $N$. Let $\node_N$ be a set of intervention nodes equipped with a distance metric $\gamma$ that satisfies positivity and the triangle inequality.  Let $H_N=\{H_{i,N}, i\in \mathcal{S}_N\}$ and $V_N=\{V_{i,N},i\in \mathcal{S}_N\}$ be two arbitrary sets of random variables. Define $\mathcal{F}_i(s)=\sigma\left( V_{j,N},j\in \mathcal{S}_n, \gamma(i,j)\leq s\right)$, the $\sigma$-field generated by the random variables $\{V_{j,N}\}$ located in the s-neighborhood of the intervention node $i$. Denote the $L_p$-norm of a random variable as $||X||_p=\left(E[|X|^p]\right)^{\frac{1}{p}}$.

\begin{definition}[$L_2$-NED]
	Let $H_N=\{H_{i,N}, i\in\node_N, N\geq 1\}$ be a random field with $||H_{i,N}||_2
 <\infty$. Let $V_N=\{V_{i,N},i\in\node_N, N\geq 1\}$ be a random field, where $|\node_N|=N$, and let $g_N=\{g_{i,N},i\in\node_N,N\geq 1\}$ be a set of positive constants. Then the random field $H_N$ is said to be $L_2(g_N)$-near-epoch dependent on the random field $V_N$ if, for all $i$,
	\begin{equation}
		||H_{i,N}-\E[H_{i,N}|\mathcal{F}_{i}(s)]||_2\leq g_{i,N}\psi(s),
	\end{equation}
	for a non-increasing sequence $\psi(s)\geq 0$ with $\lim_{s\to\infty}\psi(s)=0$. $H_N$ is said to be $L_2-$NED on $V_N$ of size $-\lambda$ if $\psi(s)=O(s^{-\mu})$ for some $\mu>\lambda>0$. Further more, if $\sup_{N}\sup_{i\in\node_N}{g_{i,N}}<\infty$, then $H_N$ is said to be uniformly $L_2$-NED on $V_N$.  
\end{definition}

We establish several implications of the $L_2$-NED property. The following lemma is inspired by Theorem 17.5 in \cite{davidson1994stochastic}.

Let $\mathcal{F}_i^c(s)=\sigma\left( V_{j,N},j\in \mathcal{S}_n, \gamma(i,j)> s\right)$, the $\sigma$-field generated by the random variables $\{V_{j,N}\}$ that are more than $s$ distance away from the intervention node $i$. The first lemma below states that $L_2$-NED random variables are nearly constant when conditioned on random variables $\{V_{j,N}\}$ that are far away. This is an adaption of the mixingale property from the time series literature to our spatial process setting.\footnote{For a reference, see Chapter 17.2 of \cite{davidson1994stochastic}.}
\begin{lemma}\label{lemma:mixingale}
	Let $V_N$ be a set of independent random variables, and $H_N$ be a set of zero-mean $L_2(g_N)$-NED random variables on $V_N$, then $||\E[H_{i,N}|\mathcal{F}_i^c(s)]||_2\leq g_{i,N}\psi(s)$.
\end{lemma}

\begin{proof}
We have the following chain of inequalities:
\begin{align}
	& ||\E[H_{i,N}|\mathcal{F}_i^c(s)]||_2 \\
	 \leq & ||\E\left[\left(H_{i,N}- \E[H_{i,N}|\mathcal{F}_i(s)]\right)|\mathcal{F}_i^c(s)\right]||_2 + ||\E[\left(\E[H_{i,N}|\mathcal{F}_i(s)]\right)|\mathcal{F}_i^c(s)]||_2 \label{83}\\
	 \leq & ||H_{i,N}- \E[H_{i,N}|\mathcal{F}_i(s)]||_2 + ||\E[\left(\E[H_{i,N}|\mathcal{F}_i(s)]\right)|\mathcal{F}_i^c(s)]||_2 \label{84}\\
	 \leq & g_{i,N}\psi(s) + 0 \label{85}.
\end{align}
(\ref{83}) is by the triangle inequality and (\ref{84}) is by Jensen's inequality. (\ref{85}) follows by definition and that $\E[\left(\E[H_{i,N}|\mathcal{F}_i(s)]\right)|\mathcal{F}_i^c(s)]=0$. Note that $\E[\left(\E[H_{i,N}|\mathcal{F}_i(s)]\right)|\mathcal{F}_i^c(s)]=0$ by Theorem 10.22 in \cite{davidson1994stochastic} following the facts that $H_{i,N}$ is mean zero and that $\mathcal{F}_i(s)$ and $\mathcal{F}_i^c(s)$ are independent $\sigma$-fields.  
\end{proof}

We shall inherit C\ref{assn:bern-des} and C\ref{assn:bounded-y}. We relax C\ref{assn:local-inf} with the following weak-dependence assumption.
\begin{assn}\label{assn:NED}
 For any $d$-circle average outcomes and for all sample sizes, $\{\mu_i(\Y,d), i\in\mathcal{S}_N\}$ is uniform $L_2$-NED of size -4 on the random field $\{Z_{i}, i\in\mathcal{S}_N\}$.\footnote{Lower-level conditions for establishing the NED property can be found in \cite{jenish2012spatial}. We omit the detail for brevity.}
\end{assn}

Denote the corresponding NED-coefficient function as $\psi_d():\mathbb{R}_+\to \mathbb{R}_+$.\footnote{To compare the strength of dependence between two circle-averages with different distance values, one can compare the corresponding NED-coefficient functions. For example, it may be reasonable to assume $\psi_{d_1}(t)\geq\psi_{d_2}(t)$ for all $t\in\mathbb{R}$ if $d_1\geq d_2$. } Now recall the class of estimators defined in Section \ref{Section:HAHTComparison}:
\begin{equation}
	\widehat{\tau}(\mu_1,\mu_0;d)= \mu_1-\mu_0 +\left( \frac{1}{Np}\sum_{i=1}^n \Z_i(\mu_i(\Y;d)-\mu_1)- \frac{1}{N(1-p)}\sum_{i=1}^n (1-\Z_i)(\mu_i(\Y;d)-\mu_0)\right).
\end{equation} 
This class contains the HT estimator ($\mu_1=\mu_0=0$) and the linearized Hajek estimator ($\mu_1=\bar{\mu}_1(d)$ and $\mu_0=\bar{\mu}_0(d)$ ) as special cases. The following lemma states that provided $\{\mu_i(\Y;d), i\in\mathcal{S}_N\}$ is uniform $L_2$-NED with size $-4$, so is the transformation $ \widehat{\tau}_{i,N}(\mu_1,\mu_0;d)=\left( \frac{\Z_i}{Np} - \frac{(1-\Z_i)}{N(1-p)}\right)\mu_i(\Y;d) + \frac{1}{N}(\mu_1-\mu_0) - \left(\frac{\Z_i\mu_1}{Np} - \frac{(1-\Z_i)\mu_0}{N(1-p)}\right)-\tau_i(d;\eta) $. Note $\E[\widehat{\tau}_{i,N}(\mu_1,\mu_0;d)]=0$.

\begin{lemma}\label{lemma:transformation}
	Under C\ref{assn:bern-des} and C\ref{assn:NED}, for each pair $\mu_1,\mu_0\in\mathbb{R}$, $\{\widehat{\tau}_{i,N}(\mu_1,\mu_0;d), i\in\mathcal{S}_N\}$ is uniform $L_2(g_N^{\tau})$-NED with size $-4$ on the random field $\{Z_{i}, i\in\mathcal{S}_N\}$. We have $g_{i,N}^{\tau}\leq\frac{C}{Np(1-p)}$, where $C$ is independent of $N$, $\mu_1$ and $\mu_0$.
\end{lemma}

\begin{proof}
	Note $\E[\widehat{\tau}_{i,N}(\mu_1,\mu_0;d)|\mathcal{F}_i(s)]$ is equal to:
	\begin{equation}
		 \left( \frac{\Z_i}{Np} - \frac{(1-\Z_i)}{N(1-p)}\right)\E[\mu_i(\Y;d)|\mathcal{F}_i(s)] + \frac{1}{N}(\mu_1-\mu_0) - \left(\frac{\Z_i\mu_1}{Np} - \frac{(1-\Z_i)\mu_0}{N(1-p)}\right)-\tau_i(d),
	\end{equation}
	because the $\sigma$ field generated by $Z_i$ is a subset of $\mathcal{F}_i(s)$.  Noting that $|\frac{\Z_i}{Np} - \frac{(1-\Z_i)}{N(1-p)}|\leq \frac{1}{Np(1-p)}$, we have
	\begin{align}
			&||\widehat{\tau}_{i,N}(d)-\E[\widehat{\tau}_{i,N}(d)|\mathcal{F}_i(s)]||_2 \leq \frac{1}{Np(1-p)} ||\mu_i(\Y,d)- \E[\mu_i(\Y,d)|\mathcal{F}_i(s)]||_2 \\
			\lesssim & \frac{1}{Np(1-p)}\psi_d(s), \label{89}
	\end{align}
	where by using the symbol $\lesssim$ we suppress a constant independent of $N$. Importantly, note that the constant is independent of $\mu_1$ and $\mu_0$. 
\end{proof}

The next lemma is a covariance inequality which, together with a later assumption on the density of intervention nodes, implies a $O_p(\frac{1}{\sqrt{N}})$ convergence rate for the HT and Hajek estimators.\footnote{See Lemma A.3 in \cite{jenish2012spatial} for a more general covariance inequality.} 
\begin{lemma}\label{lemma:covariance_inequality_raw}
Let $d_{ij}=\gamma(i,j)$ denote the distance between intervention node i and intervention node j. 
Under C\ref{assn:bern-des}, C\ref{assn:bounded-y} and C\ref{assn:NED}, for any $\mu_1$ and $\mu_0$ such that $|\mu_1|<B$ and $|\mu_0|<B$ for some positive constant $B$, 
\begin{equation}
	|\textnormal{Cov}\left(\widehat{\tau}_{i,N}(\mu_1,\mu_0;d),\widehat{\tau}_{j,N}(\mu_1,\mu_0;d)\right)|\leq 
	\frac{C(B,p)}{N^2}\psi_d(\frac{d_{ij}}{3}) 
\end{equation}
where $C(B,p)$ denotes a generic constant dependent on $B$ and $p$, and independent of $N$.
\end{lemma}

\begin{proof}
	We write $\widehat{\tau}_{i,N}(\mu_1,\mu_0;d)$ as $\widehat{\tau}_i(d)$, and $\widehat{\tau}_{j,N}(\mu_1,\mu_0;d)$ as $\widehat{\tau}_j(d)$ for brevity. Note $E[\widehat{\tau}_i(d)]=0$ and $E[\widehat{\tau}_j(d)]=0$.  we have:
	\begin{align}
		&\left|\textnormal{Cov}\left(\widehat{\tau}_{i}(d),\widehat{\tau}_{j}(d)\right)\right|= \left|\E[\widehat{\tau}_i(d)\widehat{\tau}_j(d)]\right| \\
		 \leq & \left|\E[\left(\widehat{\tau}_i(d)-\E[\widehat{\tau}_i(d)|\mathcal{F}_i\left(\frac{d_{ij}}{3}\right)]\right)\widehat{\tau}_j(d)]\right|  + \left|\E[\E[\widehat{\tau}_i(d)|\mathcal{F}_i\left(\frac{d_{ij}}{3}\right)]\widehat{\tau}_j(d)]]\right| \label{92}\\
		\leq & ||\widehat{\tau}_i(d)-\E[\widehat{\tau}_i(d)|\mathcal{F}_i\left(\frac{d_{ij}}{3}\right)]||_2 \times ||\widehat{\tau}_j(d)||_2 \label{93}\\
		 + & \left|\E[\E[\widehat{\tau}_i(d)|\mathcal{F}_i\left(\frac{d_{ij}}{3}\right)]\E[\widehat{\tau}_j(d)|\mathcal{F}_i\left(\frac{d_{ij}}{3}\right)]]\right|\\
		 \lesssim& \frac{1}{N^2} C_1(B,p) \psi_d(\frac{d_{ij}}{3}) + ||\E[\widehat{\tau}_i(d)|\mathcal{F}_i\left(\frac{d_{ij}}{3}\right)]||_2\times ||\E[\widehat{\tau}_j(d)|\mathcal{F}_i\left(\frac{d_{ij}}{3}\right)]||_2 \label{95}\\
		 \lesssim &\frac{C_1(B,p)}{N^2} \psi_d(\frac{d_{ij}}{3})+ \frac{1}{N}C_2(B,p)\times ||\E[\widehat{\tau}_j(d)|\mathcal{F}_j^c\left(\frac{d_{ij}}{3}\right)]||_2 \label{96}\\
		 \lesssim & \frac{C_1(B,p)}{N^2} \psi_d(\frac{d_{ij}}{3})+ \frac{C_2(B,p)}{N^2}\psi_d(\frac{d_{ij}}{3}) \leq \frac{C(B,p)}{N^2} \psi_d(\frac{d_{ij}}{3}) \label{97}
	\end{align}
	where $C_1(B,p)$ and $C_2(B,p)$ denote generic constants depending on $B$ and $p$, and by using the symbol $\lesssim$ we suppress constants that are independent of $N$.  (\ref{92}) follows by the triangle inequality. (\ref{93}) follows by the Cauchy-Schwarz inequality. (\ref{95}) follows by bounding the size of $\widehat{\tau}_j(d)$ using C\ref{assn:bounded-y} and by using (\ref{89}). (\ref{96}) follows by bounding the size of $\widehat{\tau}_i(d)$ using C\ref{assn:bounded-y} and by the Jensen's inequality and the fact that $\mathcal{F}_i(\frac{d_{ij}}{3})\subset\mathcal{F}^c_j(\frac{d_{ij}}{3}) $. (\ref{97}) follows by using Lemma \ref{lemma:mixingale} and reading off the size of $g_{i,N}$ from ($\ref{89}$).

\end{proof}

We need a stronger assumption to control the concentration of the intervention nodes. The following assumption prevents nodes from concentrating around a few locations.
\begin{assn}\label{assn:node_spacing}
	For each radius $s\geq 0$ and all large sample sizes $N$, there exists a $C_b>0$ such that $\sup_{i\in\node_N}|\mathcal{B}_N(i;s)|\leq b(s)$, with $b(s)=C_bs^2$.
\end{assn}

The following lemma establishes the root-N convergence rate of the HT estimator and the asymptotic equivalence of the Hajek estimator. We note that all lemmas and propositions below are stated for each distance value $d$.
\begin{lemma}\label{lemma:covariance_inequality}
Under C\ref{assn:bern-des}, C\ref{assn:bounded-y}, C\ref{assn:NED}, and C\ref{assn:node_spacing}, $\E[\hat{\tau}_{\HT}(d)]=\textnormal{AME}(d;\eta)$, and $\Var\left(\hat{\tau}_{\HT}(d)\right)=O(\frac{1}{N})$. Consider the estimator $\widehat{\tau}_{\HA}(d)$ defined in (\ref{eq:hajek}) and its asymptotic linear expansion $ \widehat{\tau}^{\Taylor}_{\HA}(d)$ defined in Lemma \ref{lemma:ha-var},  we have $\sqrt{N}( \widehat{\tau}^{\Taylor}_{\HA}(d)-\widehat{\tau}_{\HA}(d))=o_p(1)$.	
\end{lemma}

\begin{proof}
	Proof of unbiasedness of the HT estimator is by Proposition \ref{prop:identification}. To bound the variance, note
\begin{align*}
	\Var\left(\hat{\tau}_{\HT}(d)\right) = \Var\left(\sum_{i=1}^N\widehat{\tau}_{i,N}(0,0;d)\right)=\sum_{i=1}^N \sum_{j=1}^N \Cov\left(\widehat{\tau}_{i,N}(0,0;d),\widehat{\tau}_{j,N}(0,0;d)\right)
\end{align*}
Write $\widehat{\tau}_{i,N}(0,0;d)$ as $\widehat{\tau}_i(d)$ and $\widehat{\tau}_{j,N}(0,0;d)$ as $\widehat{\tau}_j(d)$. We have, for each $i$,
\begin{align}
	& \left|\sum_{j=1}^N \Cov\left(\widehat{\tau}_i(d),\widehat{\tau}_j(d)\right)\right|\leq 	\sum_{j=1}^N \left|\Cov\left(\widehat{\tau}_i(d),\widehat{\tau}_j(d)\right)\right|\lesssim  \frac{1}{N^2}\sum_{j=1}^N\psi_d(\frac{d_{ij}}{3}),
\end{align}
where by using $\lesssim$ we suppress a constant independent of $N$. To prove the lemma, we only need to show the quantity $\sum_{j=1}^N\psi_d(\frac{d_{ij}}{3})$ is uniformly bounded for all $i$ and all sample sizes $N$.

By C\ref{assn:NED}, there exists a $C_2$ such that $\psi_d(\frac{k}{3})\leq C_2\min\{1,\frac{1}{k^{2+\epsilon}}\}$.\footnote{ For proper constants $C_0$, $C_1$, $\tilde{C}_1$, and $C_2$, $\psi_d(\frac{k}{3})\leq \min\{C_0, \frac{C_1 }{(\frac{k}{3})^{2+\epsilon}}\} \leq \min\{C_0, \frac{\tilde{C}_1}{k^{2+\epsilon}}\}\leq \max\{C_0,\tilde{C}_1\} \times \min\{1,\frac{1}{k^{2+\epsilon}}\}\leq C_2\min\{1,\frac{1}{k^{2+\epsilon}}\} $}
We have the following inequality:
\begin{align}
&	\sum_{j=1}^N\psi_d(\frac{d_{ij}}{3}) \leq \lim_{K\to\infty}\left(\sum_{k=0}^{K}\left(b(k+1)-b(k)\right)\psi_d(\frac{k}{3})\right) \label{99}\\
&	\leq C_2\lim_{K\to\infty}\left(\sum_{k=0}^{K}\left(b(k+1)-b(k)\right)\min\{1,\frac{1}{k^{2+\epsilon}}\}\right) \label{100}\\
&=	C_2\lim_{K\to\infty}\left( -b(0)1+b(1)(1-1)+\sum_{k=2}^K b(k)\left( \frac{1}{(k-1)^{2+\epsilon}}-\frac{1}{k^{2+\epsilon}}\right)+b(K+1)\times \frac{1}{K^{2+\epsilon}}  \right) \label{101}\\
& \lesssim -b(0)+ \lim_{K\to\infty}\left(\sum_{k=2}^K b(k)\left( \frac{1}{(k-1)^{2+\epsilon}}-\frac{1}{k^{2+\epsilon}}\right)\right) + \lim_{K\to\infty}\left(b(K+1)\times \frac{1}{K^{2+\epsilon}}\right) \label{102}\\
& \lesssim -b(0)+ C_b\lim_{K\to\infty}\left(\sum_{k=2}^K k^2 \left( \frac{1}{(k-1)^{2+\epsilon}}-\frac{1}{k^{2+\epsilon}}\right)\right) + \lim_{K\to\infty}\left(\frac{C_b(K+1)^2}{K^{2+\epsilon}}\right) \label{103}\\
& <\infty, \label{104}
\end{align}
where $b(\cdot):\mathbb{R}_{+} \to \mathbb{R}_{+}$ is the neighborhood size function defined in C\ref{assn:node_spacing}. (\ref{99}) is by an inclusion criteria and the fact that $\psi$ is a non-increasing function. (\ref{100}) follows because $b$ by definition is a non-decreasing function. (\ref{101}) and (\ref{102}) are purely algebraic. (\ref{103}) follows by C\ref{assn:node_spacing} and the fact that $\frac{1}{(k-1)^{2+\epsilon}}-\frac{1}{k^{2+\epsilon}}\geq 0$ for all $k\geq 2$. (\ref{104}) follows because $\sum_{k=2}^\infty k^2 \left( \frac{1}{(k-1)^{2+\epsilon}}-\frac{1}{k^{2+\epsilon}}\right)<\infty$ by calculus.\footnote{ $\sum_{k=2}^\infty k^2 \left( \frac{1}{(k-1)^{2+\epsilon}}-\frac{1}{k^{2+\epsilon}}\right) =  \sum_{k=2}^\infty k^2 \left( \frac{k^{2+\epsilon}-(k-1)^{2+\epsilon}}{(k-1)^{2+\epsilon}k^{2+\epsilon}} \right)\lesssim \sum_{k=2}^{\infty}\frac{\int_{k-1}^k x^{1+\epsilon}}{(k-1)^{2+\epsilon} k^{\epsilon} } \leq  \sum_{k=2}^{\infty} \frac{k^{1+\epsilon}}{(k-1)^{2+\epsilon} k^{\epsilon} }\lesssim  \sum_{k=2}^{\infty} \frac{1}{ (k-1)^{1+\epsilon} }<\infty. $ Note each term is nonnegative. } Note the bound is uniform in $i$ and sample sizes $N$.

Linearization proof for the Hajek estimator is identical to the one in Lemma \ref{lemma:ha-var}.
\end{proof}

The following lemma is needed to bound the bias of the HAC variance estimator later.
\begin{lemma}\label{lemma:tail}
Under C\ref{assn:NED} and C\ref{assn:node_spacing}, for any $\{b_N\}_{N=1}^{\infty}$ such that $b_N\to\infty$, we have
\begin{equation}
		\lim_{N\to\infty}\sup_{i\in\node_N} \sum_{ \{j:d_{ij}>b_N\}}\psi_d(\frac{d_{ij}}{3}) = 0.
\end{equation}	
\end{lemma}
\begin{proof}
	For each $i$, similar to the proof in Lemma \ref{lemma:covariance_inequality} we have 
	\begin{align}
				  & 0\leq \sum_{ \{j:d_{ij}>b_N\}}\psi_d(\frac{d_{ij}}{3}) \leq \left(\sum_{k=\floor{b_N}}^{\infty}\left(b(k+1)-b(k)\right)\psi_d(\frac{k}{3})\right)\\
				 \leq &  b(\floor{b_N})\psi_d(\frac{\floor{b_N}}{3}) + \sum_{k=\floor{b_N}+1}^{\infty}b(k)\left( \frac{1}{(k-1)^{2+\epsilon}}-\frac{1}{k^{2+\epsilon}}\right). 
	\end{align}
	Note the right-hand side is independent of $i$. Thus we have:
	\begin{align}
			 &\lim_{N\to\infty}\sup_{i\in\node_N} \sum_{ \{j:d_{ij}>b_N\}}\psi_d(\frac{d_{ij}}{3})\\
			 \leq &  	\lim_{N\to\infty} \left( b(\floor{b_N})\psi_d(\frac{\floor{b_N}}{3}) + \sum_{k=\floor{b_N}+1}^\infty b(k)\left( \frac{1}{(k-1)^{2+\epsilon}}-\frac{1}{k^{2+\epsilon}}\right)\right) =0,
	\end{align}
    since under C\ref{assn:NED} and C\ref{assn:node_spacing} we have $\sum_{k=2}^{\infty}b(k)\left( \frac{1}{(k-1)^{2+\epsilon}}-\frac{1}{k^{2+\epsilon}}\right)<\infty $ as shown in Lemma \ref{lemma:covariance_inequality}  and $b(\floor{b_N})\psi_d(\frac{\floor{b_N}}{3})\to 0$.
	
\end{proof}

\begin{prop}\label{prop:clt_weak_dependence}
Under C\ref{assn:bern-des}, C\ref{assn:bounded-y}, C\ref{assn:NED}, and C\ref{assn:node_spacing} and if $N\times \AVar(\widehat{\tau}_{\HA}(d))$ is uniformly bounded below for all large $N$, we have,
\begin{equation}
	\frac{\widehat{\tau}_{\HA}(d)-\textnormal{AME}(d;\eta)}{\sqrt{\AVar(\widehat{\tau}_{\HA}(d))}} \overset{d}{\to} N(0,1).
\end{equation}
\end{prop}
\begin{proof}
	    The proof of $\sqrt{N}\left(\widehat{\tau}_{\HA}(d)-\textnormal{AME}(d;\eta)\right)=\sqrt{N}\left(\widehat{\tau}_{\HA}^\Taylor(d)-\textnormal{AME}(d;\eta)\right)+o_p(1)$ is similar as in Section \ref{Section:Normality}. We only need to establish the asymptotic distribution for $\widehat{\tau}_{\HA}^\Taylor(d)$. 
	    
	    The proof follows from Theorem 2 in \cite{jenish2012spatial} (JP hereafter) with a slight modification of the proof. Assumption 1 in \cite{jenish2012spatial} is replaced by C\ref{assn:node_spacing}. Assumption 1 in JP is used to prove the covariance inequality in Lemma A.3 in JP, which can be replaced by the inequality in the proof of Lemma \ref{lemma:covariance_inequality}. Assumption 1 is also used in established Theorem A.1 in JP for the $\alpha$-mixing CLT. Because our assignment variables $\{Z_{i,N},i\in\mathcal{S}_N\}$ are independent by C\ref{assn:bern-des} and by C\ref{assn:node_spacing}, we can replace the $\alpha-$mixing CLT with a CLT with bounded-degree dependency graph. Assumption 2(a) and Assumption 4(a) in JP are trivially satisfied by C\ref{assn:bounded-y} and choosing $c_{i,n}=\frac{c}{N}$ for all $i$ by some constant $c>0$ independent of $N$.  Assumption 2(b) and Assumption 3 in JP are trivially satisfied by C\ref{assn:bern-des}.  Assumption 4(b) is satisfied by the premise of this lemma. Assumption 4(c) is satisfied by let $d_{in}=\frac{d}{N}$ for all $i$ by some constant $d>0$ independent of $N$. Then the proposition follows.
\end{proof}

We study the asymptotic variance of the Hajek estimator under the current set of assumptions:

\begin{lemma}
	Under C\ref{assn:bern-des}, the variance of the linearized Hajek estimator can be expressed as
	\begin{align}
		& \Var \left(\widehat{\tau}_{\HA}^{\Taylor}(d)\right)\\
		=&\frac{1}{N^2p}\sum_{i=1}^N \E\left[\left(\muit- \bar {\mu}^1(d)\right)^2\right] + \frac{1}{N^2(1-p)}\sum_{i=1}^N \E\left[\left(\muic- \bar {\mu}^0(d)\right)^2\right]  \\
		+ & \frac{1}{N^2} \sum_{i=1}^N\sum_{j\not =i}\sum_{a=0}^1 \sum_{b=0}^1(-1)^{a+b} \E[\left(\muiab - \bar{\mu}^a(d)\right) \left(\mujab - \bar{\mu}^b(d)\right)] 
	\end{align}
	
\end{lemma}

\begin{proof}
	
	We have the expansion, similar to the derivation in Lemma \ref{lemma:ha-var-homo},	
	\begin{align}
		& \Var \left(\widehat{\tau}_{\HA}^{\Taylor}(d)\right)\\
		=&\frac{1}{N^2p}\sum_{i=1}^N \E\left[\left(\muit- \bar {\mu}^1(d)\right)^2\right] + \frac{1}{N^2(1-p)}\sum_{i=1}^N \E\left[\left(\muic- \bar {\mu}^0(d)\right)^2\right]  \label{0620_111}\\
		-&  \frac{1}{N^2}\sum_{i=1}^N\E^2\left[ \muit - \muic - (\bar{\mu}^1(d) - \bar{\mu}^0(d) )\right] \label{0620_112}\\
		+ & \frac{1}{N^2} \sum_{i=1}^N\sum_{j\not =i}\sum_{a=0}^1 \sum_{b=0}^1(-1)^{a+b} \E[\left(\muiab - \bar{\mu}^a(d)\right) \left(\mujab - \bar{\mu}^b(d)\right)]  \label{0620_113}\\
		- & \frac{1}{N^2}  \sum_{i=1}^N\sum_{j\not =i}\sum_{a=0}^1 \sum_{b=0}^1 (-1)^{a+b} \E[\muia - \bar{\mu}^a(d)] E[\mujb - \bar{\mu}^b(d)],  \label{0620_114}\\
		= & (\ref{0620_111}) + (\ref{0620_113})+0.
	\end{align}
	where we use the fact that $\sum_{i=1}^N\E[\muia-\bar{\mu}^a(d)]=0$ by definition for $a\in\{0,1\}$.
\end{proof}

Now we provide results on HAC variance estimation. We first establish several lemmas. 

\begin{lemma}\label{lemma:covariance_ned}
	Under C\ref{assn:bern-des}, C\ref{assn:bounded-y}, C\ref{assn:NED}, for any pair of $i$ and $j$ and for any $\mu_1$ and $\mu_0$ such that $|\mu_1|<B$ and $|\mu_0|<B$ for some positive constant $B$,  we have
	\begin{align}
		||\widehat{\tau}_{j,N}(\mu_1,\mu_0;d)-\E[\widehat{\tau}_{j,N}(\mu_1,\mu_0;d)|\mathcal{F}_i(s)]||_2\lesssim\begin{cases}
			\frac{1}{N} & \textnormal{ if } s\leq d_{ij}\\
			\frac{1}{N}\psi_d(s-d_{ij}) & \textnormal{ if } s> d_{ij}
		\end{cases},
	\end{align}
	where by using the symbol $\lesssim$ we suppress a constant dependent of B but independent of $N$.  Further,
	\begin{align}
		&||\widehat{\tau}_{i,N}(\mu_1,\mu_0;d)\widehat{\tau}_{j,N}(\mu_1,\mu_0;d)-\E[\widehat{\tau}_{i,N}(\mu_1,\mu_0;d)\widehat{\tau}_{j,N}(\mu_1,\mu_0;d)|\mathcal{F}_i(s)]||_2\\
		 \lesssim & \begin{cases}
			\frac{1}{N^2} & \textnormal{ if } s\leq d_{ij}\\
			\frac{1}{N^2}\psi_d(s-d_{ij}) & \textnormal{ if } s> d_{ij},
		\end{cases}.
	\end{align}
	where by using the symbol $\lesssim$ we suppress a constant dependent of B but independent of $N$.
\end{lemma}

\begin{proof}
	For brevity we write $\widehat{\tau}_{i,N}(\mu_1,\mu_0;d)$ as $\widehat{\tau}_{i,N}(d)$ and $\widehat{\tau}_j(\mu_1,\mu_0;d)$ as $\widehat{\tau}_j(d)$.
	
	For the case where $s\leq d_{ij}$ the statements are proved by the triangle inequality and C\ref{assn:bounded-y}. For the case where $s>d_{ij}$, we have:
	\begin{align}
		& 	||\widehat{\tau}_{j,N}(d)-\E[\widehat{\tau}_{j,N}(d)|\mathcal{F}_i(s)]||_2 \\
		\leq & ||\widehat{\tau}_{j,N}(d)-\E[\widehat{\tau}_{j,N}(d)|\mathcal{F}_{j}(s-d_{ij})]||_2+||\E[\widehat{\tau}_{j,N}(d)|\mathcal{F}_j(s-d_{ij})]-\E[\widehat{\tau}_{j,N}(d)|\mathcal{F}_i(s)]||_2 \label{108}\\
		\lesssim & \frac{1}{N}\psi_d(s-d_{ij}) + ||\E[\widehat{\tau}_{j,N}(d)-\E[\widehat{\tau}_{j,N}(d)|\mathcal{F}_j(s-d_{ij})]|\mathcal{F}_i(s)] ||_2 \label{109}\\
		\lesssim & \frac{1}{N}\psi_d(s-d_{ij}) +||\widehat{\tau}_{j,N}(d)-\E[\widehat{\tau}_{j,N}(d)|\mathcal{F}_j(s-d_{ij})] ||_2 \label{110}\\
		\lesssim & \frac{1}{N}\psi_d(s-d_{ij}) \label{111}.
	\end{align}
	(\ref{108}) is by triangle inequality. (\ref{109}) follows the estimates in (\ref{89}) and the fact that $\mathcal{F}_j(s-d_{ij})\subset \mathcal{F}_i(s)$. (\ref{110}) follows by the Jensen's inequality, and (\ref{111}) follows by (\ref{89}). 
	
	\begin{align}
		& ||\widehat{\tau}_{i,N}(d)\widehat{\tau}_{j,N}(d)-\E[\widehat{\tau}_{i,N}(d)\widehat{\tau}_{j,N}(d)|\mathcal{F}_i(s)]||_2 \\
		\leq  & ||\widehat{\tau}_{i,N}(d)\widehat{\tau}_{j,N}(d)-\widehat{\tau}_{i,N}(d)\E[\widehat{\tau}_{j,N}(d)|\mathcal{F}_i(s)]||_2 \label{115} \\
		& + ||\widehat{\tau}_{i,N}(d)\E[\widehat{\tau}_{j,N}(d)|\mathcal{F}_i(s)]-\E[\widehat{\tau}_{i,N}(d)|\mathcal{F}_i(s)]\E[\widehat{\tau}_{j,N}(d)|\mathcal{F}_i(s)]||_2 \nonumber  \\
		&+||\E[\widehat{\tau}_{i,N}(d)|\mathcal{F}_i(s)]\E[\widehat{\tau}_{j,N}(d)|\mathcal{F}_i(s)]-\E[\widehat{\tau}_{i,N}(d)\widehat{\tau}_{j,N}(d)|\mathcal{F}_i(s)]||_2  \nonumber 
		\\
		\lesssim &   \frac{1}{N}||\widehat{\tau}_{j,N}(d)-\E[\widehat{\tau}_{j,N}(d)|\mathcal{F}_i(s)]||_2 + \frac{1}{N}||\widehat{\tau}_{i,N}(d)-\E[\widehat{\tau}_{i,N}(d)|\mathcal{F}_i(s)]||_2 \label{117} \\
		& + \frac{1}{N}||\widehat{\tau}_{i,N}(d)-\E[\widehat{\tau}_{i,N}(d)|\mathcal{F}_i(s)]||_2 \nonumber ,
	\end{align}
		where by using the symbol $\lesssim$ we suppress a constant dependent of B but independent of $N$.
	(\ref{117})	is bounded by, up to a constant independent of $N$, $\frac{1}{N^2}\psi_d(s-d_{ij}) + \frac{1}{N^2} \psi_d(s)$ if $s>d_{ij}$, and $\frac{1}{N^2}$ if $s\leq d_{ij}$. (\ref{115}) 
is
by triangle inequality. (\ref{117}) is by C\ref{assn:bounded-y}. 
	
\end{proof}

Let $K:\mathbb{R}_+\to \mathbb{R}$ be the kernel function used for the HAC variance estimator. We make the following assumption on $K$. 
\begin{assn}\label{assn:kernel}
	The kernel function $K:\mathbb{R}_+\to \mathbb{R}$ has the following properties:
	\begin{enumerate}
		\item $K(0)=1$.
		\item It is supported on $[0,1]$.\footnote{This means that $K(x)=0$ for all $x>1$.}
		\item It is uniformly bounded:  there exists a $K_{max}<\infty$ such that $\sup_{x\in \mathbb{R}_+}|K(x)|<K_{max}$.
		\item It is locally Lipschitz at 0: there exists a $\epsilon\in(0,1)$ and a positive constant $C>0$ such that
		\begin{equation}
			|K(x)-1|\leq C|x|, 
		\end{equation}
		for any $x\in [0,\epsilon)$.
	\end{enumerate}
\end{assn}

Lemmas below provide preliminary estimates for some random quantities. We define the following quantity:
\begin{equation}
	\widehat{\epsilon}_{i,N}(\mu_1,\mu_0;d)= \frac{\Z_i}{Np}\left(\mu_i(\Y;d)-\mu_1 \right)-\frac{1-\Z_i}{N(1-p)}\left(\mu_i(\Y;d)-\mu_0 \right).
\end{equation}
In terms of dependency, $\widehat{\epsilon}_{i,N}(\mu_1,\mu_0;d)$ behaves similarly as $\widehat{\tau}_{i,N}(\mu_1,\mu_0;d)$. We summarize them in the lemma below:
\begin{lemma}\label{lemma:epsilon}
	Under C\ref{assn:bern-des}, C\ref{assn:bounded-y}, C\ref{assn:NED} and C\ref{assn:node_spacing}. For any pair $i$ and $j$ and for any $\mu_1$ and $\mu_0$ such that $|\mu_1|<B$ and $|\mu_0|<B$ for some positive constant $B$,
	\begin{equation}
		\left|\Cov\left(\widehat{\epsilon}_{i,N}(\mu_1,\mu_0;d),\widehat{\epsilon}_{j,N}(\mu_1,\mu_0;d) \right)\right|\lesssim \frac{1}{N^2}\psi_d\left(\frac{d_{ij}}{3}\right)
	\end{equation}
	where by using the symbol $\lesssim$ we suppress a constant dependent of B but independent of $N$. Moreover, 
		\begin{align}
		||\widehat{\epsilon}_{j,N}(\mu_1,\mu_0;d)-\E[\widehat{\epsilon}_{j,N}(\mu_1,\mu_0;d)|\mathcal{F}_i(s)]||_2\lesssim\begin{cases}
			\frac{1}{N} & \textnormal{ if } s\leq d_{ij}\\
			\frac{1}{N}\psi_d(s-d_{ij}) & \textnormal{ if } s> d_{ij}
		\end{cases},
	\end{align}
	where by using the symbol $\lesssim$ we suppress a constant dependent of B but independent of $N$. Moreover, 
	\begin{align}
		&||\widehat{\epsilon}_{i,N}(\mu_1,\mu_0;d)\widehat{\epsilon}_{j,N}(\mu_1,\mu_0;d)-\E[\widehat{\epsilon}_{i,N}(\mu_1,\mu_0;d)\widehat{\epsilon}_{j,N}(\mu_1,\mu_0;d)|\mathcal{F}_i(s)]||_2\\
		\lesssim & \begin{cases}
			\frac{1}{N^2} & \textnormal{ if } s\leq d_{ij}\\
			\frac{1}{N^2}\psi_d(s-d_{ij}) & \textnormal{ if } s> d_{ij},
		\end{cases}.
	\end{align}
	where by using the symbol $\lesssim$ we suppress a constant dependent of B but independent of $N$.
\end{lemma}
\begin{proof}
The first statement follows from Lemma \ref{lemma:covariance_inequality_raw} upon noting that $\widehat{\epsilon}_{i,N}(\mu_1.\mu_0;d)$ differs from $\widehat{\tau}_{i,N}(\mu_1.\mu_0;d)$ by a constant. The rest statements follow the same calculation as in Lemma \ref{lemma:covariance_ned}.
\end{proof}

Define the quantity
\begin{equation}
	\widehat{v}_{i,N}(\mu_1,\mu_0;d,b_N)=\sum_{j\in \mathcal{B}(i;b_N)}\widehat{\epsilon}_{i,N}(\mu_1,\mu_0;d)\widehat{\epsilon}_{j,N}(\mu_1,\mu_0;d)K(\frac{d_{ij}}{b_N}),
\end{equation}
and $v_{i,N}(\mu_1,\mu_0;d,b_N)=\E[\widehat{v}_{i,N}(\mu_1,\mu_0;d,b_N)]$
\begin{lemma}\label{lemma:covcov_inequality} 

	Under C\ref{assn:bern-des}, C\ref{assn:bounded-y},  C\ref{assn:NED}, C\ref{assn:node_spacing}, and C\ref{assn:kernel}, for all $i\in\node_N$ and for any $\mu_1$ and $\mu_0$ such that $|\mu_1|<B$ and $|\mu_0|<B$ for some positive constant $B$, we have, uniformly for all $i$ and all sample sizes $N$,
	\begin{enumerate}[label=(\roman*)]
		\item $|\widehat{v}_{i,N}(\mu_1,\mu_0;d,b_N)|\lesssim\frac{b_N^2}{N^2}$ and $\left|v_{i,N}(\mu_1,\mu_0;d,b_N)\right|\lesssim\frac{b_N^2}{N^2}$,
		\item 	If $s> b_N$, 
		\begin{equation}
		||\widehat{v}_{i,N}(\mu_1,\mu_0;d,b_N)-\E[\widehat{v}_{i,N}(\mu_1,\mu_0;d,b_N)|\mathcal{F}_i(s)]||_2\lesssim \frac{1}{N^2}\sum_{j\in \mathcal{B}(i;b_N)}\psi_d(s-d_{ij}),
		\end{equation}
		\item If $s>3b_N$,
		\begin{equation}
			|\textnormal{Cov}\left(\widehat{v}_{i,N}(d,b_N),\widehat{v}_{j,N}(d,b_N)\right)|\lesssim\frac{b_N^2}{N^4}\times\max\{\sum_{\{k: d_{ik} \leq b_N\}}\psi_d(\frac{d_{ij}}{3}-d_{ik}),\sum_{\{l: d_{jl} \leq b_N\}}\psi_d(\frac{d_{ij}}{3}-d_{jl})\},
		\end{equation}
	\end{enumerate}
	where by using the symbol $\lesssim$ we suppress a constant dependent of B but independent of $N$.
\end{lemma}
\begin{proof}
	$(i)$ follows by C\ref{assn:bern-des}, C\ref{assn:bounded-y}, C\ref{assn:node_spacing} and C\ref{assn:kernel} and an application of the triangle inequality.
	We write $\widehat{v}_{i,N}(\mu_1,\mu_0;d,b_N)$ as $\widehat{v}_{i,N}(d,b_N)$ and $\widehat{\epsilon}_{i,N}(\mu_1,\mu_0;d)$ as $\widehat{\epsilon}_{i,N}(d)$ for brevity. All the constants omitted below depend on $B$.
	\begin{align}
		&||	\widehat{v}_{i,N}(d,b_N)-\E[	\widehat{v}_{i,N}(d,b_N)|\mathcal{F}_i(s)]||_2\\
		& \leq K_{max}\times \sum_{j\in\mathcal{B}(i;b_N)} ||\widehat{\epsilon}_{i,N}\left(d\right)\widehat{\epsilon}_{j,N}\left(d\right) - E[\widehat{\epsilon}_{i,N}\left(d\right)\widehat{\epsilon}_{j,N}\left(d\right)|\mathcal{F}_i(s)]   ||_2 \label{0622_130}\\
		& \lesssim  \frac{K_{max}}{N^2} \left( |j: d_{ij}\geq s, d_{ij}\leq b_N|+ \sum_{\{j: d_{ij} \leq s\}}\psi_d(s-d_{ij})\right) \label{0622_131}
	\end{align}
	(\ref{0622_130}) is by the triangle inequality and C\ref{assn:kernel}. (\ref{0622_131}) is by the estimates in Lemma \ref{lemma:epsilon} and C\ref{assn:bounded-y}.
	If $s>b_N$, the term $|j: d_{ij}\geq s, d_{ij}\leq b_N|$ vanishes, proving the result.

	As in Lemma \ref{lemma:mixingale}, we have, if $s>b_N$,
	\begin{equation}
		||\E[\widehat{v}_{i,N}(d,b_N)-v_{i,N}(d,b_N)|\mathcal{F}_i^c(s)]||_2\lesssim \frac{1}{N^2}\sum_{j\in \mathcal{B}(i;b_N)}\psi_d(s-d_{ij}).
	\end{equation}
	 Similarly as in Lemma \ref{lemma:covariance_inequality_raw}, if $d_{ij}>3b_N$,
    \begin{align*}
		 & \left|\textnormal{Cov}\left(\widehat{v}_{i,N}(d,b_N),\widehat{v}_{j,N}(d,b_N)\right)\right|\\
		\leq & \left|\E[\left(\widehat{v}_{i,N}(d,b_N)-v_{i,N}(d,b_N) \right)\left(\widehat{v}_{j,N}(d,b_N)-v_{j,N}(d,b_N)\right)]  \right|\\
		\leq & \left|\E[\left(\widehat{v}_{i,N}(d,b_N)-v_{i,N}(d,b_N) - \E[\left(\widehat{v}_{i,N}(d,b_N)-v_{i,N}(d,b_N)\right)|\mathcal{F}_i(\frac{d_{ij}}{3})] \right)\left(\widehat{v}_{j,N}(d,b_N)-v_{j,N}(d,b_N)\right)]  \right|\\
		& + \left|\E[ \E[\left(\widehat{v}_{i,N}(d,b_N)-v_{i,N}(d,b_N)\right)|\mathcal{F}_i(\frac{d_{ij}}{3})]E[\left(\widehat{v}_{j,N}(d,b_N)-v_{j,N}(d,b_N)\right)|\mathcal{F}_i(\frac{d_{ij}}{3})]]\right|\\
		\lesssim &  \frac{b_N^2}{N^2}\times  \frac{1}{N^2}\max\{\sum_{\{k: d_{ik} \leq b_N\}}\psi_d(\frac{d_{ij}}{3}-d_{ik}),\sum_{\{l: d_{jl} \leq b_N\}}\psi_d(\frac{d_{ij}}{3}-d_{jl})\}.
    \end{align*}

\end{proof}

Recall the definition of the HAC variance estimator from (\ref{HAC_formula}). An calculation similar to that at the beginning of Section \ref{subsection:HAC} shows that the HAC variance estimator $\widehat{\V}_{\HAC}(d)$ with a kernel function $K$ can be expressed as:
\begin{align}
	&\widehat{\V}_{\HAC}(d;b_N) \\
	= & \frac{1}{N_1^2} \sum_{i=1}^{N} Z_i\hat{e}_i^2(d) + \frac{1}{N_0^2} \sum_{i=1}^{N} Z_i\hat{e}_i^2(d) + \frac{1}{N_1^2} \sum_{i=1}^{N} \sum_{j \neq i }Z_iZ_j\hat{e}_i(d)\hat{e}_j(d)K(\frac{d_{ij}}{b_N})\\
	& - \frac{2}{N_1 N_0} \sum_{i=1}^{N} \sum_{j \neq i}(1-Z_i)Z_j\hat{e}_i(d)\hat{e}_j(d)K(\frac{d_{ij}}{b_N}) + \frac{1}{N_0^2} \sum_{i=1}^N \sum_{j \neq i}(1-Z_i)(1-Z_j)\hat{e}_i(d)\hat{e}_j(d)K(\frac{d_{ij}}{b_N}).
\end{align}

Define the following quantities
\begin{align}
	& 	\widehat{\V}_{1}(d;b_N)= \frac{1}{N^2p^2} \sum_{i=1}^{N} Z_i\hat{e}_i^2(d) + \frac{1}{N^2(1-p)^2} \sum_{i=1}^{N} (1-Z_i)\hat{e}_i^2(d)\\
	& + \frac{1}{N^2p^2} \sum_{i=1}^{N} \sum_{j \neq i }Z_iZ_j\hat{e}_i(d)\hat{e}_j(d)K(\frac{d_{ij}}{b_N})\\
& - \frac{2}{N^2p(1-p)} \sum_{i=1}^{N} \sum_{j \neq i}(1-Z_i)Z_j\hat{e}_i(d)\hat{e}_j(d)K(\frac{d_{ij}}{b_N})\\
&  + \frac{1}{N^2(1-p)^2} \sum_{i=1}^N \sum_{j \neq i}(1-Z_i)(1-Z_j)\hat{e}_i(d)\hat{e}_j(d)K(\frac{d_{ij}}{b_N})
\end{align}
Note $ \widehat{\V}_{1}(d;b_N)=\sum_{i=1}^N v_i(\widehat{\bar{\mu}}_1(d),\widehat{\bar{\mu}}_0(d);d,b_N)$. Also define
\begin{align}
	\widehat{\V}_{2}(d;b_N)= \sum_{i=1}^N v_i(\bar{\mu}_1(d),\bar{\mu}_0(d);d,b_N)
\end{align}

The following assumption is similar as C\ref{assn:homo}.
\begin{assn}\label{assn:extended_homophily}
	Given a kernel function $K$ and a sequence of bandwidth $\{b_N\}_{N=1}^{\infty}$,
\begin{equation}
	\lim\inf_{N}\left(\frac{1}{N}\sum_{i=1}^N\sum_{j=1}^N K(\frac{d_{ij}}{b_N})\left(\tau_i(d)-\textnormal{AME}(d;\eta)\right)\left(\tau_j(d)-\textnormal{AME}(d;\eta)\right)\right)\geq 0.\footnote{Note if $b_N$ is fixed and $K$ is the uniform kernel, this reduces to C\ref{assn:homo}.}
\end{equation}	
\end{assn}

\begin{remark}
The following lemma shows that some limitation on the correlation of effect heterogeneity when nodes are far apart implies C\ref{assn:extended_homophily}. The lemma is given with a high-level condition. It can be checked when one imposes specific low-level conditions on the majorizing function  (i.e., $\phi_d$ function below.) and on the spacing of intervention nodes.
		\begin{lemma}
			Under C\ref{assn:bounded-y}, C\ref{assn:node_spacing} and C\ref{assn:kernel}, and 
			suppose for all large $N$ and any pair $i$ and $j$ we have 
			\begin{equation}
				|\tau_i(d)-\textnormal{AME}(d;\eta)|\times  |\tau_j(d)-\textnormal{AME}(d;\eta)|\leq \phi_d(d_{ij}),
			\end{equation}
			and $\lim_N\sup_{i\in \node_N} \sum_{j\in \mathcal{B}(i,b_N)}\phi_d(d_{ij})\to 0 $ for any $b_N\to\infty$.
			\newline 
			We have 
			$$\lim\inf_{N}\left(\frac{1}{N}\sum_{i=1}^N\sum_{j=1}^N K(\frac{d_{ij}}{b_N})\left(\tau_i(d)-\textnormal{AME}(d;\eta)\right)\left(\tau_j(d)-\textnormal{AME}(d;\eta)\right)\right)=0. 
			$$
		\end{lemma}
		\begin{proof}
			Note we have $\sum_{i=1}^N\sum_{j=1}^N \left(\tau_i(d)-\textnormal{AME}(d;\eta)\right)\left(\tau_j(d)-\textnormal{AME}(d;\eta)\right)=0$ by definition. Let $\tilde{b}_N=o(b_N^{\frac{1}{3}})$. Then, for large $N$,
			\begin{align*}
				&\lim_{N}\left|\left(\frac{1}{N}\sum_{i=1}^N\sum_{j=1}^N K(\frac{d_{ij}}{b_N}) \left(\tau_i(d)-\textnormal{AME}(d;\eta)\right)\left(\tau_j(d)-\textnormal{AME}(d;\eta)\right)\right)\right|\\=&  \lim_{N}\left|\frac{1}{N}\sum_{i=1}^N\sum_{j=1}^N \left(K(\frac{d_{ij}}{b_N})-1\right) \left(\tau_i(d)-\textnormal{AME}(d;\eta)\right)\left(\tau_j(d)-\textnormal{AME}(d;\eta)\right)\right|\\
				\lesssim& \lim_N \left|\frac{1}{N}\sum_{i=1}^N\sum_{j\in \mathcal{B}(i,\tilde{b}_N)} \frac{d_{ij}}{b_N} \right| + \left|\frac{1}{N}\sum_{i=1}^N\sum_{j\not \in \mathcal{B}(i,\tilde{b}_N)} | \left(\tau_i(d)-\textnormal{AME}(d;\eta)\right)\left(\tau_j(d)-\textnormal{AME}(d;\eta)\right)| \right|\\
				\lesssim & \lim_{N}\frac{\tilde{b}_N^3}{b_N} + \lim_{N}\sup_{i\in\node_N}\sum_{j \not \in \mathcal{B}(i,\tilde{b}_N)} | \left(\tau_i(d)-\textnormal{AME}(d;\eta)\right)\left(\tau_j(d)-\textnormal{AME}(d;\eta)\right)|=o(1).
			\end{align*}		
	\end{proof}
\end{remark}

We conclude the section with a proof of the consistency of the HAC variance estimator and establish the asymptotic validity of the Wald confidence interval.
\begin{prop}Let $b_N=o(N^{\frac{1}{6}})$ and $b_N\to\infty$.
For $\alpha<1$ and under C\ref{assn:bern-des}, C\ref{assn:bounded-y},  C\ref{assn:NED}, C\ref{assn:node_spacing}, C\ref{assn:kernel} and C\ref{assn:extended_homophily},
	\begin{enumerate}[label=(\roman*)] 
		\item $N\times\left( \widehat{\V}_{\HAC}(d;b_N)-\widehat{\V}_1(d,b_N)\right)=o_p(1)$.
		\item $N\times\left(\widehat{\V}_1(d;b_N)-\widehat{\V}_2(d,b_N)\right)=o_p(1)$. 
		\item $N\times \left(\widehat{\V}_2(d;b_N)- \E[\widehat{\V}_2(d;b_N)] \right)=o_p(1)$.
		\item $\lim\inf_{N}\left(N\times \left(\E[\widehat{\V}_2(d;b_N)]-\AVar\left(\widehat{\tau}_{\HA}(d)\right)\right)\right) \geq 0 $.
		\item $\lim_{N\to\infty}\mathbf{Prob}\left(z_{\frac{\alpha}{2}}\leq \frac{\widehat{\tau}_{\HA}(d)-\textnormal{AME}(d;\eta) }{\sqrt{	\widehat{\textnormal{V}}_{\HAC}(d,b_N)}} \leq z_{1-\frac{\alpha}{2}}\right)\geq 1-\alpha$
	\end{enumerate}
\end{prop}

\begin{proof}

We first prove (i). Notice by C{\ref{assn:bern-des}}, $\frac{N_1}{N}-p=O_p(N^{-\frac{1}{2}})$. We have
\begin{equation}
\left|\frac{N^2}{N_1^2}-\frac{1}{p^2}\right|=\left|\frac{p^2-\left(\frac{N_1}{N}\right)^2}{p^2\left(\frac{N_1}{N}\right)^2}\right|=\left|\frac{(\frac{N_1}{N}-p)(\frac{N_1}{N}+p)}{p^2\left(\frac{N_1}{N}\right)^2}\right|=O_p(N^{-\frac{1}{2}})
\end{equation}

We have
\begin{align}
	& N\times(\frac{1}{N_1^2} \sum_{i=1}^{N} Z_i\hat{e}_i^2(d)-\frac{1}{N^2p^2} \sum_{i=1}^{N} Z_i\hat{e}_i^2(d)) \\
	=& \left(\frac{N^2}{N_1^2}-\frac{1}{p^2}\right)\frac{1}{N}\sum_{i=1}^{N} Z_i\hat{e}_i^2(d) =O_p(N^{-\frac{1}{2}}),
\end{align}
where we use the fact that $\frac{1}{N}\sum_{i=1}^N Z_i\hat{e}_i^2(d)=O(1)$ by C\ref{assn:bounded-y}.\footnote{Note we define $\widehat{\bar{\mu}}^1(d)=0$ if $N_1=0$ and $\widehat{\bar{\mu}}^0(d)=0$ if $N_0=0$.}

Also,
\begin{align}
	&\left|N\times(\frac{1}{N_1^2} \sum_{i=1}^{N} \sum_{j \neq i }Z_iZ_j\hat{e}_i(d)\hat{e}_j(d)K(\frac{d_{ij}}{b_N})-\frac{1}{N^2p^2} \sum_{i=1}^{N} \sum_{j \neq i }Z_iZ_j\hat{e}_i(d)\hat{e}_j(d)K(\frac{d_{ij}}{b_N}))\right|\\
	\leq & \left(\frac{N^2}{N_1^2}-\frac{1}{p^2}\right)  \left|\frac{1}{N}\sum_{i=1}^N\sum_{j \neq i }Z_iZ_j\hat{e}_i(d)\hat{e}_j(d)K(\frac{d_{ij}}{b_N})\right|\\
	\lesssim & \left(\frac{N^2}{N_1^2}-\frac{1}{p^2}\right) \left(b_N^2 \times K_{max}\right) 
	= O_p(N^{-\frac{1}{2}}b^2_N)=o_p(1) \label{0622_144}
\end{align}
where in (\ref{0622_144}) we use C\ref{assn:bounded-y}, C\ref{assn:node_spacing} and C\ref{assn:kernel}. Other terms can be processed similarly. This proves (i).

Now we prove (ii). We have:
\begin{align}
	& \widehat{\V}_1(d;b_N)-\widehat{\V}_2(d;b_N) \\
	 = &\sum_{i=1}^N \sum_{j\in\mathcal{B}(i;b_N)}K(\frac{d_{ij}}{b_N})\times \\
	 & \underbrace{\big(\widehat{\epsilon}_{i,N}(\widehat{\bar{\mu}}^1(d),\widehat{\bar{\mu}}^0(d);d)\widehat{\epsilon}_{j,N}(\widehat{\bar{\mu}}^1(d),\widehat{\bar{\mu}}^0(d);d) -\widehat{\epsilon}_{i,N}(\bar{\mu}^1(d),\bar{\mu}^0(d);d)\widehat{\epsilon}_{j,N}(\bar{\mu}^1(d),\bar{\mu}^0(d);d) \big)}_{(u_{ij})} \label{147_0622}
\end{align}

For ($u_{ij}$) we have the expressions
\begin{align}
	&u_{ij}=\frac{Z_iZ_j}{N^2p^2}\left(-\mu_i(\Y;d)\left(\widehat{\bar{\mu}}^1(d)-\bar{\mu}^1(d)\right)-\mu_j(\Y;d)\left(\widehat{\bar{\mu}}^1(d)-\bar{\mu}^1(d)\right) + \left(\widehat{\bar{\mu}}^1(d)\right)^2-\left(\bar{\mu}^1(d)\right)^2 \right)\\
	&+\frac{(1-Z_i)(1-Z_j)}{N^2(1-p)^2}\left(-\mu_i(\Y;d)\left(\widehat{\bar{\mu}}^0(d)-\bar{\mu}^0(d)\right)-\mu_j(\Y;d)\left(\widehat{\bar{\mu}}^0(d)-\bar{\mu}^0(d)\right) + \left(\widehat{\bar{\mu}}^0(d)\right)^2-\left(\bar{\mu}^0(d)\right)^2 \right)\\
	&-\frac{(1-Z_i)Z_j}{N^2(1-p)^2}\left(-\mu_i(\Y;d)\left(\widehat{\bar{\mu}}^1(d)-\bar{\mu}^1(d)\right)-\mu_j(\Y;d)\left(\widehat{\bar{\mu}}^0(d)-\bar{\mu}^0(d)\right) + \widehat{\bar{\mu}}^0(d)\widehat{\bar{\mu}}^1(d)-\bar{\mu}^0(d)\bar{\mu}^1(d) \right)\\
	&-\frac{(1-Z_j)Z_i}{N^2(1-p)^2}\left(-\mu_i(\Y;d)\left(\widehat{\bar{\mu}}^0(d)-\bar{\mu}^0(d)\right)-\mu_j(\Y;d)\left(\widehat{\bar{\mu}}^1(d)-\bar{\mu}^1(d)\right) + \widehat{\bar{\mu}}^0(d)\widehat{\bar{\mu}}^1(d)-\bar{\mu}^0(d)\bar{\mu}^1(d) \right)
\end{align}
Some calculation will show that by C\ref{assn:bern-des}, C\ref{assn:bounded-y}, C\ref{assn:NED}, C\ref{assn:node_spacing}, and C\ref{assn:kernel} we have
\begin{align}
	 \widehat{\V}_1(d;b_N)-\widehat{\V}_2(d;b_N) =O_p(\frac{1}{N} b_N^2 N^{-\frac{1}{2}}).
\end{align}
We then have $N\times( \widehat{\V}_1(d;b_N)-\widehat{\V}_2(d;b_N))=o_p(1)$ because $b_N=o(N^{\frac{1}{6}})$. This proves (ii).

Now we prove (iii). Write for brevity $\widehat{v}_{i,N}(\bar{\mu}^1(d),\bar{\mu}^0(d);d,b_N)=\widehat{v}_{i,N}(d;b_N)$. Note that $\bar{\mu}^1(d)$ and $\bar{\mu}^0(d)$ are uniformly bounded by C\ref{assn:bounded-y} for all sample sizes $N$. 
\begin{align}
	& \Var\left(\widehat{\V}_2(d;b_N)\right) = \sum_{i=1}^N \sum_{j=1}^N \Cov(\widehat{v}_{i,N}(d;b_N),\widehat{v}_{j,N}(d;b_N))\\
	\leq &\sum_{i=1}^N\sum_{j\in \mathcal{B}(i;6b_N)}|\Cov(\widehat{v}_{i,N}(d;b_N),\widehat{v}_{j,N}(d;b_N))| +  \sum_{i=1}^N\sum_{j\not \in \mathcal{B}(i;6b_N)}|\Cov(\widehat{v}_{i,N}(d;b_N),\widehat{v}_{j,N}(d;b_N))|\\
	\lesssim & \sum_{i=1}^N\sum_{j\in \mathcal{B}(i;6b_N)}\sqrt{\Var\left(\widehat{v}_{i,N}(d;b_N)\right) \times \Var\left(\widehat{v}_{j,N}(d;b_N)\right)} \\
	+ & \sum_{i=1}^N\sum_{j\not \in \mathcal{B}(i;6b_N)} \frac{b_N^2}{N^4}\times\max\{\sum_{\{k: d_{ik} \leq b_N\}}\psi_d(\frac{d_{ij}}{3}-d_{ik}),\sum_{\{l: d_{jl} \leq b_N\}}\psi_d(\frac{d_{ij}}{3}-d_{jl})\}\\
	\lesssim & N\times 36b^2_N \times \frac{b_N^4}{N^4} + \sum_{i=1}^N\sum_{j\not \in \mathcal{B}(i;6b_N)} \frac{b_N^2}{N^4}\times C_bb_N^2 \times \frac{1}{b_N^{4+\epsilon}}\\\
	\lesssim & \frac{b^6_N}{N^3} + \frac{1}{N^2}\frac{1}{b_N^{\epsilon}}
\end{align}
Thus we have $N^2\times  \Var\left(\widehat{\V}_2(d,b_N)\right) =o(1) $ because $b_N=o(N^{\frac{1}{6}})$.

Now we prove (iv).  Write for brevity $\widehat{\epsilon}_{i,N}(\bar{\mu}^1(d),\bar{\mu}^0(d);d)=\widehat{\epsilon}_{i,N}(d)$.\footnote{Note in general 
		$E[\widehat{\epsilon}_{i,N}(d)] = \frac{1}{N}\left(\tau_i(d)-\left(\bar{\mu}^1(d)-\bar{\mu}^0(d)\right)\right)\not =0$.} First note we have:
\begin{equation}
	\E[\widehat{\V}_2(d;b_N)]=\sum_{i=1}^N\sum_{j\in\mathcal{B}(i;b_N)}K\left(\frac{d_{ij}}{b_N}\right)\E[\widehat{\epsilon}_{i,N}(d)\widehat{\epsilon}_{j,N}(d)],
\end{equation}
and,
\begin{equation}
	\AVar\left( \widehat{\tau}_{\HA}(d) \right)= \sum_{i=1}^N\sum_{j=1}^N\Cov\left(\widehat{\epsilon}_{i,N}(d),\widehat{\epsilon}_{j,N}(d)\right).
\end{equation}
We have the identity:
\begin{align}
		& \sum_{i=1}^N\sum_{j=1}^N K\left(\frac{d_{ij}}{b_N}\right) \E[\widehat{\epsilon}_{i,N}(d)\widehat{\epsilon}_{j,N}(d)] - \sum_{i=1}^N\sum_{j=1}^N K\left(\frac{d_{ij}}{b_N}\right) \Cov\left(\widehat{\epsilon}_{i,N}(d),\widehat{\epsilon}_{j,N}(d)\right)\\
		&=\sum_{i=1}^N\sum_{j=1}^N K\left(\frac{d_{ij}}{b_N}\right) \E[\widehat{\epsilon}_{i,N}(d)]\E[\widehat{\epsilon}_{j,N}(d)]\\
		& = \sum_{i=1}^N\sum_{j=1}^N K\left(\frac{d_{ij}}{b_N}\right) \left(\tau_i(d)-\textnormal{AME}(d;\eta)\right)\left(\tau_j(d)-\textnormal{AME}(d;\eta)\right). \label{179}
\end{align}

Note by Lemma \ref{lemma:epsilon},
\begin{align}
	|\Cov(\widehat{\epsilon}_{i,N}(d),\widehat{\epsilon}_{j,N}(d))| \lesssim\frac{1}{N^2}\psi_d(\frac{d_{ij}}{3}).
\end{align}
Let $\tilde{b}_N=o(b_N^{\frac{1}{3}})$. For large $N$, we then have:
\begin{align}
& \left|\sum_{i=1}^N\sum_{j=1}^N K(\frac{d_{ij}}{b_N})\Cov\left(\widehat{\epsilon}_{i,N}(d),\widehat{\epsilon}_{j,N}(d)\right) - \AVar\left( \widehat{\tau}_{\HA}(d) \right)\right| \label{181}\\
\leq  &   \sum_{i=1}^N\sum_{j=1}^N \left| K(\frac{d_{ij}}{b_N})-1\right|\times |\Cov\left(\widehat{\epsilon}_{i,N}(d),\widehat{\epsilon}_{j,N}(d)\right)|\\
\leq &   \sum_{i=1}^N\sum_{j\in \mathcal{B}(i;\tilde{b}_N)} \left| K(\frac{d_{ij}}{b_N})-1\right|\times |\Cov\left(\widehat{\epsilon}_{i,N}(d),\widehat{\epsilon}_{j,N}(d)\right)|\\
+ & (K_{\max}+1) \sum_{i=1}^N\sum_{j\not \in \mathcal{B}(i;\tilde{b}_N)}  |\Cov\left(\widehat{\epsilon}_{i,N}(d),\widehat{\epsilon}_{j,N}(d)\right)|\\
 \lesssim &   \frac{1}{N^2}\sum_{i=1}^N\sum_{j\in \mathcal{B}(i;\tilde{b}_N)} \frac{d_{ij}}{b_N} + \frac{1}{N}\sup_{i\in\node_N}\sum_{j\not \in \mathcal{B}(i;\tilde{b}_N)}\psi_d(\frac{d_{ij}}{3}) \label{0623_176}\\
 \lesssim & \frac{1}{N} \frac{\tilde{b}_N^3}{b_N} + \frac{1}{N}\sup_{i\in\node_N}\sum_{j\not \in \mathcal{B}(i;\tilde{b}_N)}\psi_d(\frac{d_{ij}}{3}) = o(\frac{1}{N}) \label{0623_177}
\end{align}
(\ref{0623_176}) is by C\ref{assn:kernel}. (\ref{0623_177}) is by C\ref{assn:node_spacing} and Lemma \ref{lemma:tail}. 

Thus we have $N\times \left(\sum_{i=1}^N\sum_{j=1}^N K(\frac{d_{ij}}{b_N})\Cov\left(\widehat{\epsilon}_{i,N}(d),\widehat{\epsilon}_{j,N}(d)\right) - \AVar\left( \widehat{\tau}_{\HA}(d) \right)\right)=o(1)$. We have:
\begin{align}
	& \lim\inf_{N}\left(N\times \left(\E[\widehat{\V}_2(d,b_N)]-\AVar\left(\widehat{\tau}_{\HA}(d)\right)\right)\right)\\
	\geq & \lim\inf_{N} N\times \left( \sum_{i=1}^N\sum_{j=1}^N K\left(\frac{d_{ij}}{b_N}\right) \E[\widehat{\epsilon}_{i,N}(d)]\E[\widehat{\epsilon}_{j,N}(d)] \right) \\
	+ & \lim\inf_N N\times \left(\sum_{i=1}^N\sum_{j=1}^N K(\frac{d_{ij}}{b_N})\Cov\left(\widehat{\epsilon}_{i,N}(d),\widehat{\epsilon}_{j,N}(d)\right) - \AVar\left( \widehat{\tau}_{\HA}(d) \right)\right) \geq 0 \label{189},
\end{align}
where (\ref{189}) is by (\ref{179}) and C\ref{assn:extended_homophily}, and the calculation from (\ref{181}) to (\ref{0623_177}). 

This proves (iv). (v) follows similarly as in Proposition \ref{prop:HACvarconsistency}.

\end{proof}

\clearpage
\section{Notations}
\subsection{Notation}

\begin{center}
	\begin{tabular}[h]{|c|c|c|}
\hline
Notations & Definitions& First Appear in\\
\hline
$\mathcal{S}$ & The set of intervention nodes &  Page \pageref{notation:node} \\
\hline
$\mathcal{X}$ & Two-dimensional set locations for outcomes & Page \pageref{notation:outcomeset}  \\
\hline
$\Z$ & Ordered vector of experimental assignment variable & Page \pageref{notation:randomtreatment} \\
\hline
$\z$ & Realized assignment &  Page \pageref{notation:realizedtreatment}\\
\hline 
$Y_x(\z)$ & Potential outcome when assignment is $\z$ at location $x$ & Page \pageref{notation:potentialoutcome} \\
\hline 
$Y_x$ & Observed outcome at location $x$ & Page \pageref{notation:observedoutcome} \\
\hline
$\Y(\z)$ & Full set of potential outcomes when assignment is $\z$ & Page \pageref{notation:potentialoutcome_vec}\\
\hline
$\Y$ & Full set of observed outcomes & Page \pageref{notation:observedoutcome_vec}\\
\hline
$\mu_i(\Y(\z); \Omega_d)$ &  circle averages with $\Y(\z)$ at distance d& Page \pageref{def:circleaverage} \\
\hline
$\mu_i(\Y; \Omega_d)$ &  observed circle averages with $\Y(\z)$ at distance d& Page \pageref{def:obs_circleaverage} \\
\hline
$Y_{x}(z_i; \eta)$ & Marginalized potential outcome at point x & Page \pageref{notation:AME_at_x} \\
\hline
$\mu_{i}(z_i; d, \eta)$ & Marginalized circle averages & Page \pageref{notation:AME_at_i} \\
\hline
$\tau_{ix}(\eta)$ & Individual marginalized effect & Page \pageref{notation:individual_ame_x} \\
\hline
$\tau_{i}(d; \eta) $ & average of individual marginalized effect at distance d & Page \pageref{notation:individual_ame_i} \\
\hline
	\end{tabular}
\end{center}



\end{document}